\documentclass[12pt,a4paper,reqno]{amsart}

\synctex=1

\usepackage[utf8]{inputenc}
\usepackage[T1]{fontenc}
\usepackage[english]{babel}
\usepackage{amsmath}
\usepackage{amsfonts}
\usepackage{amssymb}
\usepackage{amsthm}
\usepackage{ucs}
\usepackage{graphicx}
\usepackage{xcolor}
\usepackage{dsfont}
\usepackage{tikz}
\usepackage{wasysym}
\usepackage{physics}
\usepackage{mathtools}
\usepackage{xifthen}
\usepackage{todonotes}
\usepackage{enumitem}
\usepackage{stmaryrd}
\usepackage{constants}
\usepackage[ruled]{algorithm2e}
\usepackage{tikz,pgfplots}
\usepackage{chngcntr}
\usepackage{hyperref}
\usepackage[font={small}]{caption}

\counterwithin{figure}{section}

\graphicspath{{Pics/}}


\usepackage{constants}

\newconstantfamily{ctrlcst}{
	symbol=\kappa
}
\newconstantfamily{csts}{
	symbol=C
}
\renewconstantfamily{normal}{
	symbol=c
}

\makeatletter
\newcommand{\pushright}[1]{\ifmeasuring@#1\else\omit\hfill$\displaystyle#1$\fi\ignorespaces}
\newcommand{\pushleft}[1]{\ifmeasuring@#1\else\omit$\displaystyle#1$\hfill\fi\ignorespaces}
\makeatother

\newcommand{\sphere}{\mathbb{S}}
\newcommand{\R}{\mathbb{R}}
\newcommand{\Rd}{\mathbb{R}^d}
\newcommand{\Zd}{\mathbb{Z}^d}

\newcommand{\Z}{\mathbb{Z}}
\newcommand{\N}{\mathbb{Z}_{+}}

\newcommand{\interior}{\mathrm{\scriptscriptstyle int}}
\newcommand{\exterior}{\mathrm{\scriptscriptstyle ext}}
\newcommand{\edge}{\mathrm{\scriptscriptstyle edge}}

\renewcommand{\norm}[1]{\|#1\|}

\newcommand{\given}{\,|\,}


\newcommand{\betac}{\beta_{\mathrm{\scriptscriptstyle c}}}

\newcommand{\icl}{\xi}


\newcommand{\cone}{\mathcal{Y}}
\newcommand{\fcone}{\mathcal{Y}^\blacktriangleleft}
\newcommand{\bcone}{\mathcal{Y}^\blacktriangleright}
\newcommand{\bend}{\mathbf{b}}
\newcommand{\fend}{\mathbf{f}}

\newcommand{\ConePts}{\mathrm{CPts}}

\newcommand{\displace}{D}

\newcommand{\WulffShape}{\mathbf{K}_{\xi}}
\newcommand{\equiDecSet}{\mathbf{U}_{\icl}}
\newcommand{\CGball}{\mathbf{B}}
\newcommand{\CGballK}{\mathbf{B}_K}
\newcommand{\CGballKFat}{\bar{\mathbf{B}}_K}

\newcommand{\treeMap}{\mathcal{T}}
\newcommand{\tree}{\mathfrak{T}}
\newcommand{\trunk}{\mathfrak{t}}
\newcommand{\branches}{\mathfrak{b}}
\newcommand{\treeWeight}{\Psi}

\newcommand{\good}{\mathrm{good}}

\newcommand{\procMem}{\Phi}

\newcommand{\SetRootMarkBackCont}{\mathfrak{B}_L}
\newcommand{\SetRootMarkBackIrr}{\mathfrak{B}_L^{\mathrm{irr}}}
\newcommand{\SetRootMarkForwCont}{\mathfrak{B}_R}
\newcommand{\SetRootMarkForwIrr}{\mathfrak{B}_R^{\mathrm{irr}}}
\newcommand{\SetRootDiaCont}{\mathfrak{A}}
\newcommand{\SetRootDiaIrr}{\mathfrak{A}^{\mathrm{irr}}}

\newcommand{\concatenate}{\circ}

\newcommand{\bfn}{\mathbf{n}}

\newcommand{\bfm}{\mathbf{m}}
\newcommand{\weight}{\textnormal{w}}

\newcommand{\RCPF}{Z}
\newcommand{\evenPart}{\mathcal{E}}
\newcommand{\RCLaw}{\mathbf{P}}


\newcommand{\free}{\mathrm{\scriptscriptstyle f}}

\renewcommand{\nleftrightarrow}{\mathrel{\ooalign{$\leftrightarrow$\cr\hidewidth$/$\hidewidth}}}
\newcommand{\nlongleftrightarrow}{\mathrel{\ooalign{$\longleftrightarrow$\cr\hidewidth$/$\hidewidth}}}



\newcommand{\bfp}{\mathbf{p}}
\newcommand{\bfe}{\mathbf{e}}
\newcommand{\bfa}{\mathbf{a}}
\newcommand{\bfd}{\mathbf{d}}
\newcommand{\bfJ}{\mathbf{J}}

\newcommand{\bfs}{\mathbf{s}}
\newcommand{\calA}{\mathcal{A}}

\newcommand{\calC}{\mathcal{C}}


\theoremstyle{plain}
\newtheorem{theorem}{Theorem}[section]
\newtheorem{lemma}[theorem]{Lemma}

\theoremstyle{remark}

\theoremstyle{definition}

\newtheoremstyle{case}{}{}{}{}{}{:}{ }{}
\theoremstyle{case}
\newtheorem{case}{Case}


\author{S\'{e}bastien Ott}

\title[Sharp Asymptotics of \(\langle\sigma_0;\sigma_x\rangle\) when \(h>0\).]{Sharp Asymptotics for the Truncated Two-Point Function of the Ising Model with a Positive Field}

\begin{document}
	
\begin{abstract}
We prove that the correction to exponential decay of the truncated two points function in the homogeneous positive field Ising model is \(c\norm{x}^{-(d-1)/2}\). The proof is based on the development in the random current representation of a ``modern'' Ornstein-Zernike theory, as developed by Campanino, Ioffe and Velenik \cite{Campanino+Ioffe+Velenik-2008}.
\keywords{Ising model \and Ornstein-Zernike asymptotics \and decay of correlations \and power-law corrections \and positive field \and truncated two points function}
\end{abstract}

\maketitle

\section{Introduction}

\subsection{Ising Model with a Positive Field}
For all this paper, \(h\) will be a fixed positive real number and will thereof be omitted from notations. We will also absorb the inverse temperature \(\beta\) in the coupling constants \(\bfJ\).

The positive field Ising Model on a weighted graph \(G=(V_G,\bfJ)\) is defined via:
\[
\mu_G(\sigma) \propto \exp{\sum_{\{i,j\}\subset V_G} J_{ij}\sigma_i\sigma_j + h\sum_{i}\sigma_i}
\]
with \(h>0\). Denote the expectation \(\langle\cdot \rangle_G\).

As \(h\geq 0\) and \(J_{ij}\geq 0\), the measure \(\mu_G\) satisfies the GKS inequality:

\begin{lemma}
	For any \(A,B\subset V_G\),
	\begin{gather*}
	\langle\sigma_A\rangle_G\geq 0,\\
	\langle\sigma_A\sigma_B\rangle_G\geq \langle\sigma_A\rangle_G\langle\sigma_B\rangle_G.
	\end{gather*}
\end{lemma}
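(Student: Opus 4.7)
The plan is to follow the classical Griffiths--Kelly--Sherman / Ginibre approach: expand the Boltzmann weight as a formal power series in the non-negative parameters \(J_{ij}\) and \(h\), and verify non-negativity monomial by monomial by exploiting the \(\pm 1\) structure of Ising spins. Since both inequalities concern a fixed finite graph \(G\), convergence of the series is not an issue, and I may freely swap summation over spin configurations with the Taylor expansion of the exponential.

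For the first inequality, I would write
\[
\PottsZ_G \langle \sigma_A \rangle_G \;=\; \sum_{\sigma} \sigma_A \sum_{k \geq 0} \frac{1}{k!}\Bigl( \sum_{\{i,j\}} J_{ij} \sigma_i \sigma_j + h \sum_i \sigma_i \Bigr)^k,
\]
and, after multinomial expansion, recognize each term as a non-negative coefficient (a product of \(J_{ij}\)'s and \(h\)'s divided by factorials) multiplied by \(\sum_\sigma \prod_i \sigma_i^{n_i}\) for some exponents \(n_i \geq 0\). This last sum factorizes over vertices as \(\prod_i \sum_{\sigma_i = \pm 1} \sigma_i^{n_i}\), which is either \(2^{|V_G|}\) (if every \(n_i\) is even) or \(0\) (otherwise), hence non-negative. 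Dividing by \(\PottsZ_G > 0\) yields \(\langle \sigma_A \rangle_G \geq 0\).

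For the second inequality I would employ Ginibre's duplicate variables trick. Introduce an independent copy \(\sigma'\) under the same measure and observe the identity
\[
2\bigl(\langle \sigma_A \sigma_B \rangle_G - \langle \sigma_A \rangle_G \langle \sigma_B \rangle_G\bigr) \;=\; \bigl\langle (\sigma_A - \sigma'_A)(\sigma_B - \sigma'_B) \bigr\rangle_{G \times G},
\]
where the right-hand expectation is taken with respect to the product of two copies of \(\mu_G\). I would then change variables to \(q_i = (\sigma_i + \sigma'_i)/2\) and \(r_i = (\sigma_i - \sigma'_i)/2\), noting that \(q_i r_i = 0\) pointwise. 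The doubled Hamiltonian becomes \(\sum_{\{i,j\}} J_{ij}(q_i q_j + r_i r_j) + h\sum_i q_i\) (up to the factor \(2\)), a polynomial in \((q,r)\) with non-negative coefficients. Similarly, the identities \(\sigma_A - \sigma'_A = \prod_{i\in A}(q_i + r_i) - \prod_{i\in A}(q_i - r_i)\) and its analogue for \(B\) show that \((\sigma_A - \sigma'_A)(\sigma_B - \sigma'_B)\) is a polynomial in \(q\) and \(r\) with non-negative integer coefficients. Expanding the product of the Boltzmann weight with this polynomial and summing over \((\sigma,\sigma')\), each monomial \(\prod_i q_i^{a_i} r_i^{b_i}\) factorizes across sites, and a direct one-site computation shows \(\sum_{\sigma,\sigma' = \pm 1} q^{a} r^{b} \geq 0\) (the sum is \(0\) if \(a\) or \(b\) is odd or if both are positive, and equals \(2\) otherwise).

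The only mildly delicate step is the bookkeeping in the Ginibre substitution: one must verify that after the change of variables every coefficient in the combined expansion of the double Boltzmann weight and of \((\sigma_A - \sigma'_A)(\sigma_B - \sigma'_B)\) is non-negative, which boils down to the two elementary polynomial identities above. Once this is in place, both inequalities follow from term-by-term non-negativity, and no further analytic input (beyond \(h \geq 0\) and \(J_{ij} \geq 0\)) is required.
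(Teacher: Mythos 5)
The paper does not prove this lemma at all: it simply records the GKS inequalities as known consequences of \(h\ge 0\) and \(J_{ij}\ge 0\) and moves on. Your write-up is a correct rendering of the classical Griffiths--Ginibre argument, and both inequalities follow exactly as you say. Two small remarks: in the single-site computation \(\sum_{\sigma,\sigma'=\pm 1} q^a r^b\), the case \(a=b=0\) gives \(4\), not \(2\); this does not affect non-negativity, which is the only thing used. Also, the phrase ``the sum is \(0\) if \(a\) or \(b\) is odd'' is a bit loose -- the precise statement is that the sum is \(0\) unless exactly one of \(a,b\) is positive and that one is even, or both are zero; again, what you actually need is only that the one-site sums are \(\ge 0\), which they are. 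The identity \(2\big(\langle\sigma_A\sigma_B\rangle_G - \langle\sigma_A\rangle_G\langle\sigma_B\rangle_G\big) = \langle(\sigma_A-\sigma'_A)(\sigma_B-\sigma'_B)\rangle_{G\times G}\), the Ginibre substitution \(\sigma_i\sigma_j + \sigma'_i\sigma'_j = 2(q_iq_j + r_ir_j)\), \(\sigma_i+\sigma'_i=2q_i\), and the expansion \(\sigma_A - \sigma'_A = 2\sum_{S\subseteq A,\ |S| \text{ odd}}\prod_{i\in S}r_i\prod_{i\in A\setminus S}q_i\) are all correct, so the term-by-term non-negativity goes through.
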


In what follows, we will be considering the Ising model with \(h>0\) on \(\Zd\): the vertex set is the sites of \(\Z^d\) (canonically embedded in \(\R^d\)) and the weights \((J_{ij})_{\{i,j\}\subset\Z^d}\) are assumed to be invariant under translation by elements of \(\Z^d\) and under symmetries of \(\Zd\). Moreover, we suppose that there exists \(R>0\) such that \(J_{ij}=0,\forall \norm{i-j}>R\) (finite range). Denote \(d(\cdot ,\cdot)\) the graph distance in \((\Zd,\bfJ)\).

For technical convenience, we will assume \(J_{0i}>0\) for \(i\in\{\pm \bfe_k, k=1,... ,d\}\), where \(\bfe_k\) the unit vector in the \(k\)th direction.

Let \(\mu\) be the (unique as \(h>0\), for example using Lee-Yang Theorem, see \cite{Friedli+Velenik-2017}) infinite volume measure obtained by limit of finite volume measures \(\mu_G\). Expectation under \(\mu\) is denoted \(\langle\ \rangle \).

The object of study in this paper will be the truncated two point function:
\[
\langle\sigma_x;\sigma_y\rangle = \langle\sigma_x\sigma_y\rangle-\langle\sigma_x\rangle\langle\sigma_y\rangle.
\]

We first define the inverse correlation length via:
\begin{equation}
\label{eq:finVolInvCorLen}
\icl_n(x) = -\frac{1}{n}\log( \langle\sigma_0; \sigma_{[nx]}\rangle)
\end{equation}
(with \([x]\) denote the point in \(\Zd\) closest to \(x\)) and
\begin{equation}
\label{eq:InvCorLen}
\icl(x) = \lim_{n\to \infty}\icl_n(x)
\end{equation}
if it exists.

The following Theorem (which is a combination of result from \cite{graham_correlation_1982} and well known arguments) encapsulates the basic properties of \(\icl\) that will be needed for our study.

\begin{theorem}
	\label{thm:iclProperties}
	\(\icl(x) \) exists for every \(x\in \R^d\) and
	\begin{equation}
	\langle \sigma_0;\sigma_x \rangle\leq e^{-\icl(x)}.
	\end{equation}
	Moreover,
	\begin{equation}
	\langle \sigma_0;\sigma_x \rangle\leq \langle \sigma_0\sigma_x \rangle\cosh(h)^{-d(0,x)},
	\end{equation}
	and \(\icl\) defines a norm on \(\R^d\).
\end{theorem}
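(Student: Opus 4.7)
The plan is to deduce all four assertions from a single sub-multiplicativity inequality for the truncated two-point function, combined with a random-current switching-lemma estimate. The main input is, for $x,y\in\Zd$,
\[
\langle\sigma_0;\sigma_{x+y}\rangle \;\geq\; c\,\langle\sigma_0;\sigma_x\rangle\,\langle\sigma_0;\sigma_y\rangle,
\]
for some $c=c(\bfJ,h)>0$, which is essentially the content of \cite{graham_correlation_1982}. I would prove it via the random current representation on $\Zd$ augmented by a ghost vertex $g$ carrying the field: the Aizenman switching lemma recasts $\langle\sigma_0;\sigma_x\rangle$ as the weight of currents with sources $\{0,x\}$ satisfying $0\nleftrightarrow g$, and one concatenates such a disconnected current from $0$ to $x$ with one from $x$ to $x+y$, paying at most a finite-energy cost at the gluing vertex $x$.

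Taking $-\log$ of the sub-multiplicativity and applying Fekete's lemma yields existence of $\icl(x)=\lim_n\icl_n(x)$ for every $x\in\Zd$, together with the upper bound $\langle\sigma_0;\sigma_x\rangle\leq e^{-\icl(x)}$ (the prefactor produced by the constant $c$ is removed by tightening the sub-multiplicativity along a scale sequence, or equivalently by using the infimum characterisation of the Fekete limit, as is standard for inverse-correlation-length arguments). The extension of $\icl$ from $\Zd$ to $\mathbb{Q}^d$ follows from positive homogeneity of the definition, and from $\mathbb{Q}^d$ to $\R^d$ from continuity, itself a consequence of the sub-multiplicativity applied to the remainder $x-[x]$.

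For the ratio bound $\langle\sigma_0;\sigma_x\rangle\leq\langle\sigma_0\sigma_x\rangle\,\cosh(h)^{-d(0,x)}$, I would use the switching lemma again in the ghost-vertex formulation to rewrite $\langle\sigma_0;\sigma_x\rangle/\langle\sigma_0\sigma_x\rangle$ as the probability that $\{0,x\}$ is disconnected from $g$ in a combined pair of currents with sources $\{0,x\}$ and $\emptyset$. The disconnection event forces the current on every ghost edge incident to the cluster of $\{0,x\}$ to be even-valued; since the generating function of even-valued currents on a single ghost edge is $\cosh(h)$ whereas the unconstrained one is $e^h$, running along any fixed nearest-neighbour geodesic from $0$ to $x$ in $(\Zd,\bfJ)$ accumulates one $\cosh(h)/e^h$ factor per visited vertex, and absorbing the $e^h$ factors into $\langle\sigma_0\sigma_x\rangle$ yields the claimed bound.

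The four norm axioms for $\icl$ then follow: positive homogeneity from the definition, the triangle inequality by passage to the limit in the sub-multiplicativity, the symmetry $\icl(-x)=\icl(x)$ from the $\Zd$-reflection invariance of $\bfJ$, and strict positivity $\icl(x)>0$ for $x\neq 0$ from the ratio bound (since $\langle\sigma_0\sigma_x\rangle\leq 1$, it forces $\icl(x)\geq\log\cosh(h)\cdot d(0,x)/\norm{x}$, bounded below by a positive constant thanks to the finite-range assumption on $\bfJ$). The principal difficulty is the sub-multiplicativity: the positive field breaks spin-flip symmetry, so the switching lemma has to be run on the ghost-vertex-augmented graph, and preserving the disconnection-from-$g$ constraint under concatenation is the delicate combinatorial point; once this is established, the remaining steps are routine.
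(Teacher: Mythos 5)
Your overall plan matches the paper's: establish a supermultiplicativity inequality for the truncated two-point function to get \(\icl\) via Fekete, prove the ratio bound by the switching lemma plus a cost-per-vertex estimate on the ghost edges, and then read off the norm axioms. The ratio bound and the norm axioms are essentially as in the paper (the paper phrases the per-vertex cost as insertion tolerance on the ghost edges, giving exactly \(\cosh(h)^{-1}\) per vertex in the cluster of \(\{0,x\}\), which is cleaner than your \(\cosh(h)/e^h\) bookkeeping with the claimed ``absorption'' of \(e^h\), but the idea is the same).

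There is, however, a genuine gap in the way you handle the supermultiplicativity. You propose to prove
\(\langle\sigma_0;\sigma_{x+y}\rangle \geq c\,\langle\sigma_0;\sigma_x\rangle\langle\sigma_0;\sigma_y\rangle\)
with a constant \(c=c(\bfJ,h)\in(0,1)\) by a finite-energy gluing of two disconnected-from-\(g\) currents at the vertex \(x\), and you then claim the prefactor ``is removed by tightening the sub-multiplicativity along a scale sequence, or equivalently by using the infimum characterisation of the Fekete limit.'' This removal step does not work. With \(a_n=-\log\langle\sigma_0;\sigma_{[nx]}\rangle\) your inequality gives \(a_{n+m}\le a_n+a_m-\log c\); setting \(b_n=a_n-\log c\) restores exact subadditivity and yields \(a_n/n\to\icl(x)\), but Fekete's infimum characterisation then gives \(b_n\geq n\icl(x)\), i.e.\ \(\langle\sigma_0;\sigma_{[nx]}\rangle\le c^{-1}e^{-n\icl(x)}\). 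The multiplicative constant \(c^{-1}>1\) is genuinely there and cannot be squeezed out by passing to a scale subsequence: the sequence \(a_n=n\icl(x)+\tfrac12\log c\) satisfies all your hypotheses and violates the clean bound. To obtain the stated inequality \(\langle\sigma_0;\sigma_x\rangle\le e^{-\icl(x)}\) one needs the constant-free inequality \(\langle\sigma_x;\sigma_z\rangle_G\geq\langle\sigma_x;\sigma_y\rangle_G\langle\sigma_y;\sigma_z\rangle_G\), which is exactly what the paper takes from \cite{graham_correlation_1982}; it is not a finite-energy statement, and the paper does not attempt to re-derive it from the random-current picture. So you should either cite Graham's inequality directly (as the paper does) or accept a weaker conclusion \(\langle\sigma_0;\sigma_x\rangle\le C\,e^{-\icl(x)}\); the argument you sketch does not close the gap between the two.
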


The main goal of this paper is the proof of:
\begin{theorem}
	\label{thm:OZprefactor}
	For any \(d\geq 1\), \(\beta>0\) and \(h>0\), there exists \(\psi_d:\sphere^{d-1}\to\R_{\geq 0}\) such that
	\begin{equation*}
	\langle \sigma_0;\sigma_x \rangle = \frac{\psi_d(x/\norm{x})}{\norm{x}^{(d-1)/2}} e^{-\icl(x)}(1+o_{\norm{x}}(1)).
	\end{equation*}
	Moreover, both \(\icl\) and \(\psi_d\) are analytic in \(x/\norm{x}\).
\end{theorem}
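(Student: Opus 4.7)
The plan is to adapt the coarse-graining and cone-point methodology of Campanino--Ioffe--Velenik to the random current representation, which is the natural algebraic tool for the Ising model with $h>0$ since switching-type identities let one rewrite the truncated two-point function as a positive connectivity event in a doubled current ensemble. Concretely, after introducing a ghost vertex $g$ carrying bonds $\{g,i\}$ of weight $h$ (so that the field term becomes a nearest-neighbour coupling), the switching lemma expresses $\langle\sigma_0;\sigma_x\rangle$ as, up to normalization, the probability in a doubled ensemble of currents $(\mathbf{n}_1,\mathbf{n}_2)$ with $\partial\mathbf{n}_1=\{0,x\}$ and $\partial\mathbf{n}_2=\emptyset$ that $0$ and $x$ belong to the same connected component of $\mathbf{n}_1+\mathbf{n}_2$ \emph{and} that this component is disjoint from $g$. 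This yields a positive, cluster-type representation to which the Ornstein--Zernike machinery can be applied, with finite-energy and BK-type surgery estimates standing in for FKG.

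Once this representation is in hand, I would define cone points of the cluster $\mathcal{C}(0,x)$ connecting $0$ to $x$ following Campanino--Ioffe--Velenik: a vertex $y\in\mathcal{C}(0,x)$ is a cone point along the direction $\uvec = x/\|x\|$ if $\mathcal{C}(0,x)\subset (y+\bcone)\cup(y+\fcone)$ and $y$ is the unique intersection of the two halves. Strict convexity of the Wulff shape $\WulffShape$, which follows from Theorem~\ref{thm:iclProperties} together with BK-type pastings of near-geodesic clusters, ensures that the cone opening can be chosen so that most ``candidate'' points are genuine cone points, and a first-moment and exponential-tightness argument shows that the density of cone points is $\Theta(\|x\|)$ with overwhelming probability under the measure tilted by $e^{\icl(x)}$. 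The successive cone points then partition $\mathcal{C}(0,x)$ into a backward irreducible piece $\bawa$, middle irreducible pieces $\IrrClusterSet_1,\dots,\IrrClusterSet_N$, and a forward piece $\fowa$, producing an Ornstein--Zernike-type renewal equation schematically of the form
\[
\langle\sigma_0;\sigma_x\rangle \;=\; \sum_{N\geq 0}\,\sum_{\bawa,\IrrClusterSet_1,\dots,\IrrClusterSet_N,\fowa}\weight(\bawa)\,\weight(\IrrClusterSet_1)\cdots\weight(\IrrClusterSet_N)\,\weight(\fowa),
\]
whose tilt by $e^{\icl\cdot x}$ makes the displacement vectors of the pieces behave like the steps of a random walk with exponential moments.

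The hard step, and the main novelty compared to the $h=0$ FK setting of CIV, is to verify that the middle irreducible pieces are genuinely (almost) independent under the random-current measure, despite the non-local ``no-connection-to-$g$'' constraint. The ghost is attached to every vertex through an $h$-weighted edge, so this constraint couples all pieces a priori; one must show that conditionally on the cluster geometry, the probability that any given piece connects to $g$ decays exponentially in its mass, with errors that telescope along the chain of irreducible pieces. I expect this to require a quantitative mixing / finite-energy estimate tailored to $h>0$, showing that $g$-connectedness is effectively localized near the endpoints $0$ and $x$ and hence contributes only boundary factors to the prefactor $\psi_d(\uvec)$ and not a bulk correction. Once this decoupling is secured, a Gaussian local limit theorem for the induced renewal random walk delivers the $\|x\|^{-(d-1)/2}$ asymptotics and identifies $\psi_d$; analyticity of $\icl$ and $\psi_d$ in $\uvec$ then follows from analyticity of the generating functions of the irreducible pieces on a complex neighbourhood of their radius of convergence, via a standard implicit function argument.
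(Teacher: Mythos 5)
Your overall architecture is the same as the paper's: attach a ghost vertex \(g\) with coupling \(h\), use the switching lemma to write \(\langle\sigma_0;\sigma_x\rangle=\langle\sigma_0\sigma_x\rangle\,\RCLaw^{\varnothing,\{0,x\}}(0\nleftrightarrow g)\) (note that \(0\leftrightarrow x\) is automatic from the source constraint in the second current, so the only event is the non-connection to \(g\)), then run the Campanino--Ioffe--Velenik cone-point/irreducible decomposition, a renewal plus local limit theorem for the \(\norm{x}^{-(d-1)/2}\) prefactor, and analyticity of \(\icl\) and \(\psi_d\) from the generating function of the irreducible steps. You also correctly isolate the real novelty, namely decoupling the non-local ``no connection to \(g\)'' constraint along the chain of irreducible pieces; the paper does precisely this through an exponential ratio mixing estimate for currents with \(h>0\) (Theorem~\ref{thm:expMixRC}, with rate \(\cosh(h)^{-\mathrm{dist}}\)), which is what makes the factorization with telescoping errors possible.

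The step that would fail as written is your route to the cone-point density. Strict convexity of \(\WulffShape\) does not follow from Theorem~\ref{thm:iclProperties} (which only gives that \(\icl\) is a norm), and no BK-type inequality is available for (double) random currents, so ``BK-type pastings of near-geodesic clusters'' plus a first-moment/exponential-tightness argument cannot deliver a positive density of \((t,\delta)\)-cone points; in the modern OZ theory this density is an output of the hard part of the proof, and local strict convexity/analyticity of \(\partial\WulffShape\) is itself only recovered a posteriori from the renewal relation \(H(z)=1\). In the paper the density comes from a two-layer coarse-graining (a trunk extracted from the sourced current \(\bfm\), branches from \(\bfn+\bfm\)) together with the tree-energy bound of Lemma~\ref{lem:TreeEnergy}, where it is crucial that the trunk pays the \emph{full} rate \(K\) per vertex --- this requires extracting a truncated two-point function locally in the presence of the ghost (Lemma~\ref{lem:trunkLocalEnergy}, the technical heart) --- while branches only pay some rate \(\nu K\); the imported CIV renormalization then yields cone points of the tree (Theorem~\ref{thm:CPofTree}), and an insertion-tolerance, multivalued-map surgery (Lemma~\ref{lem:ClusterCPlem2}) transfers them to the microscopic cluster, after which the factorization of Theorem~\ref{thm:mainThm} and the LLT give the stated asymptotics. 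Your proposal needs an ingredient of this kind in place of the BK/strict-convexity appeal; the rest of your outline then matches the paper's proof.
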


The claim being trivial when \(d=1\). Theorem~\ref{thm:OZprefactor} will be proved as a corollary of Theorem~\ref{thm:mainThm}.

\subsection{Overview of the Proof}

The main result of this paper is the development of an Ornstein-Zernike theory, in the form introduced in \cite{Campanino+Ioffe-2002},\cite{Campanino+Ioffe+Velenik-2003} and \cite{Campanino+Ioffe+Velenik-2008}, in the double random current representation of the Ising model. Such theory is already available in the random-path representation of correlations induced by the high-temperature expansion of the Ising model and in the Random Cluster Model. The OZ construction is done in three steps: the first is a coarse graining argument that allows to approximate long connected objects by a family of random trees embedded in \(\Zd\) satisfying certain exponential cost inequality (Sections~\ref{sec:CoarseGraining} and~\ref{sec:TreeEnergyExtrac}). Then, a very robust procedure developed in \cite{Campanino+Ioffe-2002},\cite{Campanino+Ioffe+Velenik-2003} and \cite{Campanino+Ioffe+Velenik-2008} allows to control the geometry of those trees; claims following from this procedure will be entirely imported from \cite{Campanino+Ioffe+Velenik-2008}, the present work is thus not self-contained. Finally, a finite-energy type argument (Section~\ref{sec:FEandFactorization}) transfers the control obtained on the trees to a control on the initial object, giving a representation of the cluster in terms of a concatenation of smaller ``irreducible'' clusters; the proof finishes by the construction (imported from \cite{Ott2018}) of a probability measure on a set of small clusters that are concatenations of irreducible ones so that the concatenation of an i.i.d. sequence of clusters has the same law as the long connected object one started with (see Theorem~\ref{thm:mainThm}).

\subsection{Brief History of Sharp Asymptotics}

Asymptotic study of covariances goes back to the (non-rigorous) work of Ornstein and Zernike \cite{Ornstein+Zernike-1914},\cite{Zernike-1916}. First rigorous analysis for truncated two-point functions in any dimension were given in \cite{Abraham+Kunz-1977} and \cite{Paes-Leme-1978} in the regime \(h=0, \beta\ll 1\). The first non-perturbative treatment of the question was done in \cite{Campanino+Ioffe+Velenik-2003}, in the regime \(h=0, \beta<\betac\); it was then extended to Potts models (for the same regime) in \cite{Campanino+Ioffe+Velenik-2008}. In all non-perturbative treatments, the analysis is possible because the truncation is ``trivial'': \(\langle\sigma_0\rangle=0\) and the study is thus reduced to the study of the two-point function \(\langle\sigma_0\sigma_x\rangle\) which enjoys nicer graphical representation than \(\langle\sigma_0;\sigma_x\rangle\). The only sharp, non-perturbative, study of \(\langle\sigma_0;\sigma_x\rangle\) with a non-trivial truncation is the nearest-neighbour Ising model on \(\Z^2\), with \(h=0\) and any \(\beta\), as it is possible to derive an exact formula for it using integrability (see for example \cite{mccoy_two-dimensional_1973}).

The first non-pertubative study of covariances with a non-trivial truncation not relying on integrability was made in \cite{Ott+Velenik-2018(2)} for covariances between even products of spins, in the regime \(\beta<\betac\). It is largely based on the random current representation of the Ising model. The present work uses the same representation to attack the question in the regime \(\beta>0, h>0\) (and, by symmetry, \(h<0\)).

\subsection{Open Problems}
We list here a few open problems that appear to be natural given the results of the present work and those available from past ones.

\subsubsection*{Low Temperature Covariances} The first problem suggested by the results of this paper, of \cite{Campanino+Ioffe+Velenik-2003} and of \cite{mccoy_two-dimensional_1973}, is the sharp treatment of truncated correlations when \(h=0,\beta>\betac, d\geq 3\), as it is the only regime missing. Such an analysis seems doable via the arguments presented in \cite{duminil-copin_exponential_2018} combined with the general approach of \cite{Campanino+Ioffe+Velenik-2008} and some ideas borrowed from the present work. We plan to come back to this question in a near future.

In the case of nearest neighbour interaction, treating the case \(d=2\) without resorting to integrability should be doable using planar duality and the analysis done in \cite{Ott+Velenik-2018(2)}. Extension to general finite range interaction is wide open.

\subsubsection*{Extension to Potts Model} A second problem (of apparently higher level of difficulty) is the treatment of truncated two-point functions in the Potts model with an homogeneous field, the absence of random current representation asks for new ideas to even start the analysis. A good indication that the question is highly non-trivial are the results of \cite{biskup_coexistence_2000}.

\subsubsection*{Quantum Ising Model with Transverse Field} A third problem of interest would be a treatment of the asymptotic behaviour of truncated two-point functions for the quantum Ising model with transverse magnetic field (a random current representation satisfying a switching lemma being available -see \cite{Crawford2010}- this seems to be the easiest problem of the list).

\section{Random Current Representation, Notation and Main Theorem}

\subsection{Notation and Conventions}

Start with a few generic notations. Denote \(o_n(1)\) a quantity that tends to \(0\) as \(n\) goes to infinity. Constants \(c,C\) are non-negative real numbers that can vary from line to line. They do not depend on the variables of interest in the study. We also fix an arbitrary total order on \(\Zd\) for the whole paper. We write \(\N=\{0,1,2,3,... \}\).

We work on weighted graphs; a weighted graph \(G\) will be the data of a vertex set \(V_G\) and a weight function \(\bfJ:\big\{\{i,j\}\subset V_G\big\}\to\R_{\geq 0}\). Write \(\bfJ(\{i,j\})=J_{ij}\). The edges of \(G\), denoted \(E_G\), are the pairs \(\{i,j\}\) with non-zero weight: \(E_G=\{ \{i,j\}\subset V_G:\ J_{ij}>0 \}\). We write \(i\sim j\) if \(J_{ij}>0\).

We say that two points \(u,v\in V_G\) are \emph{connected}, written \(u\leftrightarrow v\), if there exists a path of edges in \(E_G\) going from \(u\) to \(v\). Call the \emph{cluster} of \(v\in V_G\) the maximal connected component of \(v\).

For \(G\) a graph and \(V_H\subset V_G\), define \(H\) the sub-graph of \(G\) induced by \(V_H\) to be \(H=(V_H,\bfJ|_{\{i,j\}\subset V_H})\). Write \(H\subset G\). For \( A\subset V_G\), define three notions of boundary for \(A\):
\begin{gather*}
\partial^{\exterior} A =\big\{v\in V_G\setminus A: \exists v'\in A, \{v,v'\}\in E_G \big\},\\
\partial^{\interior} A =\big\{v\in A: \exists v'\in V_G\setminus A, \{v,v'\}\in E_G \big\},\\
\partial^{\edge} A =\big\{ \{v,v'\}\in E_G: v\in A, v'\in V_G\setminus A \big\}.
\end{gather*}

For \(A\subset\R^d\), \(\partial A\) denotes the usual boundary of \(A\). A \emph{contour} \(\Gamma\subset E_G\) is a set of edges such that there exists \(V_H\subset V_G\) with \(H\) connected such that \(\partial^{\edge} V_H =\Gamma\). We will consider graphs to be marked graphs (having a distinguished vertex). When \(G\) is a subset of \(\Zd\) containing \(0\), the marked vertex will be \(0\) by convention. When \(V_H\) contains the marked vertex, it is called the \emph{interior} of \(\Gamma\), written \( \mathring{\Gamma} \), and \(V_G\setminus V_H\) the \emph{exterior}. Finally, for a contour \(\Gamma\subset E_G\), define
\begin{gather*}
	\Gamma^{\interior} = \big\{v\in \mathring{\Gamma}: \exists v'\in V_G\setminus\mathring{\Gamma}, \{v,v'\}\in \Gamma \big\},\\
	\Gamma^{\exterior} = \big\{v\in V_G\setminus\mathring{\Gamma}: \exists v'\in \mathring{\Gamma}, \{v,v'\}\in \Gamma \big\}.
\end{gather*}

The last convention is that we assimilate subsets \(A\subset B\) with their characteristic function \(\mathds{1}_A: B\to\{0,1\}\) (in particular, percolation configurations \(\omega\subset E_G\) are seen as functions via \(\omega_e=\mathds{1}_{e\in \omega} \)).

\subsection{Random Current Model}

Let \(G=(V_G,\bfJ)\) be a weighted graph. A \emph{random current configuration} on \(G\) is an element of \(\N^{E_G}\). The \emph{weight} of a configuration \(\bfn\) is given by:
\begin{equation*}
	\weight(\bfn) = \prod_{e\in E_G} \frac{(J_e)^{\bfn_e}}{\bfn_e!}.
\end{equation*}
The \emph{sources} of a current are:
\begin{equation*}
\partial\bfn = \{v\in V_{G}: \sum_{u\sim v} \bfn_{uv} = 1\mod 2\}.
\end{equation*}
Notice that the number of sources of a current is always even. We will use the shorthand \(\sum_{\partial \bfn = A}\) for the sum over current configurations with sources \(A\). Define then the associated partition functions and probability measures:
\begin{equation*}
	\RCPF_{G}(A) = \sum_{\partial \bfn = A} \weight(\bfn),\qquad \RCLaw_{G}^A(\bfn) = \mathds{1}_{\partial\bfn = A}\frac{\weight(\bfn )}{\RCPF_{G}(A)}.
\end{equation*}
We will use the notation \(\RCLaw_{G}^{A,B}=\RCLaw_{G}^A\times \RCLaw_{G}^B\), called the \emph{double random current} measure. We dealing with pairs of random current, the following notation will be useful: for \(F,f,g:\N^{E_G}\to \R\), denote
\begin{equation*}
	\RCPF_{G}(A)\RCPF_{G}(B)\big\{ F f(\bfn) g(\bfm)\big\} = \sum_{\partial\bfn=A} \sum_{\partial\bfm=B} \weight(\bfn)\weight(\bfm)F(\bfn+\bfm)f(\bfn) g(\bfm),
\end{equation*}where the sum is coordinatewise.

\subsection{Random Current Representation of the Ising Model with a Field}

Start by defining the augmented graph \(G_g=(V_{G_g},\tilde{\bfJ})\) by adding a vertex \(g\) to \(V_G\) and setting \(\tilde{J}_{ig}=h\) for all \(i\in V_G\) and \(\tilde{J}_{ij}=J_{ij}\) for all \(i,j\in V_G\). Write \(E_{G_g}=\big\{ \{i,j\}\subset V_{G_g}: \tilde{J}_{ij}>0 \big\}\). One can thus rewrite:
\[
\mu_{G}(\sigma) \propto \mathds{1}_{\sigma_g=1}\exp\Big( \sum_{\{i,j\}\subset V_{G_g}}\tilde{J}_{ij}\sigma_i\sigma_j \Big).
\]

The random current representation is obtained by expanding \(e^{ \tilde{J}_{ij}\sigma_i\sigma_j }\) and resumming; more precisely, for any \(A\subset V_G\), one gets

\begin{multline*}
\sum_{\sigma\in\{\pm 1\}^{V_{G_g}} } \sigma_A\mathds{1}_{\sigma_g=1} e^{ \sum_{\{i,j\}\in E_{G_g} }\tilde{J}_{ij}\sigma_i\sigma_j } = \\
= 2^{|V_G|}\begin{cases}
\sum_{\partial\bfn=A} \weight(\bfn) & \text{ if }|A|=0\mod{2}\\
\sum_{\partial\bfn =A\cup\{g\} } \weight(\bfn) & \text{ if }|A|=1\mod{2}
\end{cases}.
\end{multline*}
Thus,
\begin{equation*}
\langle \sigma_A \rangle_{G} = \frac{\RCPF_{G_g}(A)}{\RCPF_{G_g}(\varnothing)} \text{ if } |A|\text{ even},\quad  = \frac{\RCPF_{G_g}(A\cup \{g\})}{\RCPF_{G_g}(\varnothing)} \text{ if } |A|\text{ odd}.
\end{equation*}

One can define infinite volume random current measures on \(G_g\) for \(G=(\Zd,\bfJ)\), \(\RCLaw_{G_g}^A\), by taking weak limits, see \cite{aizenman_random_2015}. For fixed sources, the limit will be unique by Lemma~\ref{thm:expMixRC}.

\subsection{Connectivity in Random Current and Switching Lemma}

Now we present two classical marginals of the random current and of the double random current. The first one is the percolative interpretation of the current via \(\hat{\bfn}_e=\mathds{1}_{\bfn_e>0}\) for all edges \(e\).

The main use of this point of view is the Switching Lemma. For \(A\subset V_{G_g}\) define \(\evenPart_A\) the event that every cluster contains an even number of sites in \(A\) (possibly \(0\)).

\begin{lemma}
	\label{lem:RCswitchingLemma} For any graph \(G=(V_G,\bfJ)\) and \(A,B\subset V_G \),
	\[\RCPF_{G}(A)\RCPF_{G}(B)\big\{ F\big\} = \RCPF_{G}(A\Delta B)\RCPF_{G}(\varnothing)\big\{ \mathds{1}_{\evenPart_B}F\big\}\]
\end{lemma}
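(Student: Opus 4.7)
The plan is to prove this classical switching identity by reducing it, via the change of variable $\mathbf{k} = \bfn + \bfm$, to a pointwise combinatorial statement and then applying a cluster-by-cluster bijection. First I would expand
\[
\RCPF_{G}(A)\RCPF_{G}(B)\{F\} = \sum_{\mathbf{k}\in\N^{E_G}} F(\mathbf{k})\weight(\mathbf{k})\, N_{A,B}(\mathbf{k}),
\]
where
\[
N_{A,B}(\mathbf{k}) := \sum_{\substack{\bfn\le\mathbf{k}\\ \partial\bfn=A,\, \partial(\mathbf{k}-\bfn)=B}} \prod_{e}\binom{\mathbf{k}_e}{\bfn_e},
\]
using the elementary factorial identity $\weight(\bfn)\weight(\bfm) = \weight(\bfn+\bfm)\prod_e\binom{(\bfn+\bfm)_e}{\bfn_e}$ (and $\bfn\le\mathbf{k}$ interpreted coordinatewise). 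The same rewriting applied to the right-hand side of the lemma reduces everything to the pointwise claim
\[
N_{A,B}(\mathbf{k}) = \IF{\evenPart_B}(\mathbf{k})\, N_{A\Delta B,\emptyset}(\mathbf{k}) \qquad \text{for every }\mathbf{k}\in\N^{E_G}.
\]

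To handle this identity I would pass to the multigraph $\mathcal{K}$ obtained from $G$ by placing $\mathbf{k}_e$ parallel copies of each edge $e$. A sub-current $\bfn\le\mathbf{k}$ weighted by $\prod_e\binom{\mathbf{k}_e}{\bfn_e}$ then corresponds to a choice of subset $E\subset \mathcal{K}$, and the conditions $\partial\bfn = A$, $\partial(\mathbf{k}-\bfn) = B$ translate into constraints on the parities of the vertex-degrees of $E$ and $\mathcal{K}\setminus E$. Since $\deg_E(v) + \deg_{\mathcal{K}\setminus E}(v) = \deg_\mathcal{K}(v)$, both counts vanish unless $A\Delta B = \partial\mathbf{k}$; moreover, the standard observation that the odd-degree set of any subgraph of a cluster has even cardinality forces each of $A$ and $B$ to have even intersection with every cluster of $\hat{\mathbf{k}}$, so both $N_{A,B}(\mathbf{k})$ and $N_{A\Delta B,\emptyset}(\mathbf{k})$ vanish outside $\evenPart_A\cap\evenPart_B$. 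On the right the factor $\IF{\evenPart_B}$ together with the automatic constraint $\evenPart_{A\Delta B}$ precisely enforces this.

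The final step is the bijection that makes the two non-trivial counts equal. Inside each cluster $C$ of $\hat{\mathbf{k}}$, I would fix a reference subgraph $T_C\subset\mathcal{K}|_C$ whose odd-degree vertex set is $B\cap C$; such a $T_C$ exists exactly when $|B\cap C|$ is even (i.e.\ on $\evenPart_B$), because $\mathcal{K}|_C$ is connected, so one can pair the points of $B\cap C$ and join each pair by a path. The involution $E\mapsto E\Delta\bigsqcup_C T_C$ changes the odd-degree set of $E$ globally by $B$, hence maps the family counted by $N_{A,B}(\mathbf{k})$ bijectively onto the one counted by $N_{A\Delta B,\emptyset}(\mathbf{k})$. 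Summing the resulting pointwise identity against $F(\mathbf{k})\weight(\mathbf{k})$ yields the lemma. The main difficulty is essentially bookkeeping --- maintaining the distinction between currents and their multigraph incarnations, and checking that the bijection does not depend on auxiliary choices in a weight-breaking way --- but the construction is entirely standard (cf.\ \cite{aizenman_random_2015}), so I would only sketch it.
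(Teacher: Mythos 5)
Your proof is correct, and it is the standard argument for the Switching Lemma. Note that the paper does not actually supply a proof of this lemma: it only cites \cite{griffiths_concavity_1970}, \cite{aizenman_geometric_1982} and \cite{aizenman_random_2015}, so there is no in-paper argument to compare against. Your write-up follows essentially the proof in the last of those references (multigraph unfolding, reduction to the pointwise identity $N_{A,B}(\mathbf{k})=\IF{\evenPart_B}(\mathbf{k})\,N_{A\Delta B,\varnothing}(\mathbf{k})$, and the involution $E\mapsto E\,\Delta\,T$ with $\partial T=B$ constructed cluster-by-cluster). All the key verifications are present: the binomial factorization of $\weight(\bfn)\weight(\bfm)$, the observation that both sides are supported on $\{\partial\mathbf{k}=A\Delta B\}$, the parity obstruction forcing $\evenPart_B$ (which, together with $\evenPart_{A\Delta B}$, is equivalent to $\evenPart_A\cap\evenPart_B$), and the fact that the involution is a bijection preserving the binomial weight because it acts by a fixed symmetric difference at the level of subsets of $\mathcal{K}$. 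The only thing worth making explicit, if you expand the sketch, is that the existence of $T_C$ with $\partial T_C=B\cap C$ follows by pairing the vertices of $B\cap C$ and taking the symmetric difference of connecting paths in the connected multigraph $\mathcal{K}\vert_C$ (exactly the content of the paper's own Lemma~\ref{lem:sourceGraph}), and that since the identity is proved pointwise in $\mathbf{k}$, the dependence of $T_C$ on $\mathbf{k}$ is harmless.
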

It first appeared in \cite{griffiths_concavity_1970} and was used extensively in \cite{aizenman_geometric_1982}, a more recent demonstration of the random current efficiency is \cite{aizenman_random_2015}.

The second useful marginal is to forget about the number associated to each edge and to remember only whether its current number is even \(>0\), \(=0\) or odd. Thus define
\begin{equation*}
	\bar{\bfn}_e= \begin{cases}
	0 & \text{ if } \bfn_e=0\\
	1 & \text{ if } \bfn_e=1\mod 2\\
	2 & \text{ else}
	\end{cases}.
\end{equation*}
Notice that \(\partial\bfn=\partial\bar{\bfn}\). We call such functions \emph{parity currents}.

The weight of \(\bar{\bfn}\) is:
\begin{equation*}
	\weight(\bar{\bfn}) = \sum_{\bfn\sim\bar{\bfn}} \weight(\bfn)
	= \prod_{e:\bar{\bfn}_e=1} \sinh(J_e)\prod_{e:\bar{\bfn}_e=2}\cosh(J_e)
\end{equation*}
where \(\bfn\sim\bar{\bfn}\) means the compatibility of \(\bfn\) and \(\bar{\bfn}\). The first application of the switching lemma we will use is a percolative representation of the truncated two-point function:
\begin{multline*}
\label{eq:truncated_percoRep}
	\langle \sigma_0;\sigma_x \rangle_{G} = \frac{\RCPF_{G_g}(0,x) \RCPF_{G_g}(\varnothing) - \RCPF_{G_g}(0,g)\RCPF_{G_g}(x,g)}{\RCPF_{G_g}(\varnothing)^2}=\\
	=\frac{\RCPF_{G_g}(\varnothing) \RCPF_{G_g}(0,x)\{\mathds{1}_{0\nleftrightarrow g} \} }{\RCPF_{G_g}(\varnothing)^2} = \langle \sigma_0\sigma_x \rangle_{G} \RCLaw_{G_g}^{\varnothing,\{0,x\}}(0\nleftrightarrow g).
\end{multline*}
The reader not used to working with random current can notice that a connexion between \(0\) and \(x\) is induced in the second current via the source constraint.

\subsection{Diamonds Decomposition and Main Theorem}
\label{subsec:DiamondDecomp_and_MainThm}

The study of \(\langle \sigma_0;\sigma_x \rangle\) will be done by studying the behaviour of a long cluster in a well chosen percolation model. It will follows the study done in \cite{Campanino+Ioffe+Velenik-2008} for FK-percolation and the main goal of this paper is to obtain a similar decomposition of the cluster in terms of ``cone-confined'' components. Before stating the main result of this paper, we need to introduce a bit of notation and a few objects. We will need two convex sets encoding the information about \(\icl\): the \emph{equi-decay set} \(\equiDecSet\) and the \emph{Wulff shape} \(\WulffShape\) defined by
\begin{gather*}
\equiDecSet = \{x\in\R^d: \icl(x)\leq 1 \},\qquad \WulffShape = \bigcap_{\bfs\in\sphere^{d-1}} \{t\in \R^d: (t,\bfs)_d\leq \icl(\bfs) \},
\end{gather*}
where \((\cdot,\cdot)_d \) denotes the scalar product on \(\R^d\). They are polar:
\begin{gather*}
\equiDecSet = \big\{x\in\R^d: \max_{t\in\WulffShape} (t,x)_d\leq 1 \big\},\qquad \WulffShape = \big\{t\in \R^d: \max_{x\in\equiDecSet}(t,x)_d\leq 1 \big\}.
\end{gather*}
\(x\in\R^d\) and \(t\in\partial\WulffShape\) are said to be dual if \((t,x)_d=\icl(x)\). For \(t\in\partial\WulffShape\) and \(\delta\in(0,1)\), define the forward and backward cones
\begin{equation*}
\fcone_{\delta}(t)=\big\{ x\in\Zd: (t,x)_d>(1-\delta)\icl(x) \big\},\qquad \bcone_{\delta}(t) = -\fcone_{\delta}(t).
\end{equation*}
For a set \(A\subset \Rd\), we say that \(v\in A\) is a \((t,\delta)\)-\emph{cone-point} of \(A\) if
\begin{equation*}
	A\setminus v \subset \big( (v+\fcone_{\delta}(t))\cup (v+\bcone_{\delta}(t)) \big).
\end{equation*}
For a connected set \(A\subset \Rd\), we say that \(v\in A\) is a \emph{break point} of \(A\) if \(A\setminus v=A_1\cup A_2\) with \(A_1\cap A_2=\varnothing\) and \(A_1,A_2\) connected. For \(v\) a break point of \(A\), we say that \(v\) is a
\begin{itemize}
	\item \((t,\delta)\)-\emph{forward cone-point} of \(A\) if \(A_1\subset (v+\fcone_{\delta}(t))\) and \(A_2\subset (v+\overline{\fcone_{\delta}(t)} )^c\) (we take the closure of \(\fcone_{\delta}(t)\) as it is an open set) or the same with \(A_1,A_2\) interchanged,
	\item \((t,\delta)\)-\emph{backward cone-point} of \(A\) if \(A_1\subset (v+\bcone_{\delta}(t))\) and \(A_2\subset (v+\overline{\bcone_{\delta}(t)} )^c\) or the same with \(A_1,A_2\) interchanged.
\end{itemize}
A connected set \( A \subset \Rd\) is called
\begin{itemize}
	\item \((t,\delta)\)-\emph{forward} (resp. \emph{backward})-\emph{contained} if there exists \(v\in A\) such that \(A\setminus v \subset(v+\fcone_{\delta}(t))\), resp. \(\subset (v+\bcone_{\delta}(t))\). Denote this \(v= \fend(A)\) (resp. \( =\bend(A)\)).
	\item \((t,\delta)\)-\emph{diamond-contained} if it is forward and backward contained.
	\item \((t,\delta)\)-\emph{forward-irreducible} if there exists \(v\in A\) such that \(A \subset (v+\fcone_{\delta}(t))\) and \(v\) is the only forward cone-point of \(A\).
	\item \((t,\delta)\)-\emph{backward-irreducible} if there exists \(v\in A\) such that \(A\subset (v+\bcone_{\delta}(t))\) and \(v\) is the only backward cone-point of \(A\).
	\item \((t,\delta)\)-\emph{irreducible} if it is forward and backward irreducible.
\end{itemize}
These definitions are adapted to \(C\) a cluster in \((\Zd,\bfJ)\) and \(v\) a vertex in \(C\) via the use of the above definitions on the subset of \(\Rd\) obtained by taking the vertices of \(C\) and the solid line segments corresponding to edges.

Notice that any diamond contained cluster is a concatenation of irreducible ones. Introduce then the notion of \emph{displacement} of a cluster by: for any \(C\) diamond-contained,
\begin{equation}
\displace(C) = \bend(C)-\fend(C).
\end{equation}
To make sense of the displacement for not diamond-contained cluster, consider the set of \emph{marked} forward/backward contained cluster (the set of forw./backw.-contained cluster with a distinguished vertex \(v_*\)). In this case, define the displacement of \((C,v_*)\) by:
\begin{equation}
v_*-\fend(C), \qquad \text{ resp. }\qquad \bend(C)-v_*.
\end{equation}
We will also say that a marked forward/backward-contained cluster \(\gamma\) is irreducible if it contains no cone-point in \(v_* + \bcone_{\delta}(t)\), resp. \(v_* + \fcone_{\delta}(t)\).

Define
\begin{itemize}
	\item The sets of rooted marked backward-contained/irreducible clusters:
	\begin{gather*}
	\SetRootMarkBackCont = \{C \text{ marked backward-contained with } v^* =0 \},\\
	\SetRootMarkBackIrr = \{C \text{ marked backward-irreducible with } v^* =0 \}.
	\end{gather*}
	\item The sets of rooted marked forward-contained/irreducible clusters:
	\begin{gather*}
	\SetRootMarkForwCont = \{C \text{ marked forward-contained with } \fend(C) =0 \},\\
	\SetRootMarkForwIrr = \{C \text{ marked forward-irreducible with } \fend(C) =0 \}.
	\end{gather*}
	\item The sets of rooted diamond-contained/irreducible clusters:
	\begin{gather*}
	\SetRootDiaCont = \{C \text{ diamond-contained with } \fend(C) =0 \},\\
	\SetRootDiaIrr = \{C \text{ irreducible with } \fend(C) =0 \}.
	\end{gather*}
\end{itemize}

For \(\gamma\in \SetRootMarkBackCont \) and \(\gamma'\in \SetRootMarkForwCont\) define the \emph{concatenation} of \(\gamma\) and \(\gamma'\) by:
\begin{equation*}
	\gamma \concatenate \gamma' = \gamma \cup (\bend(\gamma) + \gamma').
\end{equation*}

We will often identify a chain of irreducible clusters \(\gamma_L\sqcup\gamma_1\sqcup\cdots \sqcup \gamma_R\) with the concatenation of their translates. The displacement of a concatenation of clusters is the sum of the displacements.

\begin{figure}[h]
	\centering
	\includegraphics[height=75pt, width=300pt]{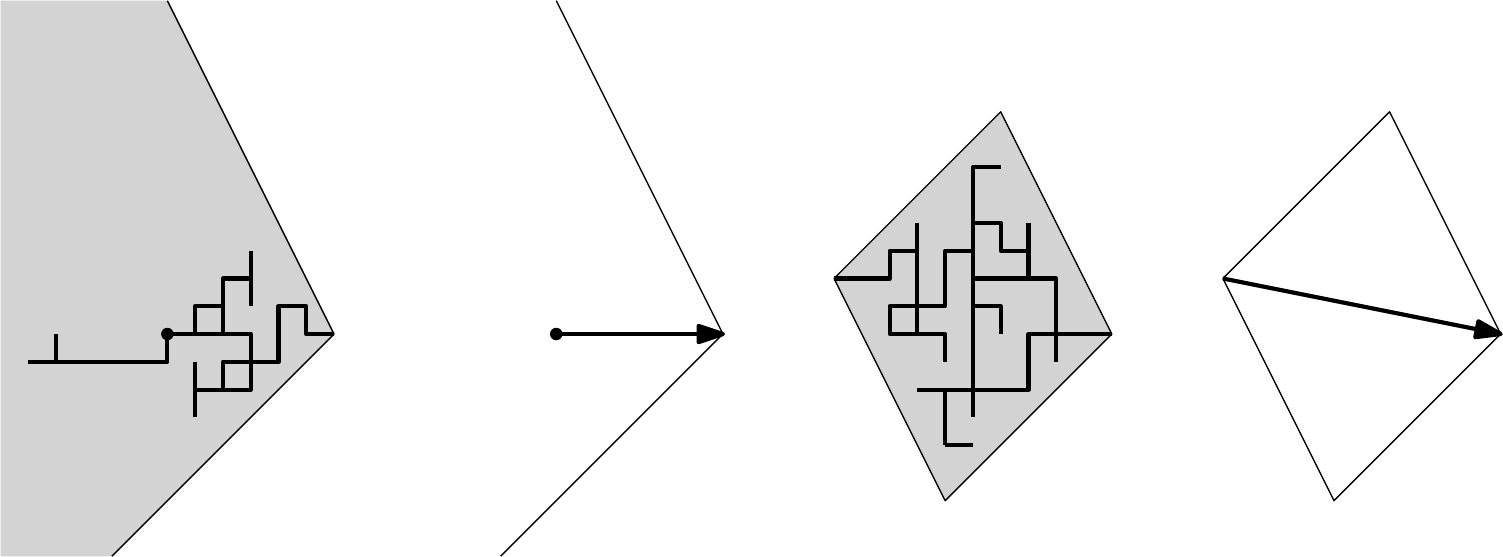}
	\caption{From left to right: a marked irreducible piece, its displacement, an irreducible piece, its displacement.}
	\label{fig:Irred_Displace}
\end{figure}

The main result of this paper is
\begin{theorem}
	\label{thm:mainThm}
	For any \(t_0\in\partial\WulffShape\) there exists \(\delta\in(0,1)\) such that for any \(t\in\partial\WulffShape\) in an small open neighbourhood of \(t_0\) and any \(x\) dual to \(t\), one can construct two non-negative, finite measures \(\rho_L,\rho_R\) on \(\SetRootMarkBackCont\) and \(\SetRootMarkForwCont\) and a probability measure \(\bfp\) on \(\SetRootDiaCont\) (for the cones \(\cone_{\delta}(t_0)\)) satisfying: there exists \(c>0\) such that, for any \(f\) measurable with respect to the cluster of \(0\),
	\begin{multline*}
	\Big|h^2\langle\sigma_0\sigma_x\rangle e^{(t_0,x)}\sum_{C} \RCLaw^{\varnothing,\{0,x\}}\big( C_{0,x}=C, 0\nleftrightarrow g \big)f(C) - \\
	-\sum_{\gamma_L}\sum_{\gamma_R}\rho_L(\gamma_L)\rho_{R}(\gamma_R)\sum_{M\geq 0}\sum_{\gamma_1,... ,\gamma_M}\prod_{i=1}^{M} \bfp(\gamma_i) \mathds{1}_{D(\gamma)=x}  f(\gamma)\Big|\leq \norm{f}_{\infty} e^{-c\norm{x}}
	\end{multline*}
	where \(\gamma = \gamma_L\concatenate\gamma_1\concatenate\cdots \concatenate\gamma_M\concatenate\gamma_R\). Moreover, there exists \(\nu>0\) such that
	\begin{equation}
	\label{eq:espDecFacMeas}
	\rho_L(\norm{D(\gamma_L)}\geq l) \vee \rho_R(\norm{D(\gamma_R)}\geq l) \vee \bfp(\norm{D(\gamma_1)}\geq l) \leq e^{-\nu l}.
	\end{equation}
\end{theorem}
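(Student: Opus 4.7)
The proof follows the three-step plan outlined in the overview, applied to the double random current representation. The starting identity $\langle\sigma_0;\sigma_x\rangle = \langle\sigma_0\sigma_x\rangle\,\RCLaw^{\varnothing,\{0,x\}}(0\nleftrightarrow g)$ reduces matters to studying the joint law of the cluster $C_{0,x}$ of $\{0,x\}$ in the double current with sources $\varnothing$ and $\{0,x\}$, conditioned on $\{0\nleftrightarrow g\}$. By the source constraint, this cluster is a long connected subset of $\Zd$ joining $0$ to $x$; the prefactors $h^2$ and $e^{(t_0,x)_d}$ arise naturally, the former from two field-insertions at $0$ and $x$ (two ghost edges in the current representation) and the latter from an exponential tilt by an element dual to $t_0$.

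The first step is a coarse graining: partition $\Zd$ into mesoscopic cubes of a fixed large side and approximate $C_{0,x}$ by the random sequence of cubes it visits. Combining the exponential mixing of the infinite volume random current (Lemma~\ref{thm:expMixRC}) with BK-type estimates adapted to the double current, one bounds the probability of any given coarse geometry by a product of truncated two-point functions along a skeleton, i.e.\ by the weight of an embedded random tree $\tree$ satisfying an energy inequality of the form $\weight(\tree)\leq e^{-(t_0,\,D(\tree))_d}$ after tilting. At this point the OZ analysis of \cite{Campanino+Ioffe+Velenik-2008} applies almost verbatim: the tilted tree law concentrates on \emph{thin} trees aligned with $t_0$, and a positive density of vertices along the trunk are $(t_0,\delta)$-cone-points of the associated coarse cluster, for $\delta$ small depending on the local geometry of $\partial\WulffShape$ near $t_0$.

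A finite-energy surgery then upgrades these coarse cone-points to honest cone-points of $C_{0,x}$ itself, yielding a decomposition $C_{0,x} = \gamma_L\concatenate\gamma_1\concatenate\cdots\concatenate\gamma_M\concatenate\gamma_R$ into irreducible pieces separated by cone-points, with the end pieces marked backward- and forward-contained. Importing the probabilistic construction from \cite{Ott2018}, one defines $\rho_L,\rho_R$ from the end pieces and the probability measure $\bfp$ from the bulk irreducibles so that an i.i.d.\ concatenation under $\bfp$ sandwiched between $\rho_L$ and $\rho_R$ has the same law as $C_{0,x}$, up to the exponentially small error in the statement. The tail bound \eqref{eq:espDecFacMeas} follows because an irreducible piece of displacement $l$ is controlled, after tilting, by a rare bridge-like event of exponential cost.

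The main obstacle is the finite-energy step. In the FK/high-temperature setting of \cite{Campanino+Ioffe+Velenik-2008} the surgery that turns a coarse cone-point into a true cone-point is a local modification of a single percolation configuration of uniformly positive probability. Here the configuration is a pair of currents coupled through the parity structure of the sources and, more seriously, through the non-local event $\{0\nleftrightarrow g\}$, so the surgery must simultaneously preserve both source constraints and the absence of a path to the ghost while paying only a bounded energetic cost. Establishing such a surgery without degrading the cone-point density produced by the tree analysis is the technical heart of the paper; once it is in hand, the tree analysis and the measure construction follow, with minor modifications, from \cite{Campanino+Ioffe+Velenik-2008} and \cite{Ott2018}.
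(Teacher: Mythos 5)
Your outline reproduces the architecture of the actual proof (coarse graining to trees, importing the CIV machinery for cone-points of trees, a finite-energy surgery on the cluster, then the factorization apparatus of \cite{Ott2018}), but as it stands it has two genuine gaps, and in both cases the missing content is precisely where the work lies. First, the tree-energy input. You propose to bound the weight of a coarse skeleton ``by a product of truncated two-point functions along a skeleton'' via ``BK-type estimates adapted to the double current''. No BK-type inequality is available here, and more importantly a generic decoupling bound of the form \(e^{-\nu K}\) per skeleton vertex is \emph{not} sufficient: for the trunk vertices the cost must be \(e^{-K(1+o_K(1))}\) with the exact constant of \(\icl\), otherwise testing the tree weight against \(e^{\icl(x)}\) (as in Theorem~\ref{thm:CPofTree}) gives nothing. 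The paper achieves this by a two-layer coarse graining that separates the cluster of \(0\) in \(\bfm\) (trunk, where the sharp constant is needed) from the remainder of the cluster in \(\bfn+\bfm\) (branches, where insertion tolerance and a crude \(\nu K\) suffice, Lemma~\ref{lem:brancheLocalEnergy}); it factorizes the resulting local events using exponential ratio mixing (Theorem~\ref{thm:expMixRC} and Lemma~\ref{lem:sourceToSourceless}), and then, for each trunk vertex, extracts a genuine truncated two-point function \(\langle\sigma_u;\sigma_v\rangle\) by a contour surgery combined with the Switching Lemma~\ref{lem:RCswitchingLemma} (Lemma~\ref{lem:trunkLocalEnergy}). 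This step is what the paper itself identifies as containing most of the technical difficulty, and your sketch replaces it by a tool that does not exist in this model and would in any case yield the wrong constant.

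Second, you correctly flag the finite-energy surgery --- turning coarse cone-points into cone-points of \(C_{0,x}\) while preserving \(\partial\bfn=\varnothing\), \(\partial\bfm=\{0,x\}\) and \(0\nleftrightarrow g\) --- as the key obstacle, but you then defer it rather than resolve it; since the theorem you are proving depends on it, this is a hole, not a proof. The paper's resolution is an explicit multivalued map \(Y^I\) on parity currents: close all edges in boxes around the selected good points except those touching the retained cluster, repair the parities of the orphaned vertices using ghost edges (which cannot create a ghost connexion of the cluster), and then implant deterministic parity functions inside the forward and backward cone pieces of each box (existence via Lemma~\ref{lem:sourceGraph} and insertion tolerance, Lemma~\ref{lem:RCInsTol}), followed by a counting argument balancing the entropy of pre-images against the combinatorial gain --- this yields Theorem~\ref{thm:CPofCluster} and also the uniform exponential tail needed for \eqref{eq:espDecFacMeas}. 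Finally, note that your sketch does not explain how one obtains a \emph{single} family \(\rho_L,\rho_R,\bfp\) valid simultaneously for all \(x\) dual to any \(t\) near \(t_0\): the paper first maps the constrained measure to the sourceless one by opening \(\bfm_{0g}\) and \(\bfm_{xg}\) (this is where the factor \(h^2\) comes from), so that one kernel \(\procMem\) on irreducible pieces serves all \(x\), and then handles \(t\neq t_0\) by an exponential tilt of the kernel; without some such device the construction you describe only produces measures depending on \(x\).
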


From Theorem~\ref{thm:mainThm}, one easily deduces Theorem~\ref{thm:OZprefactor} by setting \(f(C)=1\) and using the local limit theorem for random walks in dimension \(d\) together with the exponential decay property of \(\rho_L,\rho_R\).

\section{Proof of Theorem~\ref{thm:iclProperties}}

Theorem~\ref{thm:iclProperties} follows from two classical lemmas.
\begin{lemma}
	For any graph \(G\) finite, \(h\geq 0\) and \(x,y,z\in V_G\),
	\begin{equation}
	\label{eq:subAddT2ptsFct}
	\langle\sigma_x; \sigma_z\rangle_G \geq \langle\sigma_x; \sigma_{y}\rangle_G \langle\sigma_y; \sigma_{z}\rangle_G.
	\end{equation}
	In particular, \(\icl(x)\) exists for every \(x\in\R^d\) and \(\langle \sigma_0;\sigma_x \rangle\leq e^{-\icl(x)}\).
\end{lemma}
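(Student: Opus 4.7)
The lemma contains two claims: the multiplicative correlation inequality \eqref{eq:subAddT2ptsFct}, and the resulting existence of $\icl$ together with the exponential upper bound. I plan to treat them in that order; the second is a standard corollary of the first.

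For the inequality I would work in the random current representation just introduced. The switching lemma applied to swap sources $\{x,g\},\{z,g\}$ into $\{x,z\},\varnothing$---exactly as in the percolative representation derived a few lines above---yields
\[
\langle\sigma_x;\sigma_z\rangle_G = \langle\sigma_x\sigma_z\rangle_G\, \RCLaw_{G_g}^{\varnothing,\{x,z\}}(x\nleftrightarrow g).
\]
In the double current $(\bfn,\bfm)$ with $\partial\bfm=\{x,z\}$, the second source constraint forces a connection $x\leftrightarrow z$ in the support of $\bfn+\bfm$. The strategy, following Graham \cite{graham_correlation_1982}, is to decompose this connection according to how (or whether) it passes through $y$, split the double current into pieces whose source structures match $\{x,y\}$ and $\{y,z\}$, and apply GKS (or the switching lemma in reverse) to recognize each piece as generating $\langle\sigma_x;\sigma_y\rangle$ and $\langle\sigma_y;\sigma_z\rangle$; the remaining contributions have the right sign. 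I would import this argument essentially verbatim from Graham's paper rather than try to improve it.

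Given \eqref{eq:subAddT2ptsFct}, the rest is a routine Fekete argument. Setting $a(x):=-\log\langle\sigma_0;\sigma_x\rangle\in[0,\infty]$, translation invariance of $\mu$ combined with \eqref{eq:subAddT2ptsFct} gives $a(x+y)\leq a(x)+a(y)$ for all $x,y\in\Zd$. Fekete's lemma then yields $\icl(x)=\lim_n a(nx)/n=\inf_n a(nx)/n$ for every $x\in\Zd$, and the infimum characterization delivers $\langle\sigma_0;\sigma_{nx}\rangle\leq e^{-n\icl(x)}$ immediately. Positive-homogeneity extends $\icl$ to $\mathbb{Q}^d$; the second bound of Theorem~\ref{thm:iclProperties} combined with the finite range of $\bfJ$ gives $\icl(x)\geq c\norm{x}$ for some $c>0$, while iterating \eqref{eq:subAddT2ptsFct} along a shortest lattice path gives $\icl(x)\leq C\norm{x}$. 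These two-sided linear bounds together with subadditivity force Lipschitz continuity of $\icl$ on $\mathbb{Q}^d$, hence a unique continuous extension to $\R^d$; the exponential upper bound then transfers to arbitrary $x\in\R^d$ by approximation by nearby lattice points.

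The main obstacle is the correlation inequality \eqref{eq:subAddT2ptsFct} itself. Everything else is standard; the role of this short section of the paper appears to be precisely to collect ready-made inputs, so I would cite Graham's argument rather than redo it.
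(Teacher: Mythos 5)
Your proposal takes the same route as the paper: cite Graham~\cite{graham_correlation_1982} for the inequality~\eqref{eq:subAddT2ptsFct} and deduce the rest from Fekete's lemma. The paper is in fact terser than you are---it dispatches the whole lemma in two sentences and does not spell out the extension of $\icl$ to non-lattice directions, which you handle (correctly) via Lipschitz continuity coming from the two-sided linear bounds.
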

\begin{proof}
	A proof of \eqref{eq:subAddT2ptsFct} is given in \cite{graham_correlation_1982}.
	The existence of \(\icl(x)\) and \(\langle \sigma_0;\sigma_x \rangle\leq e^{-\icl(x)}\) then follow from \eqref{eq:subAddT2ptsFct} and Fekete's lemma.
\end{proof}

\begin{lemma}
	\label{lem:PosICL}
	For any graph \(G\),
	\begin{equation*}
		\langle \sigma_0;\sigma_x \rangle_{G}\leq \langle \sigma_0\sigma_x \rangle_{G}\cosh(h)^{-d_{G}(0,x)}.
	\end{equation*}
\end{lemma}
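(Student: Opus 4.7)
The plan is to combine the percolative representation of $\langle\sigma_0;\sigma_x\rangle_G$ derived earlier with a direct cost computation in the random current representation. The identity
\begin{equation*}
\langle\sigma_0;\sigma_x\rangle_G = \langle\sigma_0\sigma_x\rangle_G\, \RCLaw^{\varnothing,\{0,x\}}_{G_g}(0 \nleftrightarrow g)
\end{equation*}
reduces the claim to the probabilistic bound $\RCLaw^{\varnothing,\{0,x\}}_{G_g}(0 \nleftrightarrow g) \leq \cosh(h)^{-d_G(0,x)}$. To establish this, I would condition on the ``non-ghost'' part $(\bfn|_{E_G}, \bfm|_{E_G})$ of the two currents. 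Under $\RCLaw^\varnothing \times \RCLaw^{\{0,x\}}$ the ghost-edge counts $(\bfn_{vg}, \bfm_{vg})_{v \in V_G}$ are then conditionally independent across $v$, each distributed proportionally to $h^k/k!$ on integers of a fixed parity (dictated by the source constraint at $v$ and the non-ghost part of the corresponding current). The normalization of each such distribution is $\cosh(h)$ or $\sinh(h)$, so in all cases $\RCLaw(\bfn_{vg} = 0 \mid \bfn|_{E_G}) \leq \cosh(h)^{-1}$.

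The key geometric input is the following. Because $\partial \bfm = \{0,x\}$, the two sources must lie in a common connected component of $\bfm$; on the event $\{0 \nleftrightarrow g\}$ this component avoids the ghost vertex, hence sits inside $G$, forcing $0 \leftrightarrow x$ in $(\bfn+\bfm)|_{E_G}$. Consequently the cluster $C$ of $0$ in $(\bfn+\bfm)|_{E_G}$ contains a path of $G$ joining $0$ to $x$, yielding $|C| \geq d_G(0,x) + 1$. A direct check moreover shows that $\{0 \nleftrightarrow g\}$ is precisely the event $\{\bfn_{vg} = \bfm_{vg} = 0 \text{ for every } v \in C\}$: any open ghost-edge from a cluster vertex would connect $0$ to $g$, and conversely closing all ghost-edges from $C$ isolates $g$ from the cluster of $0$.

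Combining these ingredients,
\begin{equation*}
\RCLaw^{\varnothing,\{0,x\}}_{G_g}\bigl(0 \nleftrightarrow g \bigm| \bfn|_{E_G}, \bfm|_{E_G}\bigr) \leq \prod_{v \in C} \RCLaw(\bfn_{vg} = 0 \mid \bfn|_{E_G}) \leq \cosh(h)^{-|C|} \leq \cosh(h)^{-d_G(0,x)},
\end{equation*}
and integrating over $(\bfn|_{E_G}, \bfm|_{E_G})$ completes the proof. The only mildly delicate step is to dispose of configurations on which $0$ and $x$ are not connected in $E_G$, where the cluster-size bound $|C| \geq d_G(0,x)+1$ is vacuous; on those, however, the $\bfm$-source constraint forces the $0$--$x$ pairing to pass through the ghost, so $\{0 \nleftrightarrow g\}$ has zero conditional probability there and they contribute nothing to the integral. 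Everything else is routine bookkeeping of the random current partition function.
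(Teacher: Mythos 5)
Your proof is correct and follows essentially the same approach as the paper: both use the switching-lemma identity to reduce the claim to bounding \(\RCLaw^{\varnothing,\{0,x\}}_{G_g}(0\nleftrightarrow g)\) and then control this probability by the cost of closing all ghost edges emanating from a cluster that necessarily contains a path from \(0\) to \(x\) in \(G\). The only implementation difference is that you condition on the non-ghost restrictions of both currents and exploit the resulting conditional independence of the ghost-edge counts, while the paper partitions over the cluster of \(0\) in the \(\{0,x\}\)-current and invokes the insertion-tolerance Lemma~\ref{lem:RCInsTol}; the two routes are equivalent and yield the same estimate.
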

\begin{proof}
	By the Switching Lemma,
	\begin{equation}
	\label{eq:percoRepTTPF}
	\frac{\langle \sigma_0;\sigma_x \rangle_{G}}{\langle \sigma_0\sigma_x \rangle_{G}} = \frac{\RCPF_{G_g}(\varnothing)\RCPF_{G_g}(0,x)\big\{ \mathds{1}_{0\nleftrightarrow g} \big\}}{\RCPF_{G_g}(\varnothing)\RCPF_{G_g}(0,x)}
	= \RCLaw_{G_g}^{\varnothing,\{0,x\}}(0\stackrel{\bfn+\bfm}{\nlongleftrightarrow g}).
	\end{equation}
	Now, notice that the source constraint in the first current imposes that \(0\stackrel{\bfm}{\longleftrightarrow} x\). Partitioning then with respect to the cluster of \(0\) in the second copy,
	\begin{align*}
	\RCLaw_{G_g}^{\varnothing,\{0,x\}}\big( 0\stackrel{\bfn+\bfm}{\nlongleftrightarrow} g \big) &\leq \sum_{\substack{C\subset V_G\\0,x\in C}} \RCLaw_{G_g}^{\{0,x\}}\big( C_{0,x}=C \big)\RCLaw_{G_g}^{\varnothing}\big( C\nleftrightarrow g \big)\\
	&\leq \cosh(h)^{-d_{G}(0,x)},
	\end{align*}by Lemma~\ref{lem:RCInsTol} and the fact that if \(C\) is the joint cluster of \(0\) and \(x\) in \(G\), it contains at least \(d_G(0,x)\) vertices.
\end{proof}

\(\icl\) being a norm follows from~\eqref{eq:subAddT2ptsFct} and Lemma~\ref{lem:PosICL}.

\section{Proof of Theorem~\ref{thm:OZprefactor}}

We deduce here Theorem~\ref{thm:OZprefactor} from Theorem~\ref{thm:mainThm}. We start by sketching the proof the OZ prefactor. Fix \(t\in\partial\WulffShape\) and \(x\) dual to \(t\). As said earlier, this is a rather straightforward application of the Local Limit Theorem in dimension \(d\). Denote \(\vec{\mu}= \mu\frac{x}{\norm{x}}\) the expectation of \(\displace(\gamma)\) under \(\bfp\) (where \(\gamma\sim \bfp\)). Fix \(\epsilon>0\) small. Denote \(P\) the law of the random walk with steps having law \(\displace_*(\bfp)\) (the push-forward of \(\bfp\) by \(\displace\)) started at \(0\). By the LLT, for any point \(y\) with \(\norm{y-n\vec{\mu}}<n^{1/2-\epsilon}\), \(P(\exists N: S_N = y) = \Cl{OZ1} n^{-(d-1)/2}(1+o_n(1))\) with \(\Cr{OZ1}\) depending only on \(\bfp\) (the way for \(\exists N: S_N... \) to \(n\) dependency is a classical concentration estimate). One can check the proof of the same statement in~\cite[Section 3.9]{Ott-2019} for the detailed argument. The claim follows from this and the exponential decay property of \(\rho_{L/R}\). One has:
\begin{equation*}
	\psi_d(x/\norm{x}) = \Cr{OZ1} \mu^{(d-1)/2}\sum_{u,v\in \fcone_{\delta}} \rho_L(D(\gamma_L)=u)\rho_R(D(\gamma_R)=v).
\end{equation*}

Now turn to the local analyticity of \(\icl\) in the direction (i.e.: analyticity as a function \(\sphere^{d-1}\to \R\)). By duality, it is equivalent to show that the boundary of the Wulff shape is locally analytic. \(\partial\WulffShape\) can be defined (see \cite{Campanino+Ioffe+Velenik-2003}) as the boundary of the convergence domain of
\begin{equation*}
	t \mapsto \sum_{x\in \Zd} e^{(t,x)_d}\RCLaw^{\varnothing,\{0,x\}}(0\nleftrightarrow g).
\end{equation*}
Fix \(t_0\in\partial\WulffShape\), \(s_0\in\sphere^{d-1}\) dual to \(t_0\) and let \(\delta,\bfp\) be given by Theorem~\ref{thm:mainThm}. Let \(V\subset \partial\WulffShape\) be a small open neighbourhood of \(t_0\) (that is chosen small enough so that all arguments go through). Let \(U\subset\sphere^{d-1}\) be the small open neighbourhood of \(s_0\) defined as the set of points dual to a point in \(V\). By definition of \(\icl\), \(t\in V\) is equivalent to \(t\) being in the boundary of the convergence domain of
\begin{equation*}
	A(z) = \sum_{x\in\cone(U)} e^{(x,z)_d} \RCLaw^{\varnothing,\{0,x\}}(0\nleftrightarrow g),
\end{equation*}where \(\cone(U)\) is the convex cone generated by \(U\). Choosing \(V\) (and thus \(U\)) sufficiently small so that Theorem~\ref{thm:mainThm} applies, one get that the boundary of the convergence domain of \(A(z)\) close to \(t_0\) is given by the set of points \(t_0+z\) with \(z\) in the boundary of the convergence domain of
\begin{equation*}
	\tilde{F}(z) = \sum_{x\in\cone(U)} e^{(x,z)_d} P_0(\exists N:\ S_N=x)
\end{equation*}
where \(P\) is the same random walk law as defined earlier in this Section. As a sum of i.i.d. random variables, \(S\) admits a large deviation principle. The convergence domain of \(\tilde{F}\) is thus the same as the convergence domain of
\begin{equation*}
	F(z) = \sum_{x\in\cone_{\delta}(t_0)} e^{(x,z)_d} P_0(\exists N:\ S_N=x).
\end{equation*} The cone-containment property of the steps together with a classical manipulation on generating functions (induced by the renewal property of \(P\)) implies that
\begin{equation*}
	F(z) = \frac{1}{1-H(z)},
\end{equation*}where
\begin{equation*}
	H(z) = \sum_{x\in\cone_{\delta}(t_0)} e^{(x,z)_d} \bfp(\displace(\gamma_1)=x).
\end{equation*}

So the the boundary of the convergence domain of \(F(z)\) is given by the set of \(z\) such that \(H(z)=1\) (we used that \(H(z)\) has a domain of convergence strictly larger than \(F(z)\)). It is thus the zero set of an analytic function and is thus an analytic manifold. It is worth noticing that the whole procedure is amounting to identify the boundary of the Wulff shape close to \(t_0\) with the rate function of a suitable random walk.

Finally, \(\psi_d\) is locally analytic in the direction as it is a convergent sum of locally analytic functions (see \cite{Campanino+Ioffe-2002} and \cite{Campanino+Ioffe+Velenik-2003} for details).
\section{Coarse-Graining Procedure and Cone Structure}
\label{sec:CoarseGraining}

Following the analysis of \cite{Campanino+Ioffe+Velenik-2003} and \cite{Campanino+Ioffe+Velenik-2008}, the first step of the proof of Theorem~\ref{thm:mainThm} is the approximation of \(C_{0,x}\) by a tree. It consists in two layers of coarse graining: the first is an approximation of the cluster in \(\bfm\) induced by the source constraint \(\partial\bfm=\{0,x\}\) and \(0\nleftrightarrow g\), it will give a tree; the second will be an approximation of the remainder of \(0\)'s cluster in \(\bfn+\bfm\), adding branches to the previously constructed tree. For this whole section, we consider the positive field double random current on \(\Zd\) with sources \(\varnothing\) and \(\{0,x\}\). By convention, the first current will be denoted \(\bfn\) and has \(\partial\bfn=\varnothing\) while the second will be denoted \(\bfm\) and has \(\partial\bfm=\{0,x\}\).

\subsection{The Coarse-Graining}

The coarse graining is a ``two layers'' version of the one done in~\cite{Campanino+Ioffe+Velenik-2008}: fix \(K>0\) large (the scale at which the analysis will be done), then, for \(y\in\Zd\), define
\begin{gather*}
\CGballK(y) = y+K\equiDecSet,\qquad \CGballKFat(y) = \CGball_{K+3\log( K)^3}(y).
\end{gather*}
For \(A\subset\Zd\),
\begin{gather*}
\bar{A}_K=\bigcup_{y\in A}\CGball_{3\log( K)^3}(y),\qquad [A]_K=\bigcup_{y\in A} \CGballKFat(y).
\end{gather*}

Recall that as \(\partial\bfm=\{0,x\}\), and \(0\nleftrightarrow g\), \(\bfm\) contains a long cluster. We start by coarse graining it using the following algorithm:

\begin{algorithm}[H]
	\label{alg:mainExtraction}
	Set \(v_0=0\), \(V=\{v_0\}\), \(n=1\)\;
	\While{\(A=\big\{ z\in\partial^{\exterior}[V]_K: z\xleftrightarrow{\CGballK(z)\setminus[V]_K }\partial^{\exterior}\CGballK(z)\big\}\neq \varnothing   \)}{
		Set \(v_{n}= \min A \)\;
		Update \(V=V\cup\{v_{n}\}\), \(n=n+1\)\;
	}
	Set \(M_1=n-1\)\;
	\Return \(V=\{v_0,...,v_{M_1}\}\)\;
	\caption{Coarse graining of the cluster of \(0\) in \(\bfm\)}.
\end{algorithm}
Beware that connections are in \(\bfm\)! Add then an edge between \(v_k\) and the element of \(\{v_i,i<k: v_k\in\partial^{\exterior}\CGballKFat(v_i)\}\) with smallest index to obtain a tree. From it, extract a trunk \(\trunk\) by taking the (unique) path between \(0\) and the \(v_i\) closest to \(x\). Denote the number of edges in \(\trunk\) by \(|\trunk|\).

Now, one want to add vertices to the previously discovered ones in order to have a coarse graining of \(0\)'s cluster in \(\bfn+\bfm\). Starting with the sequence \(V=\{v_0,...,v_{M_1}\}\):

\begin{algorithm}[H]
	\label{alg:noiseExtraction}
	Set \(n=1\), \(V=\{v_0,...,v_{M_1}\}\), \(V'=\varnothing\)\;
	\While{\(A=\big\{ z\in\partial^{\exterior}[V\cup V']_K: z\xleftrightarrow{\CGballK(z)\setminus[V]_K }\partial^{\exterior}\CGballK(z)\big\}\neq \varnothing   \)}{
		Set \(w_{n}= \min A \)\;
		Update \(V'=V'\cup\{w_{n}\}\), \(n=n+1\)\;
	}
	Set \(M_2=n-1\)\;
	\Return \(V'=\{w_1,...,w_{M_2}\}\)\;
	\caption{Coarse-graining of the remainder of the cluster in \(\bfn+\bfm\).}
\end{algorithm}
There the connections are understood in \(\bfn+\bfm\). Add again an edge between \(w_k\) and the smallest element of \(\{v_i: w_k\in\partial^{\exterior}\CGballKFat(v_i)\}\cup\{w_i,i<k: w_k\in\partial\CGballKFat(w_i)\}\) to obtain a tree. Call the vertices of \((V\cup V')\setminus\trunk \) the branches \(\branches\) of the tree and denote \(|\branches| \) the number of branches.

Denote \(\treeMap(\bfn,\bfm)\) the tree extracted from \((\bfn,\bfm)\). For a given tree \(\tree\) obtainable via the tree extraction, define its weight by:
\begin{equation}
	\label{eq:treeWeight}
	\treeWeight(\tree) = \sum_{\substack{\partial\bfn=\varnothing,\partial\bfm=\{0,x\}\\\treeMap(\bfn,\bfm)=\tree}} \frac{1}{\RCPF_{\Lambda_g}(\varnothing)^2}\weight(\bfn)\weight(\bfm).
\end{equation}

We stress at this point that the main difference between the tree created by the present coarse-graining and the tree obtained via the coarse graining of~\cite{Campanino+Ioffe+Velenik-2008} is that the vertices in \(\trunk\) imply a different event in the underlying model than the vertices in \(\branches\). Otherwise, the trees are the same (the only difference is that we use \(\CGballKFat = \CGball_{K+3\log( K)^3}\) instead of \(\CGballKFat = \CGball_{K+r\log( K)}\) for some \(r\geq 0\)). Results depending only on estimates on \(\treeWeight(\tree)\) are independent of the underlying model.

The next step is to give an estimate on the energy associated with a tree. It is the content of the next technical Lemma, the proof of which is postponed to Section~\ref{sec:TreeEnergyExtrac} as it contains most of the technical difficulties of this paper.
\begin{lemma}
	\label{lem:TreeEnergy}
	There exists \(\nu=\nu(h,d,\bfJ)>0\) such that:
	\begin{equation}
		\treeWeight(\tree) \leq e^{-K|\trunk|(1+o_{K}(1)) - \nu K|\branches|(1+o_K(1))}.
	\end{equation}
\end{lemma}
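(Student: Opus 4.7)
\emph{Plan.} The plan is to upper bound $\treeWeight(\tree)$ by dropping the ``no-further-connection'' constraints implicit in $\treeMap(\bfn,\bfm)=\tree$ and keeping only the long witness-connections that Algorithms~\ref{alg:mainExtraction} and~\ref{alg:noiseExtraction} use to locate each $v_i$ or $w_j$. Because the coarse-graining excises $[V]_K$ at each step, these witnesses live in essentially disjoint regions, so I would reveal them sequentially in the order of discovery and control each in turn by a conditional expectation argument.

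For a trunk vertex $v_i$, the witness is a $\bfm$-connection from $v_i$ to some $y\in\partial^{\exterior}\CGballK(v_i)$ within $\CGballK(v_i)\setminus[V]_K$, subject to the global constraint $0\nleftrightarrow g$ in $\bfn+\bfm$. Conditioning on the $\bfm$-cluster $C$ of $\{0,x\}$ as in the proof of Lemma~\ref{lem:PosICL}, the $\bfn$-part of the ghost-avoidance constraint decouples to produce a $\cosh(h)^{-|C|}$ factor, while the $\bfm$-probability of stretching the cluster by an additional $\icl$-distance $K$ is bounded, via the switching lemma, by a truncated two-point factor $\langle\sigma_{v_i};\sigma_y\rangle$. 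By Theorem~\ref{thm:iclProperties}, $\langle\sigma_{v_i};\sigma_y\rangle\leq e^{-\icl(y-v_i)}\leq e^{-K}$ since $y-v_i\notin K\equiDecSet$ by the definition of $\CGballK$, and summing over the $O(K^{d-1})$ candidate $y$ absorbs only a polynomial factor, accounted for by the $(1+o_K(1))$ correction.

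For a branch vertex $w_j$, the witness is a long $\bfn+\bfm$-connection inside a region disjoint from the sources. Here no source-switching is required: the witness produces a connected subset of the cluster of $0$ in $\bfn+\bfm$ containing at least $cK$ vertices (with $c=c(R)>0$ coming from the finite-range assumption and the $\icl$-scale of $\CGballK$), so the insertion-tolerance bound used in Lemma~\ref{lem:PosICL} applied to the global event $\{0\nleftrightarrow g\}$ yields a factor $\cosh(h)^{-cK}=e^{-\nu K}$ with $\nu=c\log\cosh(h)>0$. The independence, up to previously explored structure, between different witnesses then gives the product over $\branches$.

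The main obstacle will be the trunk step. One must interleave the \emph{local} switching-lemma manipulations performed at each $v_i$ with the \emph{global} conditioning $\{0\nleftrightarrow g\}$ (which lives on $\bfn+\bfm$ rather than on $\bfm$ alone), and verify that the resulting telescoping product of $|\trunk|$ truncated two-point factors is genuinely an upper bound for the original weight rather than an equality up to an uncontrolled error from repeated source swaps. Once this bookkeeping is in place, collecting the $e^{-K}$ trunk factors and the $e^{-\nu K}$ branch factors yields the claimed bound; the polynomial overheads from summing over candidate boundary positions and cluster shapes are standard and absorbed into the $(1+o_K(1))$ terms.
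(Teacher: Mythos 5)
Your proposal captures the correct two-tier target of the estimate --- a sharp $e^{-K}$ per trunk vertex via a truncated two-point function and a rough $e^{-\nu K}$ per branch vertex via insertion tolerance --- and you correctly flag that the main obstacle is disentangling the local $\bfm$-connections from the global $\bfn+\bfm$ ghost-avoidance. But the plan as stated does not close that gap, and there are two concrete problems.

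First, the decoupling: the paper does not rely on sequential conditional revelation to turn the joint constraint $\treeMap(\bfn,\bfm)=\tree$ into a product. Instead it first replaces $\{\treeMap(\bfn,\bfm)=\tree\}$ by a weaker intersection of \emph{local} events $F_k$ (for trunk vertices) and $H_k$ (for branch vertices) whose supports are mutually at distance of order $\log(K)^3$, and then invokes exponential ratio mixing of the random current with a field (Theorem~\ref{thm:expMixRC}) together with Lemma~\ref{lem:sourceToSourceless} to literally factorize $\RCLaw^{\varnothing,\{0,x\}}_{\Lambda_g}(\bigcap F_k\cap\bigcap H_k)$ into a product of $\RCLaw^{\varnothing,\varnothing}_{\Lambda_g}(F_k)$ and $\RCLaw^{\varnothing,\varnothing}_{\Lambda_g}(H_k)$ up to a controlled error. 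Conditional exploration along the order of discovery would not, on its own, give you anything like this product of \emph{sourceless} local probabilities; the mixing argument is the structural step that makes the subsequent local bounds even meaningful.

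Second, and more seriously, your mechanism for the trunk bound would not produce the truncated two-point function you need. If you condition on the $\bfm$-cluster $C$ of $\{0,x\}$ and extract $\cosh(h)^{-|C|}$ from the $\bfn$-ghost-avoidance (as in Lemma~\ref{lem:PosICL}), you have already spent the ghost constraint; what the switching lemma then yields for the stretching by $K$ is the \emph{untruncated} $\langle\sigma_{v_i}\sigma_y\rangle$, and since $h>0$ this does not decay at all (it tends to $\langle\sigma_0\rangle^2>0$). Conversely, if you want the truncated $\langle\sigma_{v_i};\sigma_y\rangle$ you cannot also bill the same ghost-avoidance as a separate $\cosh(h)^{-|C|}$. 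The actual content of Lemma~\ref{lem:trunkLocalEnergy} is precisely the machinery that resolves this: a dichotomy on the number of disjoint paths in $\calC_\partial$ (many paths gives $e^{-cK\log K}$ directly by insertion tolerance; few paths forces a separating contour with at most $K/\log(K)^2$ open edges), followed by a many-to-one surgery that closes that contour, reattaches the stray sources to the ghost outside, and leaves a genuine $\RCPF_{\mathring\gamma}(u,v)\RCPF_{\mathring\gamma}(\varnothing)\{\mathds 1_{u\nleftrightarrow g}\}$ inside, which via the switching lemma is exactly a truncated two-point function $\langle\sigma_u;\sigma_v\rangle\leq e^{-K}$, with only an $e^{cK/\log K}$ combinatorial and energetic overhead. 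This contour surgery is the missing idea; without it you get $e^{-\nu K}$ per trunk step for some $\nu<1$, which, as the paper emphasises after Theorem~\ref{thm:CPofTree}, is \emph{not} sufficient --- the cone-point theory requires the constant in front of $|\trunk|$ to be exactly $K$.
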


\subsection{Cone-Points of Trees}

Denote \(\ConePts_{\delta,t}(C)\) the set of cone-points of \(C\) for the cones \(\cone_{\delta}(t)\). Lemma~\ref{lem:TreeEnergy} is the input needed in~\cite[Section 2]{Campanino+Ioffe+Velenik-2008} to obtain:

\begin{theorem}
	\label{thm:CPofTree}
	Let \(\delta\in(0,1)\). Then, for any \(\epsilon>0\) there exist \(\nu>0\) and \(K_0\geq 0\) such that for any \(t\in\partial\WulffShape\) and \(x\in \Zd\) dual to \(t\) with \(\norm{x}\gg 1\),
	\begin{equation*}
		e^{\icl(x)}\treeWeight\Big( |\ConePts_{\delta,t}(\tree)|\leq (1-\epsilon)|\trunk| \Big)\leq e^{-\nu\norm{x}}.
	\end{equation*}
\end{theorem}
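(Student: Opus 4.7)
The strategy is to carry over the skeleton calculus of~\cite[Section 2]{Campanino+Ioffe+Velenik-2008} essentially verbatim: the only model-dependent input needed there is an exponential tree-weight bound of the form furnished by Lemma~\ref{lem:TreeEnergy}, and once it is available the remainder of the argument is purely geometric/combinatorial. Since each trunk step satisfies $\icl(v_{i+1}-v_i)\leq K(1+o_K(1))$, one has the trivial matching $K|\trunk|\geq\icl(x)(1-o_K(1))-O(K)$, so that $e^{\icl(x)}\treeWeight(\tree)\leq e^{o_K(1)\icl(x)}\cdot e^{-\nu K|\branches|(1+o_K(1))}\cdot(\text{entropy})$; the task is then to extract an extra factor $e^{-\nu\norm{x}}$ out of the hypothesis ``few cone-points''.

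The key geometric input is the excess-length lemma: for any polygonal path $v_0=0,\ldots,v_N$ with $\icl$-steps of length $K(1+o_K(1))$ and any $t\in\partial\WulffShape$, if the associated skeleton has at most $(1-\eta)N$ $(t,\delta)$-cone-points, then $\sum_i\icl(v_{i+1}-v_i)\geq\icl(v_N)+\kappa(\eta,\delta)NK$ for some $\kappa>0$; the heuristic being that each non-cone-point forces a fixed angular deviation away from the Wulff-shape geodesic and hence an $\Omega(K)$ excess in $\icl$-length. This is proved in~\cite{Campanino+Ioffe+Velenik-2008} and applies here because it depends only on the norm structure of $\icl$. Together with the entropic count ``at most $(CK^d)^{N+B}$ trees with given $|\trunk|=N$, $|\branches|=B$'', valid because each newly discovered vertex lies in $\partial^{\exterior}\CGballKFat$ of a previous one (volume $O(K^d)$), and which is dominated by the energy for $K$ large, this is everything one needs.

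To conclude, I would split the event ``at most $(1-\epsilon)|\trunk|$ cone-points of $\tree$'' into two cases. In case (a) $|\branches|\geq\epsilon'|\trunk|$ for a small $\epsilon'=\epsilon'(\epsilon,\delta)>0$, the $e^{-\nu K|\branches|}$ factor in Lemma~\ref{lem:TreeEnergy} combined with $K|\trunk|\geq\icl(x)(1-o_K(1))$ immediately yields $e^{\icl(x)}\treeWeight(\tree)\leq e^{-c\norm{x}}$. In case (b) $|\branches|<\epsilon'|\trunk|$, the bare trunk has at most $(1-\epsilon+\epsilon')|\trunk|$ cone-points (since each tree cone-point lying on $\trunk$ corresponds to a trunk cone-point carrying no ``sideways'' branch, and there are at most $|\branches|$ such obstructing branches), and the excess-length lemma yields the required excess in $K|\trunk|$. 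Summing the resulting geometric series over $N\geq\norm{x}/(CK)$ and $B\geq 0$ delivers the bound $e^{-\nu\norm{x}}$ for $K$ large. The main obstacle is really the excess-length lemma itself, but this is a fully model-independent geometric fact already established in~\cite{Campanino+Ioffe+Velenik-2008}; the only genuine adaptation here is the comparison between ``tree cone-points'' and ``trunk cone-points'' forced by our two-layer coarse graining, and this is handled by the case split above.
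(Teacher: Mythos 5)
Your proposal takes essentially the same approach as the paper: Theorem~\ref{thm:CPofTree} is imported wholesale from the skeleton calculus of~\cite{Campanino+Ioffe+Velenik-2008}, with Lemma~\ref{lem:TreeEnergy} as the sole model-dependent input, and the structure you give (entropy of trees negligible against energy, trunk energy $-K|\trunk|$ exactly matching $\icl(x)$ at leading order, branches killed by energy--entropy, trunk deviations by the excess-length estimate) is precisely the sketch the paper offers.

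One small quantitative slip in your case~(b): a single branch vertex sitting at $\icl$-distance $L$ from the trunk can obstruct $\sim L/(K\tan\theta_\delta)$ potential trunk cone-points, so each branch vertex accounts for $O_\delta(1)$ obstructions rather than at most one. Thus the count should read ``at most $C(\delta)\,|\branches|$ obstructed trunk cone-points,'' which is harmless provided you choose $\epsilon'=\epsilon'(\epsilon,\delta)$ small enough relative to $C(\delta)$. Apart from this constant, your decomposition matches the intended argument.
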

Cone-points of trees are cone-points of the tree vertices. The proofs are \emph{the same} as in~\cite{Campanino+Ioffe+Velenik-2003,Campanino+Ioffe+Velenik-2008} as they only involve the measure \(\treeWeight\) on trees and the only needed input is Lemma~\ref{lem:TreeEnergy}. The proofs are formulated there for \(\nu=1\) in Lemma~\ref{lem:TreeEnergy} and for \(\CGballKFat = \CGball_{K+r\log( K)}\) but they work with no modification to handle the case \(\nu>0\) and \(\CGballKFat = \CGball_{K(1+o_K(1))}\). The interested reader can find a model-independent presentation of these proofs in the author's PhD thesis~\cite{Ott-2019}. We sketch extremely briefly the procedure to highlight the key role played by Lemma~\ref{lem:TreeEnergy}.
\begin{itemize}
	\item First, notice that the entropy of trees with given trunk and branches sizes is negligible compared to their energy. Trees are thus governed by their energy.
	\item One then wants to test the energy of a tree against \(\icl(x)\), but the size of the trunk has to be, by construction of the tree, at least \(\icl(x)/K\). Thus, the minimal energy cost of a tree is \(\frac{\icl(x)}{K}K=\icl(x)\) and is given by a trunk with no branches and a straight trunk linking \(0\) to \(x\). We stress at this point that it is \emph{crucial} to have \(-K|\trunk|\) in Lemma~\ref{lem:TreeEnergy} and \emph{not} \(-\nu K|\trunk|\) for some \(\nu>0\).
	\item Via a careful analysis, one can then eliminate too large deviations from the straight line as they purely add to the energy.
	\item Branches are finally dealt with using a relatively simple energy-entropy argument (their presence add to the energy cost of the tree if \(K\) is chosen large enough).
\end{itemize}

\section{Cone-Points of Clusters and Factorization of Measure}
\label{sec:FEandFactorization}

In this section we turn the coarse control of the previous section into a fine control on microscopic clusters by a finite-energy type argument. One can then define irreducible pieces of the cluster as in \cite{Campanino+Ioffe+Velenik-2008} and use Appendix C of \cite{Ott2018} to give a factorized representation of the measure (see also Section 4 of \cite{Ott2018}).

\subsection{Cone Points of Clusters}

The goal of this section is the proof of
\begin{theorem}
	\label{thm:CPofCluster}
	There exist \(\delta\in(0,1)\) and \(\Cl{CP_density},\Cl{CP_rate}>0\) such that
	\begin{equation*}
		e^{\icl(x)}\RCLaw^{\varnothing,\{0,x\}}\Big( |\ConePts_{\delta,t}(C_{0,x})|\leq \Cr{CP_density}\norm{x}, 0\nleftrightarrow g \Big)\leq e^{-\Cr{CP_rate}\norm{x}},
	\end{equation*}
	uniformly over \(x\) and \(t\in\partial\WulffShape\) dual to \(x\).
\end{theorem}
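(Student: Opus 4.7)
The strategy is to promote the coarse-grained tree cone points provided by Theorem~\ref{thm:CPofTree} into genuine cluster cone points of \(C_{0,x}\) by a finite-energy surgery on the pair \((\bfn,\bfm)\) inside a ball around each tree cone point, in the spirit of \cite[Section~5]{Campanino+Ioffe+Velenik-2008}. Fix \(\delta'' > \delta > 0\) with a small slack and apply Theorem~\ref{thm:CPofTree} at parameter \(\delta''\): outside an event of \(\treeWeight\)-weight at most \(e^{-\nu\norm{x}}e^{-\icl(x)}\), the extracted tree \(\tree=\treeMap(\bfn,\bfm)\) has at least \((1-\epsilon)|\trunk| \geq c\norm{x}\) tree cone points, using the lower bound \(|\trunk|\geq \icl(x)/(CK)\) forced by the coarse-graining step size.

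\textbf{Surgery step.} Fix a tree cone point \(v_i\in\trunk\). Because the coarse-graining forces \(C_{0,x}\subset [V]_K\), the ``forward half'' and ``backward half'' of \(C_{0,x}\) relative to \(v_i\) sit in a \(\CGballKFat\)-thick neighbourhood of \(v_i+\overline{\fcone_{\delta''}(t)}\) and of its mirror image. I resample \((\bfn,\bfm)\) inside the ball \(B = \CGballKFat(v_i)\) as follows: set \(\bfn|_B\equiv 0\), keep no edge to the ghost \(g\) inside \(B\), and install a single self-avoiding \(\bfm\)-path \(\pi\subset B\) through \(v_i\), constrained to stay inside \((v_i+\overline{\fcone_{\delta}(t)})\cup(v_i+\overline{\bcone_{\delta}(t)})\), whose endpoints on \(\partial B\) match the \(\bfm\)-parity imposed by the unchanged exterior. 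The slack \(\delta''-\delta\) and the hypothesis \(J_{0,\pm\bfe_k}>0\) jointly guarantee such a \(\pi\) can be produced for every admissible boundary parity, and in the resulting configuration \(v_i\) is a genuine \((t,\delta)\)-cone point of the cluster of \(0\) in \(\bfn+\bfm\). By the finite-energy property of the random current (conditionally on the exterior, \((\bfn,\bfm)|_B\) is a product of independent Poissons restricted by boundary parities), the prescribed configuration has conditional probability at least \(p_K = e^{-CK^d}>0\), uniformly in the exterior.

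\textbf{Counting and main obstacle.} Extract a sub-family of tree cone points at mutual \(\ell^\infty\)-distance at least \(\mathrm{diam}(B)\); the thin-tube concentration established in \cite{Campanino+Ioffe+Velenik-2008} from the same coarse-graining estimates shows this loses only a constant factor. The surgeries in the resulting disjoint boxes are conditionally independent given the data outside their union, so the number of cluster cone points produced stochastically dominates a Binomial\((c\norm{x},p_K)\) random variable. A Chernoff estimate yields at least \(\Cr{CP_density}\norm{x}\) genuine \((t,\delta)\)-cone points of \(C_{0,x}\) with probability at least \(1-e^{-\Cr{CP_rate}\norm{x}}\), which combined with Theorem~\ref{thm:CPofTree} concludes the proof. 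The main obstacle is correctly threading the resampling through the two source constraints \(\partial\bfn=\varnothing\), \(\partial\bfm=\{0,x\}\) and the global event \(\{0\nleftrightarrow g\}\): by killing all \(g\)-edges inside \(B\) and matching the \(\bfm\)-parity on \(\partial B\) exactly, the source profile is preserved and no new connection to the ghost can be created in the box, so the newly produced configurations still lie in \(\{0\nleftrightarrow g\}\cap\{|\ConePts_{\delta,t}(C_{0,x})|\geq \Cr{CP_density}\norm{x}\}\). Checking that such a path \(\pi\) can always be routed through \(v_i\) given arbitrary exterior parities is the only genuinely combinatorial point of the argument.
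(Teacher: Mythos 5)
Your overall strategy (tree cone points from Theorem~\ref{thm:CPofTree} plus a local finite-energy surgery near each of them) is the same as the paper's, but the step on which everything hinges is not justified and, as written, does not work. You claim that, given the exterior data, the surgeries in disjoint boxes are conditionally independent with success probability at least \(p_K=e^{-CK^d}\), so the number of genuine cone points dominates a Binomial. The conditioning event you work under — that the extracted tree has many cone points at the given locations (equivalently, that these points are ``good'') — is \emph{not} measurable with respect to the configuration outside the surgery boxes: both the location of the points and the tree/good-slab structure depend on \((\bfn,\bfm)\) inside the boxes. Hence the conditional law of the interior of a box is not the free ``product of Poissons with boundary parities'' law, the trials are not independent Bernoulli variables, and the stochastic domination (and the Chernoff bound built on it) is asserted rather than proved. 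This is exactly the difficulty the paper circumvents by \emph{not} conditioning at all: it uses a many-to-many (multivalued) map \(Y^I\) that modifies only an \(\epsilon\)-fraction of the good slabs, bounds the change of weight per surgery by \(e^{\tilde cM^d}\) and the multiplicity of pre-images by \(e^{(c_M+c'M^d)\epsilon\norm{x}/14M}\), and beats these bounded costs by the combinatorial entropy \(\epsilon\log(1/\epsilon)\,\norm{x}/14M\) of choosing \emph{which} good slabs to modify, with \(\epsilon\) chosen small depending on \(M\). The conclusion is a measure comparison \(\RCLaw^{\varnothing,\{0,x\}}(A)\leq e^{-\nu\norm{x}}\RCLaw^{\varnothing,\{0,x\}}(B)\) with \(B\subset\{0\nleftrightarrow g\}\), which after multiplication by \(e^{\icl(x)}\) gives the theorem; no conditional resampling or independence is ever needed. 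Your argument would need to be rewritten in this (or an equivalent) form to close the gap.

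A second, more technical, problem is the surgery itself. Setting \(\bfn\equiv 0\) in the box and installing a single self-avoiding \(\bfm\)-path through \(v_i\) does not preserve the source constraints: erasing \(\bfn\) creates \(\bfn\)-sources at every boundary vertex of odd exterior \(\bfn\)-degree (violating \(\partial\bfn=\varnothing\)), a single path cannot match an arbitrary odd-parity pattern of \(\bfm\) on \(\partial B\), and it does not reconnect the possibly several strands in which \(C_{0,x}\) crosses the box, which is needed both to keep \(0\leftrightarrow x\) in \(\bfm\) and to make \(v_i\) a cone point of a single cluster. The paper's surgery is built precisely to handle this: it keeps the cluster \(C\) outside the boxes intact, closes only edges not touching \(C\), repairs the sources of vertices outside \(C\) by ghost edges, and replaces the interior by deterministic parity functions (Lemma~\ref{lem:sourceGraph}) on the cone-restricted sub-boxes \(B_i^{\blacktriangleleft},B_i^{\blacktriangleright}\) with prescribed sources \(\mathfrak{N}_i\) and \(\mathfrak{M}_i\cup\{z_i\}\), each having a single nontrivial cluster containing all contact points \(W_i\); this is what guarantees \(\partial\bfn=\varnothing\), \(\partial\bfm=\{0,x\}\), \(0\nleftrightarrow g\) and the cone-point property simultaneously. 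These repairs are not optional details: without them the image configurations are simply inadmissible, so even a corrected counting argument would have nothing to compare to.
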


Start by fixing \(x\) and \(t\in\partial\WulffShape\) dual to \(x\). For \(M>0\) define two families of slabs by
\begin{gather*}
	S_i=\Big\{z\in\R^d: 7(i-1)M \leq \big(z,\frac{x}{\norm{x}}\big)_d< 7iM\Big\},\\
	\tilde{S}_i=\Big\{z\in\R^d: 7(i-1)M +2M \leq \big(z,\frac{x}{\norm{x}}\big)_d< 7iM-2M\Big\},
\end{gather*} for \(i=1,...,\frac{\norm{x}}{7M}\). Then, a point \(z\in C_{0,x}\) will be said to be \((M,\delta)\)-good if
\begin{equation*}
	C_{0,x}\subset\Big( (\fcone_{\delta} +z)\cup (\bcone_{\delta} +z) \cup ([-M,M]^d+z) \Big).
\end{equation*} We will say that a slab \(S_i\) is \((M,\delta)\)-good if there exists \(z\in C_{0,x}\cap\tilde{S}_i\) such that \(z\) is \((M,\delta)\)-good. Denote \( \#_{\good}^{M,\delta} \) the number of \((M,\delta)\)-good slabs.
\begin{lemma}
	\label{lem:ClusterCPlem1}
	For every \(\delta\in(0,1)\), there exist \(M=M(\delta)\geq 0\) and \( \nu=\nu(\delta)>0\) such that
	\begin{equation*}
	e^{\icl(x)}\RCLaw^{\varnothing,\{0,x\}}\Big( \#_{\good}^{M,\delta} \leq \frac{\norm{x}}{14M} , 0\nleftrightarrow g\Big)\leq e^{-\nu\norm{x}},
	\end{equation*}
\end{lemma}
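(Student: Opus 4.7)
The plan is to transfer the abundance of tree cone-points furnished by Theorem~\ref{thm:CPofTree} into an abundance of good slabs of \(C_{0,x}\). Assume first that \(\delta\) is small enough that vectors in \(\fcone_\delta(t)\) have strictly positive \(x/\norm{x}\)-projection (the regime of larger \(\delta\) follows from the trivial monotonicity of the \((M,\delta)\)-good property in \(\delta\)). Fix auxiliary parameters \(\delta'\in(0,\delta)\), a small \(\epsilon>0\), and the coarse-graining scale \(K\) large, all to be pinned down at the end. Applying Theorem~\ref{thm:CPofTree} with parameters \(\delta'\) and \(\epsilon\) to the tree \(\tree=\treeMap(\bfn,\bfm)\) yields
\[
e^{\icl(x)}\treeWeight\big(|\ConePts_{\delta',t}(\tree)|\leq (1-\epsilon)|\trunk|\big)\leq e^{-\nu_0\norm{x}}.
\]

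The geometric core of the argument is: any tree vertex \(v\in\tilde{S}_i\) which is a \((t,\delta')\)-cone-point of \(\tree\) is automatically a \((M,\delta)\)-good point of \(C_{0,x}\), provided \(M\geq C(K+3\log(K)^3)\) for a constant \(C=C(\delta-\delta')\). Indeed, by the stopping rules of Algorithms~\ref{alg:mainExtraction} and~\ref{alg:noiseExtraction}, every \(u\in C_{0,x}\) satisfies \(\icl(u-w)=O(K+\log(K)^3)\) for some tree vertex \(w\); by the cone-point assumption, \(w-v\in\fcone_{\delta'}(t)\cup\bcone_{\delta'}(t)\). A direct estimate combining \((t,u-v)=(t,w-v)+(t,u-w)\) and \(\icl(u-v)\leq\icl(w-v)+\icl(u-w)\) shows that if \(\icl(w-v)\gg K+\log(K)^3\) (with threshold determined by \(\delta-\delta'\)) then \(u-v\in\fcone_\delta(t)\cup\bcone_\delta(t)\); the remaining cluster points have \(\icl(u-v)=O(K+\log(K)^3)\), hence lie in the cube \(v+[-M,M]^d\) up to an \(\icl\)-versus-Euclidean comparison constant absorbed in the choice of \(M\).

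For the counting step, the trunk has \(|\trunk|\geq\icl(x)/(K+3\log(K)^3)\) vertices, all pairwise at \(\icl\)-distance strictly greater than \(K+3\log(K)^3\). For \(\delta'\) small, consecutive cone-points \(v_j,v_{j+1}\) ordered by their \(x/\norm{x}\)-projection (which is monotone along the trunk by the cone structure) satisfy \((x/\norm{x},v_{j+1}-v_j)\geq c(K+3\log(K)^3)\), so any inner slab \(\tilde{S}_i\) of \(x/\norm{x}\)-width \(3M\) contains at most \(O(M/K)\) cone-points. Hence the number of inner slabs containing at least one cone-point is at least \((1-\epsilon)|\trunk|\cdot\Omega(K/M)\geq c'(1-\epsilon)\norm{x}/M\), which, for \(\epsilon\) chosen sufficiently small, strictly exceeds \(\norm{x}/(14M)\). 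Finally, dropping \(\{0\nleftrightarrow g\}\) (only increases the probability), using \(\RCLaw^{\varnothing,\{0,x\}}(A)\leq\treeWeight(A)/\langle\sigma_0\sigma_x\rangle\) for tree-measurable events \(A\), and exploiting \(\langle\sigma_0\sigma_x\rangle\geq\langle\sigma_0\rangle^2>0\) (valid since \(h>0\)), gives the claim with \(\nu\) slightly smaller than \(\nu_0\).

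The main obstacle is the joint calibration of \(K\), \(M\), \(\epsilon\), and \(\delta-\delta'\): \(K\) must be large for Theorem~\ref{thm:CPofTree} to be sharp, \(M\) must dominate \(K+\log(K)^3\) by a factor depending on \(\delta-\delta'\) for the cone-fattening step, and \(\epsilon\) must be small compared to the combinatorial factor arising in the counting. A secondary subtlety is the monotone-projection argument, clean for \(\delta\) below the threshold at which \(\fcone_\delta(t)\) enters the half-space \(\{(x,\cdot)<0\}\); for larger \(\delta\), monotonicity of the good-point property in \(\delta\) recovers the claim directly.
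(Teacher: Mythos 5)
Your proposal follows the same route as the paper: fatten the cones (pass from a $(t,\delta')$-cone-point of the coarse-grained tree to a $(M,\delta)$-good point of $C_{0,x}$), count, invoke Theorem~\ref{thm:CPofTree}, and transfer from $\treeWeight$ to $\RCLaw^{\varnothing,\{0,x\}}$ via $\treeWeight(\cdot)=\langle\sigma_0\sigma_x\rangle\,\RCLaw^{\varnothing,\{0,x\}}(\treeMap\in\cdot)$ and the lower bound $\langle\sigma_0\sigma_x\rangle\geq\langle\sigma_0\rangle^2>0$. The fattening argument and the transfer step are correct and correctly motivated.

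The counting step has a genuine gap as written. You assert that, since each inner slab $\tilde{S}_i$ can contain at most $O(M/K)$ cone-points, the number of non-empty inner slabs is $\geq (1-\epsilon)|\trunk|\cdot\Omega(K/M)$. This pigeonhole only applies to cone-points whose $x/\norm{x}$-projection lies \emph{inside some} $\tilde{S}_i$; the $4M$-wide buffers between consecutive inner slabs occupy a $4/7$-fraction of the range and can also absorb cone-points, so the total number of cone-points is not the right numerator. To repair the count one must first bound the cone-points falling in the buffers (by the same packing bound $O(M/K)$ per buffer times $\norm{x}/(7M)$ buffers), subtract, and only then apply the pigeonhole to the remainder; this works, but only because the packing constant $c$ (such that $(x/\norm{x},v_{j+1}-v_j)\geq cK$) multiplied by $\icl(x)/\norm{x}$ is close to $1$ for $\delta'$ small. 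This is a nontrivial geometric input (it comes from the duality between $t$ and $x$ and the fact that $\fcone_{\delta'}(t)$ shrinks to the ray through $x$ as $\delta'\to 0$), and without it the claim ``$c'(1-\epsilon)\norm{x}/M$ strictly exceeds $\norm{x}/(14M)$ for $\epsilon$ small'' does not follow: if $c'\leq 1/14$, no choice of $\epsilon$ helps. Taking $\epsilon$ small is genuinely not enough; the cone has to be narrow enough. You gesture at this calibration issue in the closing paragraph but do not resolve it, whereas the paper absorbs it silently by quoting the inclusion into $\{|\ConePts_{\delta/2}|\leq(1-\epsilon)\icl(x)/K\}$ ``for some $\epsilon>0$''. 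A cleaner fix, closer in spirit to the paper, is to estimate directly from the side ``few good slabs $\Rightarrow$ few cone-points'', bounding the number of cone-points via the monotone $x/\norm{x}$-projection, the lower spacing bound, and the total range $O(\norm{x})$, rather than trying to lower-bound the number of non-empty inner slabs.

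One minor omission: to conclude from Theorem~\ref{thm:CPofTree} you also need to replace $\icl(x)/K$ by $|\trunk|$, using $|\trunk|\geq\icl(x)/(K+3\log(K)^3)$ and choosing $K$ large so that the $3\log(K)^3/K$ correction is absorbed into the $\epsilon$; you use this inequality implicitly, but the calibration of $\epsilon$ against $K$ should be made explicit to avoid a circularity with the constant-chasing above.
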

\begin{proof}
	First notice that for a given tree, increasing \(\delta\) can only increase the number of \((\delta,t)\)-cone-points. Then, let \(C=C(\delta)\) be such that every \((\delta/2,t)\)-cone-points of \(\tree\) is a \((CK,\delta)\)-good point (such a \(C\) exists as \(C_{0,x}\) is contained in a \(K+\log(K)^3\)-neighbourhood of \(\tree\)).
	\begin{figure}[h]
		\centering
		\includegraphics[scale=1]{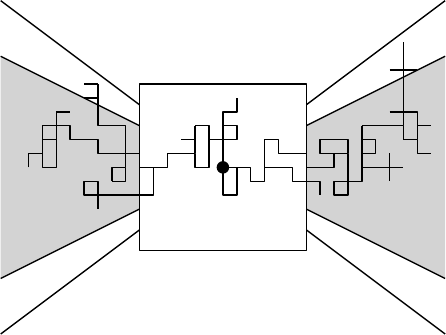}
		\label{fig:PreCP}
		\caption{Good point, \(\fcone_{\delta/2}\) is depicted in grey.}
	\end{figure}
	Let \(M=CK\). \(\#_{\good}^{M,\delta}\) is thus lower bounded by the number of \(\tilde{S}_i\) containing a \(\delta/2\)-cone-point. This implies that
	\begin{equation*}
		\Big\{\#_{\good}^{M,\delta} \leq \frac{\norm{x}}{14M}\Big\}\subset \Big\{|\ConePts_{\delta/2}(\tree)|\leq (1-\epsilon)\frac{\icl(x)}{K}\Big\}
	\end{equation*} for some \(\epsilon>0\). Now, by construction, \(|\trunk|\geq \frac{\icl(x)}{K+3\log(K)^3}\), so \(\frac{\icl(x)}{K}\leq (1+\frac{3\log(K)^3}{K})|\trunk|\). Choosing \(K\) such that \(\frac{3\log(K)^3}{K} \leq \frac{\epsilon}{2-2\epsilon}\),
	\begin{equation*}
	\Big\{|\ConePts_{\delta/2}(\tree)|\leq (1-\epsilon)\frac{\icl(x)}{K}\Big\}\subset \Big\{|\ConePts_{\delta/2}(\tree)|\leq (1-\epsilon/2)|\trunk|\Big\}.
	\end{equation*}
	Thus, Theorem~\ref{thm:CPofTree} gives:
	\begin{equation*}
		e^{\icl(x)}\treeWeight\Big( |\ConePts_{\delta/2}(\tree)|\leq (1-\epsilon/2)|\trunk| \Big)\leq e^{-\nu\norm{x}}.
	\end{equation*} This implies the wanted claim.
\end{proof}

Choose now \(\delta\) in such a way that \(\fcone_{\delta}(t)\) contains a unit coordinate vector. Let \(M\) be given by Lemma~\ref{lem:ClusterCPlem1} for \(\delta/2\). In particular, up to exponentially small error, half of the slabs are \((M,\delta/2)\)-good ones. Now, let \(I\) be a set of integers. If \(S_i, i\in I\) are good slabs, we describe a local surgery creating a \(\delta\)-cone-point in \(\tilde{S}_i\) for every \( i\in I\). We work with \(\bar{\bfn},\bar{\bfm}\) (the \(\{0,1,2\}\)-valued configurations giving the parity of the current) as they encapsulate all needed information.

\begin{itemize}
	\item For every \(i\in I\), let \(z_i\) be the smallest \((M,\delta/2)\)-good point in \(\tilde{S}_i\) (for the lexicographical order). Denote \(D=\bigcap_{i\in I}(z_i+\bcone_{\delta}\cup\fcone_{\delta})\).
	\item Set \(C=C_{0,x}(\bar{\bfn} + \bar{\bfm})\cap \Big(\bigcup_{i\in I}(z_i+[-M,M]^d)\Big)^c\). Denote \(C_{\bfn}\), resp. \(C_{\bfm}\), the open edges of \(C\) in \(\bar{\bfn}\), resp. \(\bar{\bfm}\).
	\item For all \(i\in I\), set all edges with an endpoint in \(z_i+[-M,M]^d\) that have no endpoint in \(C\) to \(0\) in both \(\bar{\bfn}\) and \(\bar{\bfm}\).
	\item For every source \(v\) created by the previous operation that is not a vertex of \(C\), adjust the value of \(\bar{\bfn}_{vg}\) and \(\bar{\bfm}_{vg}\) so that \(v\) is not a source any more (notice that this operation can not create a source at \(g\) as the number of sources not belonging to \(C\) created by the closure of edges is even).
	\item For each \(i\in I\), let \(B_i^{\blacktriangleleft}=(z_i+\fcone) \cap D \cap (z_i+[-M,M]^d) \) and \(B_i^{\blacktriangleright}=(z_i+\bcone) \cap D \cap (z_i+[-M,M]^d) \). Notice that \(B_i^{\blacktriangleleft}\cap B_i^{\blacktriangleright}=\{z_i\}\). Denote \(W_i^{\blacktriangleleft}\) the set of vertices in \(B_{i}^{\blacktriangleleft}\) that are connected to \(C\) (and define similarly \(W_i^{\blacktriangleright}\)). The graphs induced by \(\bfJ\) and \(B_i^{\blacktriangleleft},B_i^{\blacktriangleright}\) are connected by the assumption \(J_{v,v\pm\bfe_k}>0\).
	\item For every \(i\in I\) do: let \(\mathfrak{N}_i^{\blacktriangleleft}\) be the set of sources of \(\bfn\) in \(B_i^{\blacktriangleleft}\) (it has even cardinality). Let \(\bar{\bfa}_i^{\blacktriangleleft}\) be a deterministic parity function on the graph \((B_{i}^{\blacktriangleleft},\bfJ)\) with:
	\begin{itemize}
		\item \(\partial \bar{\bfa}_i^{\blacktriangleleft} = \mathfrak{N}_i^{\blacktriangleleft}\),
		\item \(\bar{\bfa}_i^{\blacktriangleleft}\) contains only one cluster with at least two vertices and \(W_i^{\blacktriangleleft}\) is in that cluster.
	\end{itemize}
	Such a parity function exists by the assumption \(J_{v,v\pm\bfe_k}>0\), Lemma~\ref{lem:sourceGraph} and the insertion tolerance property (Lemma~\ref{lem:RCInsTol}). Set \(\bar{\bfn}\) to be equal to \(\bar{\bfa}_i^{\blacktriangleleft}\) on \((B_{i}^{\blacktriangleleft},\bfJ)\). Do the same for \(\blacktriangleright\).
	\item For every \(i\in I\) do: let \(\mathfrak{M}_i^{\blacktriangleleft}\) be the set of sources of \(\bfm\) in \(B_i^{\blacktriangleleft}\) (it has odd cardinality). Let \(\bar{\bfd}_i^{\blacktriangleleft}\) be a deterministic parity function on the graph \((B_{i}^{\blacktriangleleft},\bfJ)\) with:
	\begin{itemize}
		\item \(\partial \bar{\bfd}_i^{\blacktriangleleft} = \mathfrak{M}_i^{\blacktriangleleft}\cup\{z_i\}\),
		\item \(\bar{\bfd}_i^{\blacktriangleleft}\) contains only one cluster with at least two vertices and \(W_i^{\blacktriangleleft}\) is in that cluster.
	\end{itemize}
	Such a parity function again exists for the same reasons as in the previous point. Set \(\bar{\bfm}\) to be equal to \(\bar{\bfd}_i^{\blacktriangleleft}\) on \((B_{i}^{\blacktriangleleft},\bfJ)\). Do the same for \(\blacktriangleright\).
\end{itemize}
Notice that when opening edges in the last two points, one does not create connexions to the ghost as the opened edges belong to \((\Zd,\bfJ)\) and not to the graph augmented with a ghost. The obtained configurations satisfy \(\partial\bar{\bfn}=\varnothing,\partial\bar{\bfm}=\{0,x\}\), \(0\nleftrightarrow g\) and all \(z_i\) are cone-points. Denote \(Y^I\) the above surgery.

\begin{figure}[h]
	\centering
	\includegraphics[scale=0.8]{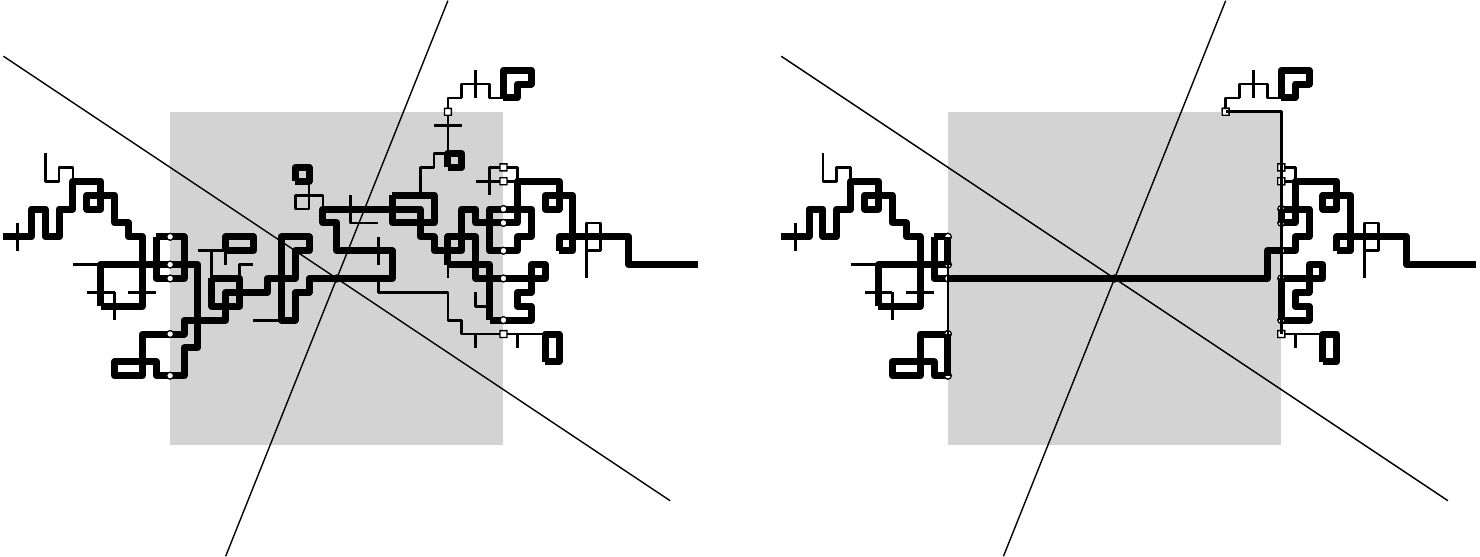}
	\caption{Local surgery of \(C_{0,x}\) in \(\bfm\). Odd edges are represented by fat lines and even ones by thinner ones.}
	\label{fig:PreCPtoCP}
\end{figure}

The next lemma will conclude the proof of Theorem~\ref{thm:CPofCluster}.
\begin{lemma}
	\label{lem:ClusterCPlem2}
	There exist \(\rho>0,\nu>0\) such that
	\begin{equation*}
		e^{\icl(x)}\RCLaw^{\varnothing,\{0,x\}}\Big( |\ConePts_{\delta}(C_{0,x})|\leq \rho\frac{\norm{x}}{14M}, \#_{\good}^{M,\delta/2} \geq \frac{\norm{x}}{14M} , 0\nleftrightarrow g\Big)\leq e^{-\nu\norm{x}}.
	\end{equation*}
\end{lemma}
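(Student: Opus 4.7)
The plan is to convert the local surgery $Y^{I}$ described before the lemma into a quantitative concentration statement. The heart of the argument is a per-box finite-energy lower bound: for each $i$, the conditional probability that the good point $z_i$ is an actual $\delta$-cone-point of $C_{0,x}$ is bounded below by a constant $p_M>0$, uniformly in the configuration outside the box $z_i+[-M,M]^d$. A sequential application of the spatial Markov property then dominates the number of such cone-points by a Binomial$(|I|,p_M)$ variable, and Chernoff's inequality delivers exponential decay for any $\rho<p_M$.

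The first step is a union bound. I fix a candidate set $I\subset\{1,\dots,\lfloor\norm{x}/(7M)\rfloor\}$ with $|I|\geq \norm{x}/(14M)$ and a candidate tuple $(z_i)_{i\in I}$ of lexicographically smallest good points $z_i\in\tilde{S}_i$. The number of joint choices is subexponential in $\norm{x}$ once $K$ (and hence $M$) is chosen large enough, and is therefore absorbed by the target exponential rate. It suffices to bound, uniformly in $I$ and $(z_i)$, the conditional probability that fewer than $\rho|I|$ of the $z_i$ lie in $\ConePts_\delta(C_{0,x})$.

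The core technical step is the per-box finite-energy estimate. Fix $i\in I$ and let $\mathcal{F}_i$ be the $\sigma$-algebra generated by the restriction of $(\bar\bfn,\bar\bfm)$ outside the box $B_i:=z_i+[-M,M]^d$ together with the constraint $0\nleftrightarrow g$ and the previously processed good-slab data. By the spatial Markov property of the random current, the conditional law inside $B_i$ is proportional to the random-current weight of box configurations compatible with the boundary parity data inherited from $\mathcal{F}_i$. The conditional numerator of $\{z_i\in\ConePts_\delta\}$ contains at least the weight of the deterministic configuration $(\bar\bfa_i,\bar\bfd_i)$ produced by $Y^{\{i\}}$, which is bounded below by a positive constant $c_M$ thanks to the finite-range assumption and the hypothesis $J_{0,\pm\bfe_k}>0$. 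The denominator is the box partition function, bounded above by $Z_M<\infty$ depending only on $M$ and $\bfJ$. Hence
\[
\RCLaw^{\varnothing,\{0,x\}}\bigl(z_i\in\ConePts_\delta(C_{0,x})\,\bigm|\,\mathcal{F}_i\bigr)\geq p_M := c_M/Z_M >0.
\]
The crucial subtlety is that the surgery must not create fresh connections to the ghost: this is why the $\bar\bfa_i,\bar\bfd_i$ are taken as parity configurations on $(\Zd,\bfJ)$, with ghost edges used only to neutralise the odd-parity artefacts that arise when closing edges on $\partial^{\edge} B_i$, exactly as in the construction of $Y^{I}$.

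Since the boxes $\{B_i\}_{i\in I}$ are pairwise disjoint (consecutive slabs $\tilde S_i,\tilde S_{i+1}$ are separated by a gap of width $4M$ in the direction $x/\norm{x}$), processing the indices of $I$ in increasing order and applying the per-box bound at each step shows that the indicators $\mathds{1}\{z_i\in\ConePts_\delta\}$ stochastically dominate a sequence of independent Bernoulli$(p_M)$ random variables. Choosing any $\rho<p_M$, the Chernoff bound gives $e^{-\nu|I|}\leq e^{-\nu\norm{x}/(14M)}$ on the conditional event of interest, and after combining with Lemma~\ref{lem:ClusterCPlem1} and multiplying by $e^{\icl(x)}$ one obtains the stated estimate. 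The main obstacle I anticipate is the rigorous verification of the per-box finite-energy bound in the presence of the non-local constraint $0\nleftrightarrow g$: one must check that the spatial Markov property still applies in the form needed and that the surgery interacts with the ghost only through local, parity-correcting modifications confined to $B_i$, so that the event $\{0\nleftrightarrow g\}$ is genuinely preserved.
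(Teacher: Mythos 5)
Your overall strategy (use the surgery \(Y^{I}\) as a finite-energy input and convert ``many good slabs'' into ``many cone-points'') is the same as the paper's, but your probabilistic implementation has two genuine gaps. First, the accounting of the union bound is wrong. The per-box finite-energy constant \(p_M=c_M/Z_M\) you extract is of order \(e^{-cM^d}\) (the surgery forces \(\sim M^d\) edges), so the Chernoff gain is at most of order \(p_M\) \emph{per slab}; meanwhile the union bound over the subsets \(I\) already costs \(\log 2\) per slab, and the union over the candidate positions \(z_i\in\tilde S_i\) costs an additional factor polynomial in \(\norm{x}\) per slab, i.e.\ \(e^{c(\norm{x}/M)\log\norm{x}}\) in total, which is not subexponential. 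The entropy you must pay therefore dwarfs the gain, and the claim that it ``is absorbed by the target exponential rate'' fails. Second, the fallback on which your argument really rests---the sequential conditional bound \(\RCLaw^{\varnothing,\{0,x\}}(z_i\in\ConePts_{\delta}\mid\mathcal F_i)\ge p_M\) and the resulting Binomial domination---is not justified as stated: the points \(z_i\), the boxes \(B_i=z_i+[-M,M]^d\) and the goodness of the slabs are \emph{not} measurable with respect to the configuration outside the boxes (membership in \(C_{0,x}\) of points outside \(B_i\) can depend on connections routed through \(B_i\), and ``\(z_i\) is the minimal good point of \(\tilde S_i\)'' constrains the interior of \(B_i\) itself). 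There is thus no filtration in which your conditioning is legitimate, and a finite-energy lower bound that is uniform over exterior configurations does not survive conditioning on interior-dependent events such as the goodness data or the event ``few cone-points''. The issue you flag (preservation of \(0\nleftrightarrow g\)) is the easy part; the measurability/circularity of the good points is the real obstruction.

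The paper avoids both problems with a many-to-many (multivalued map) argument on full current configurations: it maps the bad set \(A=\{\,|\ConePts_\delta|\le\epsilon\frac{\norm{x}}{14M},\ \#^{M,\delta/2}_{\mathrm{good}}\ge\frac{\norm{x}}{14M}\,\}\) into the set \(B\) of configurations with at least \(\epsilon\frac{\norm{x}}{14M}\) cone-points via \(Y(\bfn,\bfm)=\bigcup_{I}Y^{I}(\bfn,\bfm)\), pays a bounded energy cost \(e^{\tilde c M^d}\) and a reconstruction multiplicity \(e^{c_M+c'M^d}\) per modified slab, and gains the combinatorial factor \(|Y(\bfn,\bfm)|\ge\binom{\norm{x}/14M}{\epsilon\norm{x}/14M}\ge e^{\epsilon\log(1/\epsilon)\norm{x}/14M}\); choosing \(\epsilon<e^{-c_M-(c'+\tilde c)M^d}\) (so \(\rho=\epsilon\) is allowed to be very small, depending on \(M\)) makes the net exponent negative, and no conditional Markov-type statement or union bound over locations is ever needed. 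If you want to keep a domination-style proof you would have to redesign it so that the resampling regions are discovered in an adapted exploration, which is precisely the complication the multivalued map is designed to bypass; as written, your proof does not close.
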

\begin{proof}
	We work in finite volume and will take limits at the end of the proof. Let \(\epsilon\) be small (to be chosen later). First notice that,
	\begin{equation*}
		\RCLaw^{\varnothing,\{0,x\}}_{\Lambda_g}\big( \bfn,\bfm \big)\leq e^{ \tilde{c}M^d|I|}\RCLaw^{\varnothing,\{0,x\}}_{\Lambda_g}\big( Y^I(\bfn,\bfm) \big),
	\end{equation*}
	(where \(Y^I\) is the surgery defined previously). Then, remark that when transforming one of the \(z_i\) into a cone-point, one might actually create more than one cone-point but no more than \(C_{M}=C_M(M,\delta,d)>0\) as the created cone-points have to be \(\delta/2\)-seen by \(z_i\) and have to be \(\delta\)-blocked by \(z_i+[-M,M]^d\) (see Lemma~\ref{lem:createdCPvolume}). Now, let \((\bfn,\bfm)\) be a configuration with at least \( \frac{\norm{x}}{14M}\) \(\delta/2\)-good slabs and with \(C_{0,x}\) containing less than \(\epsilon\frac{\norm{x}}{14M}\) cone-points. We do a many-to-many argument:
	\begin{itemize}
		\item Denote \(G\) the set of good slabs of \((\bfn,\bfm)\).
		\item Define \[Y(\bfn,\bfm)=\bigcup_{\substack{I\subset G: |I|=\epsilon\frac{\norm{x}}{14M}}} Y^I(\bfn,\bfm).\]
		\item Denote \(Y^{-1}(\bfn',\bfm')=\big\{ (\bfn,\bfm): (\bfn',\bfm')\in Y(\bfn,\bfm)  \big\}\).
	\end{itemize}
	\(Y\) is a multivalued map from the set of configuration with less than \(\epsilon\frac{\norm{x}}{14M}\) \(\delta\)-cone-points and at least \( \frac{\norm{x}}{14M}\) \(\delta/2\)-good slabs, denoted \(A\), to the set of configuration with at least \(\epsilon\frac{\norm{x}}{14M}\) \(\delta\)-cone-points, denoted \(B\) (the event \(0\nleftrightarrow g\) is implicit here). Then,
	\begin{align*}
		&\RCLaw^{\varnothing,\{0,x\}}\Big( |\ConePts(C_{0,x})|\leq \epsilon\frac{\norm{x}}{14M}, \#_{\good}^{M,\delta} \geq \frac{\norm{x}}{14M} , 0\nleftrightarrow g\Big) = \\
		\ &=\sum_{(\bfn,\bfm)\in A} \frac{1}{|Y(\bfn,\bfm)|} \sum_{(\bfn',\bfm')\in Y(\bfn,\bfm)} \RCLaw^{\varnothing,\{0,x\}}\big( \bfn,\bfm\big).
	\end{align*}
	\(|Y(\bfn,\bfm)|\) is the number of choices of \(I\subset G\) with cardinality \(\epsilon\frac{\norm{x}}{14M}\), thus
	\begin{equation*}
		|Y(\bfn,\bfm)| = \binom{|G|}{\epsilon\frac{\norm{x}}{14M}} \geq \binom{\frac{\norm{x}}{14M}}{\epsilon\frac{\norm{x}}{14M}} \geq e^{\epsilon\log(1/\epsilon) \frac{\norm{x}}{14M}}.
	\end{equation*}
	In the same idea, to obtain \(Y^{-1}(\bfn',\bfm')\), one has to chose the \(\epsilon\frac{\norm{x}}{14M}\) slabs that were modified amongst the at most \(C_M\epsilon\frac{\norm{x}}{14M}\) slabs containing a \(\delta\)-cone-point, and then to look at what where the possible local configurations; the number of edges modified by \(Y^I\) is at most \(C\epsilon\frac{\norm{x}}{14M}M^d\) so for a given choice of modified slabs, one has at most \( \exp{c'\epsilon\frac{\norm{x}}{14M}M^d} \) possible ``pre-images''. So,
	\begin{equation*}
		|Y^{-1}(\bfn',\bfm')| \leq \binom{C_M\epsilon\frac{\norm{x}}{14M}}{\epsilon\frac{\norm{x}}{14M}}e^{c'\epsilon\frac{\norm{x}}{14M}M^d} \leq e^{ (c_M + c' M^d)\epsilon\frac{\norm{x}}{14M} }.
	\end{equation*}
	Putting everything together,
	\begin{align*}
		&\sum_{(\bfn,\bfm)\in A} \frac{1}{|Y(\bfn,\bfm)|} \sum_{(\bfn',\bfm')\in Y(\bfn,\bfm)} \RCLaw^{\varnothing,\{0,x\}}\big( \bfn,\bfm\big) \leq \\
		\ & \leq e^{-\epsilon\log(1/\epsilon) \frac{\norm{x}}{14M}} \sum_{(\bfn,\bfm)\in A} \sum_{(\bfn',\bfm')\in B}\mathds{1}_{(\bfn',\bfm')\in Y(\bfn,\bfm)} e^{\tilde{c}M^d\epsilon\frac{\norm{x}}{14M}}\RCLaw^{\varnothing,\{0,x\}}\big( \bfn',\bfm'\big)\\
		\ & \leq e^{(\tilde{c}M^d-\log(1/\epsilon)) \epsilon\frac{\norm{x}}{14M}}  \sum_{(\bfn',\bfm')\in B}\RCLaw^{\varnothing,\{0,x\}}\big( \bfn',\bfm'\big) \sum_{(\bfn,\bfm)\in A}\mathds{1}_{(\bfn',\bfm')\in Y(\bfn,\bfm)}\\
		\ & \leq e^{(\tilde{c}M^d-\log(1/\epsilon)) \epsilon\frac{\norm{x}}{14M}}  \sum_{(\bfn',\bfm')\in B}\RCLaw^{\varnothing,\{0,x\}}\big( \bfn',\bfm'\big) e^{ (c_M + c' M^d)\epsilon\frac{\norm{x}}{14M} }\\
		\ & = e^{(c_M + c' M^d+\tilde{c}M^d-\log(1/\epsilon)) \epsilon\frac{\norm{x}}{14M}}  \RCLaw^{\varnothing,\{0,x\}}\big( B\big).
	\end{align*}
	Choosing \(\epsilon<\exp{-c_M -(c'+\tilde{c})M^d}\), one gets \[\RCLaw^{\varnothing,\{0,x\}}\big( A\big)\leq e^{-\nu_{\epsilon} \norm{x}} \RCLaw^{\varnothing,\{0,x\}}\big( B\big). \] Multiplying by \(e^{\icl(x)}\) on both sides and using \(B\subset \{0\nleftrightarrow g\}\), one gets the wanted estimate with \(\rho=\epsilon\).
\end{proof}

\subsection{Irreducible Structure}

We are now in position to describe the representation of \(C_{0,x}\) in irreducible pieces.

Fix now (and until the end of Section~\ref{sec:FEandFactorization}) any \(t_0\in\partial\WulffShape\) and choose any \(\delta\in(0,1)\) so that \(\fcone_{\delta}(t_0)\) contains a coordinate axis. We will denote this cone \(\fcone\).

It will be convenient to work with only one probability measure for all \(x\) dual to \(t\) rather than the family \(\RCLaw^{\varnothing,\{0,x\}}\). We can do it as follows: define \(Y\) the map from \(\{\bfm:\ \partial\bfm = \{0,x\}, 0\nleftrightarrow g \text{ in } \bfm \} \) that sets \(\bfm_{0g}=\bfm_{xg}=1\) and does not change anything else. It is a bijection between its domain and its image and we have \(\partial Y(\bfm)=\varnothing\) and \(\weight(Y(\bfm))=h^2 \weight(\bfm)\) whenever we look at a finite graph. Denote \(\Gamma\) the graph obtained from \((\Zd\cup\{g\},\tilde{\bfJ})\) by removing the edges \(\{0,g\}\) and \(\{x,g\}\). Using this (taking the wanted quantity in finite volume and taking limits), we get that for any event \(A\) measurable with respect to the value of the edges in \(\Gamma\):
\begin{equation*}
	\langle \sigma_0\sigma_x \rangle \RCLaw^{\varnothing,\{0,x\}}(0\nleftrightarrow g, A) = h^{-2}\RCLaw^{\varnothing,\varnothing}(0\stackrel{\Gamma}{\nlongleftrightarrow} g, A, \bfm_{0g}=\bfm_{xg}=1).
\end{equation*}

Let \(\gamma\) be an irreducible cluster (we will understand \(\partial^{\edge}\gamma\) as being the edge boundary of \(\gamma\) in \((\Zd,\bfJ)\)), define the event ``\(\gamma\) is an irreducible piece'', \(A(\gamma)\) by:

\begin{enumerate}
	\item \((\bfn+\bfm)_e >0\) for all \(e\in \gamma\),
	\item \((\bfn+\bfm)_e =0\) for all edge \(e=\{v,g\}\) with \(v\in\gamma\setminus\{\bend(\gamma)\}\),
	\item \((\bfn+\bfm)_e =0\) for all \(e\in \partial^{\edge}(\gamma\setminus \{\fend(\gamma),\bend(\gamma)\})\),
	\item \((\bfn+\bfm)_e =0\) for all edge \(e=\{\fend(\gamma),v\}\) not in \(\gamma\) with \(v\notin \fend(\gamma)+\bcone\),
	\item \((\bfn+\bfm)_e =0\) for all edge \(e=\{\bend(\gamma),v\}\) not in \(\gamma\) with \(v\in \bend(\gamma)+\bcone\),
	\item \(\partial\bfn|_{\gamma} = \varnothing\), \(\partial\bfm|_{\gamma} = \{\fend(\gamma),\bend(\gamma)\}\) with \(\bfn|_{\gamma}\) the restriction of \(\bfn\) to the edges of \(\gamma\).
\end{enumerate}

In the same fashion, for \(\gamma_L\) a backward irreducible marked cluster, define \(A(\gamma_L)\) by
\begin{enumerate}
	\item \((\bfn+\bfm)_e >0\) for all \(e\in \gamma_L\),
	\item \((\bfn+\bfm)_e =0\) for all edge \(e=\{v,g\}\) with \(v\in\gamma\setminus\{\bend(\gamma), v^*\}\),
	\item \(\bfm_{v^* g} =1\),
	\item \((\bfn+\bfm)_e =0\) for all \(e\in \partial^{\edge}(\gamma\setminus \{\bend(\gamma)\})\),
	\item \((\bfn+\bfm)_e =0\) for all edge \(e=\{\bend(\gamma),v\}\) not in \(\gamma\) with \(v\in \bend(\gamma)+\bcone\),
	\item \(\partial\bfn|_{\gamma} = \varnothing\), \(\partial\bfm|_{\gamma} = \{v^*,\bend(\gamma)\}\) with \(\bfn|_{\gamma}\) the restriction of \(\bfn\) to the edges of \(\gamma\).
\end{enumerate}

\begin{figure}[h]
\centering
\includegraphics[scale=0.9]{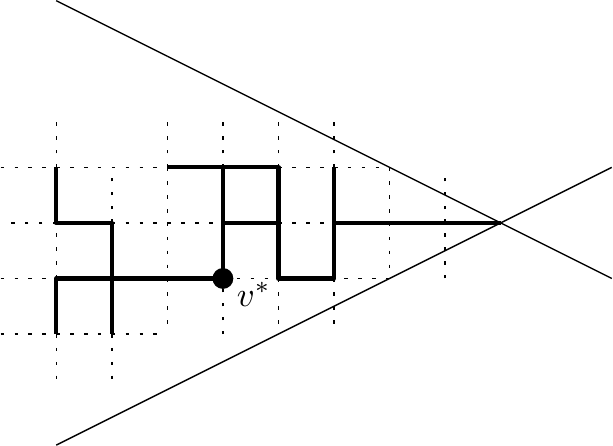}
\caption{The event \(A(\gamma_L)\). Fat edges have to be \(>0\), dotted ones \(=0\). The mark represent an open edge connecting to \(g\). All other edges connecting to \(g\) are closed.}
\label{fig:EventBackIrred}
\end{figure}
And define similarly \(A(\gamma_{R})\) ``\(\gamma_{R}\) is a forward irreducible marked piece''. For a chain of irreducible pieces \(\gamma=\gamma_L\sqcup \gamma_1\sqcup...\sqcup\gamma_{M}\sqcup \gamma_R\) with \(D(\gamma)=x\), \(x\) dual to \(t_0\), one has (with \(\gamma_0=\gamma_L,\gamma_{R}=\gamma_{M+1}\))
\begin{multline}
	e^{\icl(x)} \langle \sigma_0\sigma_x \rangle h^2 \RCLaw^{\varnothing,\{0,x\}} ( C_{0,x}=\gamma, 0\nleftrightarrow g) =\\ = e^{(t_0,x)_d} \RCLaw^{\varnothing,\varnothing} ( A(\gamma_0))\prod_{k=1}^{M+1} \RCLaw^{\varnothing,\varnothing} ( A(\gamma_k)|A(\gamma_0),...A(\gamma_{k-1}))\\
	= \procMem(\gamma_0)\prod_{k=1}^{M+1} \procMem(\gamma_k|\gamma_0,...\gamma_{k-1})
\end{multline}
where \(\procMem(\gamma_k|\gamma_0,...\gamma_{k-1}) = e^{(t_0,D(\gamma_k))_d}\RCLaw^{\varnothing,\varnothing} ( A(\gamma_k)|A(\gamma_0),...A(\gamma_{k-1})) \). 

Notice that Theorem~\ref{thm:CPofCluster} (or more precisely, its proof), implies that \(\procMem(\norm{D(\gamma_k)}\geq l|\gamma_0,...\gamma_{k-1}) \leq e^{-cl}\) for some \(c>0\), uniformly in \(\gamma_0,...\gamma_{k-1}\) as this event implies a long cluster with no cone-point. It also gives that \(M\geq \rho\norm{x}\) for some \(\rho>0\) up to an error of order \(e^{-c' \norm{x}}\). See \cite{Campanino+Ioffe+Velenik-2003} for more details.

\subsection{Factorization of Measure}

We are at the stage where the analysis done in~\cite{Campanino+Ioffe+Velenik-2003,Campanino+Ioffe+Velenik-2008} goes through Ruelle operator analysis. We follow here the different road, paved in~\cite[Section 4]{Ott2018}, that allows a complete factorization of the weights.

We are exactly in the set-up of \cite[Appendix C]{Ott2018} with the alphabet being \(\SetRootDiaIrr\), the boundary conditions being \(\SetRootMarkBackIrr\) and \(\SetRootMarkForwIrr\) and the kernel given by \(\procMem\). Theorem~\ref{thm:mainThm} is thus a direct application of Lemma C.1 and Theorem C.4 of \cite{Ott2018}. We now prove the relevant properties of the process with memory \(\procMem\) in order to be able to apply them, finishing the proof of Theorem~\ref{thm:mainThm}.

\begin{lemma}
	Hypotheses (H1) to (H4) and properties (P1) to (P5) of Appendix C in \cite{Ott2018} are fulfilled for the process \(\procMem\) described in the previous section.
\end{lemma}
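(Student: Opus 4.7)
The plan is to verify each hypothesis and property by combining three ingredients already available: the exponential tail bound on irreducible pieces coming from Theorem~\ref{thm:CPofCluster}, the insertion-tolerance / finite-energy property of the double random current (Lemma~\ref{lem:RCInsTol}) together with the assumption \(J_{0,\pm\bfe_k}>0\), and the exponential mixing of the random current measure (Lemma~\ref{thm:expMixRC}). The fundamental structural fact that makes these hypotheses reasonable is that, by construction, the event \(A(\gamma_k)\) only constrains the configuration inside \(\gamma_k\) and on the edges on its edge boundary; in particular, the interaction with the past \(A(\gamma_0),\dots,A(\gamma_{k-1})\) is mediated solely through the single cone-point \(\fend(\gamma_k)\) and the closed edges emanating from \(\gamma_{k-1}\) lying in the backward cone at its right endpoint.

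For the decay-type hypotheses (uniform exponential tail of the kernel \(\procMem(\cdot\given \gamma_0,\dots,\gamma_{k-1})\)), I would argue exactly as in the observation already recorded after the display: the event \(\{\|D(\gamma_k)\|\geq l\}\) implies the existence of a subcluster of length \(\geq l\) with no intermediate cone-point in the forward cone \(\fcone_{\delta}(t_0)\). Theorem~\ref{thm:CPofCluster} applied to this subcluster yields an \(e^{-cl}\) bound, and because the conditioning fixes only a uniformly bounded region near \(\fend(\gamma_k)\), insertion tolerance allows one to remove the conditioning at the cost of a multiplicative constant independent of the past, giving the required uniform tail. The same argument provides the Cramér-type moment generating function bound on a neighborhood of \(t_0\).

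For the uniform positivity property (strict positivity of the kernel on short irreducible pieces, needed for aperiodicity and for the construction of the probability measures \(\rho_L, \rho_R, \bfp\)), I would exhibit an explicit finite irreducible configuration \(\gamma^{\bullet}\) with \(D(\gamma^{\bullet})\) a coordinate unit vector inside \(\fcone\), and use Lemma~\ref{lem:RCInsTol} together with Lemma~\ref{lem:sourceGraph} (which underlies the surgery of Lemma~\ref{lem:ClusterCPlem2}) to show \(\procMem(\gamma^{\bullet}\given \gamma_0,\dots,\gamma_{k-1}) \geq c_0 > 0\) uniformly in the past. This uses the finite-range hypothesis crucially: the boundary data of the past only affects a bounded region, which can be absorbed in the finite-energy constant.

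The main obstacle, and the final ingredient for the mixing/decoupling hypothesis, is handling the infinite memory: one must show that \(\procMem(\gamma_k\given \gamma_0,\dots,\gamma_{k-1})\) depends on \(\gamma_0,\dots,\gamma_{k-2}\) only through an exponentially small perturbation of its dependence on \(\gamma_{k-1}\). Here the exponential mixing of \(\RCLaw^{\varnothing,\varnothing}\) (Lemma~\ref{thm:expMixRC}) is applied conditionally on the configuration in a neighborhood of \(\fend(\gamma_k)\); because the support of \(A(\gamma_{k-1})\) and of \(A(\gamma_k)\) are separated by the cone-point, the mixing gives a ratio bound of the form \(1 + O(e^{-c'\min_j\|\gamma_j\|})\), summable across the chain. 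Combining these three ingredients verifies (H1)--(H4) and (P1)--(P5); the remaining steps are the bookkeeping-type comparisons of weights already carried out in the analogous Potts/FK setting of~\cite{Campanino+Ioffe+Velenik-2008} and, in the factorized form we need, in~\cite[Appendix C]{Ott2018}.
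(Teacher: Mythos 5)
Your sketch invokes exactly the three ingredients the paper's proof uses: Theorem~\ref{thm:CPofCluster} for the exponential tail of the kernel (covering (P1) and the Cram\'er-type bounds), the finite-energy/insertion-tolerance argument from Lemma~\ref{lem:ClusterCPlem2} (covering (H4), (P3), (P4)), and the exponential ratio mixing of Theorem~\ref{thm:expMixRC} (covering (H2)). So the overall route is the same; the paper's actual proof is just a terse hypothesis-by-hypothesis mapping, whereas yours groups them by mechanism.

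Two small points are worth flagging. First, the opening structural claim that ``the interaction with the past \(A(\gamma_0),\dots,A(\gamma_{k-1})\) is mediated solely through the single cone-point \(\fend(\gamma_k)\)'' is misleading as stated: the events \(A(\gamma_j)\) are local, but the random current measure has no Markov property, so \(\procMem(\gamma_k\given\gamma_0,\dots,\gamma_{k-1})\) genuinely depends on the entire past. You do recognize this when you invoke Theorem~\ref{thm:expMixRC} as the ``final ingredient,'' but a reader should not come away thinking a boundary argument at the cone-point suffices — exponential mixing is doing real work. Second, you folded (H1) (by construction / choice of weights), (H3) (sub-exponential growth of the mass, which the paper gets from normalization of the weights as in~\cite[Section~4]{Ott2018}), (P2) (cone-containment, immediate from the definition of \(A(\gamma_k)\)), and (P5) (symmetry of \(\mu\) inherited from the symmetries of \(\bfJ\)) into an unspecified ``bookkeeping'' remark. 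These are indeed routine, but the paper states them explicitly, and a careful write-up should as well.
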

\begin{proof}
	(H1) is by construction of the process (choice of weights). Exponential ratio mixing, (H2), is Theorem~\ref{thm:expMixRC}, sub-exponential growth of the mass, (H3), is by the normalization of the weight (see the argument in~\cite[Section 4]{Ott2018}). (H4) as well as (P3) and (P4) is the ``finite energy'' used in the proof of Lemma~\ref{lem:ClusterCPlem2}. (P1) is a direct consequence of Theorem~\ref{thm:CPofCluster}, (P2) is the cone-containment property and (P5) is implied by the symmetries of \(\mu\) induced by the symmetries of the interaction \(\bfJ\).
\end{proof}

Applying Lemma C.1 and Theorem C.4 of~\cite{Ott2018} to \(\procMem\), one obtains \(\rho_L,\rho_R\) two finite measures on \(\SetRootMarkBackCont\) and \(\SetRootMarkForwCont\) respectively and a probability measure \(\bfp\) on \(\SetRootDiaCont\) such that:

\begin{itemize}
	\item There exists \(\nu>0\) with
	\begin{equation*}
	\rho_L(\norm{D(\gamma_L)}\geq l) \vee \rho_R(\norm{D(\gamma_R)}\geq l) \vee \bfp(\norm{D(\gamma_1)}\geq l) \leq e^{-\nu l}.
	\end{equation*}
	\item There exists some \(c>0\) such that, for any \(x\) dual to \(t_0\) and \(f\) function of the joint cluster of \(0\) and \(x\),
	\begin{multline*}
	\Big|e^{\icl(x)}\sum_{C} \RCLaw^{\varnothing,\varnothing}(0\stackrel{\Gamma}{\nlongleftrightarrow} g, C_{0,x}|_{\Zd}=C, \bfm_{0g}=\bfm_{xg}=1) f(C) - \\
	-\sum_{\gamma_L}\sum_{\gamma_R}\rho_L(\gamma_L)\rho_{R}(\gamma_R)\sum_{M\geq 0}\sum_{\gamma_1,...,\gamma_M}\prod_{i=1}^{M} \bfp(\gamma_i) \mathds{1}_{D(\gamma)=x}  f(\gamma)\Big|\leq \norm{f}_{\infty} e^{-c\norm{x}},
	\end{multline*}where \(\gamma=\gamma_L\concatenate\gamma_1\concatenate\cdots \concatenate\gamma_M\concatenate\gamma_R\) and the sums are over \(\gamma_L\in\SetRootMarkBackCont\), \(\gamma_R\in \SetRootMarkForwCont\) and \(\gamma_i\in \SetRootDiaCont\) (the boundary words of~\cite{Ott2018} are \(\SetRootMarkBackCont\) and \(\SetRootMarkForwCont\) and the set of words is \(\SetRootDiaCont\)).
\end{itemize}

Choose now \(t\in\partial\WulffShape\) close to \(t_0\). Keep the same set of irreducible pieces and the same events \(A(\gamma_{*}), *=i,L,R\). We can define a ``new'' kernel:
\begin{equation*}
	\tilde{\procMem}(\gamma_k|\gamma_0,... \gamma_{k-1}) = e^{(t,D(\gamma_k))_d}\RCLaw^{\varnothing,\varnothing} ( A(\gamma_k)|A(\gamma_0),..., A(\gamma_{k-1})) .
\end{equation*}
We have \(\frac{\tilde{\procMem}(\gamma_k|\gamma_0,..., \gamma_{k-1})}{\procMem(\gamma_k|\gamma_0,..., \gamma_{k-1})} = e^{(t-t_0,\displace(\gamma_k))}\) for any sequence of irreducible pieces. As for \(\procMem\), one can construct \(\tilde{\rho}_L,\tilde{\rho}_R,\tilde{\bfp}\) on \(\SetRootMarkBackCont,\SetRootMarkForwCont,\SetRootDiaCont\). Checking the formula for \(\rho_L,\rho_R,\bfp\), \cite[eq. (32)]{Ott2018}, one has that \(\frac{\tilde{\bfp}(\gamma_1)}{\bfp(\gamma_1)} = e^{(t-t_0,\displace(\gamma_1))}\) for any \(\gamma_1\in\SetRootDiaCont\) (and similarly for \(\rho_{L/R}\)).

So, for any \(x\) dual to \(t\) and any function \(f\) of the joint cluster of \(0\) and \(x\),
\begin{multline*}
\Big|e^{(t_0,x)}\sum_{C} \RCLaw^{\varnothing,\varnothing}(0\stackrel{\Gamma}{\nlongleftrightarrow} g, C_{0,x}|_{\Zd}=C, \bfm_{0g}=\bfm_{xg}=1) f(C) - \\
-\sum_{\gamma_L}\sum_{\gamma_R}\rho_L(\gamma_L)\rho_{R}(\gamma_R)\sum_{M\geq 0}\sum_{\gamma_1,...,\gamma_M}\prod_{i=1}^{M} \bfp(\gamma_i) \mathds{1}_{D(\gamma)=x}  f(\gamma)\Big| \\\leq  \norm{f}_{\infty} e^{-c\norm{x}}e^{(t_0-t, x)}.
\end{multline*}
Taking \(\norm{t-t_0}\leq c/2\) we obtain Theorem~\ref{thm:mainThm}.

\section{Tree Energy Extraction (Proof of Lemma~\ref{lem:TreeEnergy})}
\label{sec:TreeEnergyExtrac}

As the proof of Lemma~\ref{lem:TreeEnergy} is quite long, we divide it into several steps: first we approximate the quantity to estimate by local quantities, then we use exponential mixing in the random current to factorize the local quantities and we finish by proving the relevant bounds for the factorized quantities.

\subsection{Approximation by Local Events and Factorization}

Denote \(v_0,... ,v_{|\trunk|-1}\) the vertices of \(\trunk\) in the order of discovery by Algorithm~\ref{alg:mainExtraction} and \(b_1,... ,b_{|\branches|}\) the branches in order of discovery by successive application of Algorithm~\ref{alg:mainExtraction} and Algorithm~\ref{alg:noiseExtraction}. Denote
\begin{gather*}
	\Delta(v_k)= \CGballK(v_k)\setminus\bigcup_{i=0}^{k-1}\CGballKFat(v_i), \quad \bar{\Delta}(v_k)=\bigcup_{z\in\Delta(v_k)}\CGball_{\log( K)^3}(z)\\
	\Delta(b_k)= \CGballK(b_k)\setminus\Big([\trunk]_K\cup\bigcup_{i=0}^{k-1}\CGballKFat(b_i)\Big).
\end{gather*}

\begin{figure}[h]
	\centering
	\includegraphics[scale=0.65]{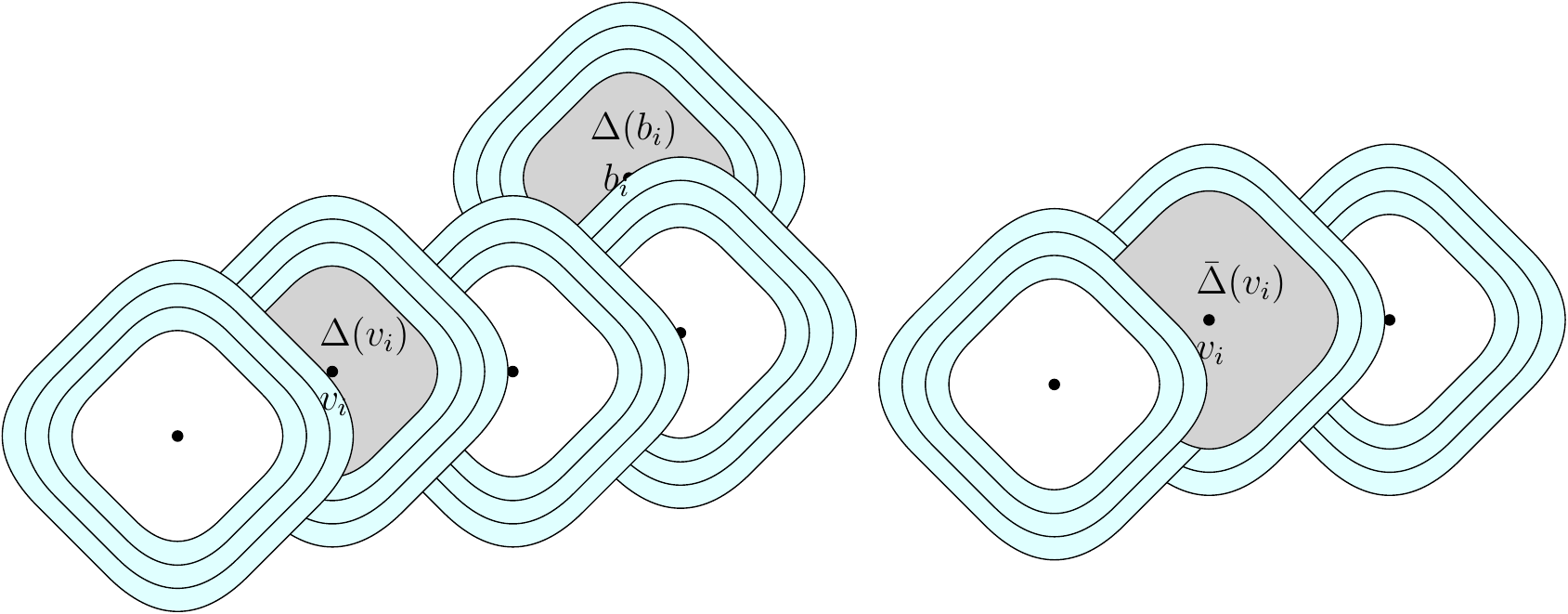}
	\label{fig:LocalNeigh}
\end{figure}

Then define the (local) events:
\begin{gather*}
	F_k= \{v_k\xleftrightarrow{\Delta(v_k)}\partial^{\exterior}\CGballK(v_k) \text{ in } \bfm \}\cap\{ v_k\stackrel{\bar{\Delta}_g(v_k)}{\nlongleftrightarrow} g \text{ in } \bfm+\bfn\},\\
	H_k = \{b_k\xleftrightarrow{\Delta(b_k)}\partial^{\exterior}\Delta(b_k) \text{ in } \bfm+\bfn\}\cap\{ b_k\stackrel{\Delta_g(b_k)}{\nlongleftrightarrow} g \text{ in } \bfm+\bfn \}.
\end{gather*}

Notice that the distance in \((\Zd,\bfJ)\) between the support of any of those event and the support of the others is at least \(c\log(K)^3\) for some \(c>0\) depending on \(d,\bfJ\).

Recalling the weight of a tree~\eqref{eq:treeWeight},
\begin{align}
	\label{eq:factorization}
	\nonumber \treeWeight(\tree) &= \sum_{\substack{\partial\bfn=\varnothing,\partial\bfm=\{0,x\}\\\treeMap(\bfn,\bfm)=\tree}} \frac{1}{\RCPF_{\Lambda_g}(\varnothing)^2}\weight(\bfn)\weight(\bfm)\\ \nonumber
	&= \langle\sigma_0\sigma_x\rangle_{\Lambda} \RCLaw^{\varnothing,\{0,x\}}_{\Lambda_g}\big(\treeMap(\bfn,\bfm)=\tree\big)\\ \nonumber
	&\leq \RCLaw^{\varnothing,\{0,x\}}_{\Lambda_g}\Big( \bigcap_{k=0}^{|\trunk|-1} F_k \cap \bigcap_{k=1}^{|\branches|} H_k \Big)\\
	&\leq \prod_{k=1}^{|\trunk|-1} e^{e^{-\frac{c}{2}\log(K)^3}}\RCLaw^{\varnothing,\varnothing}_{\Lambda_g}\big( F_k \big)\prod_{k=1}^{|\branches|}e^{e^{-\frac{c}{2}\log(K)^3}} \RCLaw^{\varnothing,\varnothing}_{\Lambda_g}\big( H_k \big)
\end{align}
where the last inequality, valid for \(K\) large enough, is exponential mixing (Theorem~\ref{thm:expMixRC}) and Lemma~\ref{lem:sourceToSourceless}.

In order to finish the proof of Lemma~\ref{lem:TreeEnergy}, it remains to show that there exists \(\nu>0\) such that:
\begin{gather*}
	\RCLaw^{\varnothing,\varnothing}_{\Lambda_g}\big( F_k \big) \leq e^{-K(1+o_K(1))},\\
	\RCLaw^{\varnothing,\varnothing}_{\Lambda_g}\big( H_k \big) \leq e^{-\nu K(1+o_K(1))}.
\end{gather*}
This is the object of the next section: the first inequality is Lemma~\ref{lem:trunkLocalEnergy} plus a union bound and the second is Lemma~\ref{lem:brancheLocalEnergy} plus a union bound.

\subsection{Energy Bounds on Local Quantities}

We start by the easiest bound:
\begin{lemma}
	\label{lem:brancheLocalEnergy}
	Let \(\Lambda=[-N,N]^d\) be a large square box in \(\Z^d\). Then, there exist \(\nu>0, K_0\geq 0\) (not depending on \(N\)) such that for any \(K\geq K_0\), any \(\Delta\subset\Lambda\) and any \(u,v\in\Delta\) with \(v\in \partial^{\exterior}\CGballK(u)\),
	\begin{equation*}
	\RCLaw^{\varnothing,\varnothing}_{\Lambda_g}\big( u\xleftrightarrow{\Delta} v, u\stackrel{\Delta_g}{\nlongleftrightarrow} g \big)\leq e^{-\nu K(1+o_K(1))}
	\end{equation*}
\end{lemma}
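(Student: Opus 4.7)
The plan is to combine a geometric lower bound on the size of the cluster of $u$ with an insertion-tolerance (``finite energy'') estimate on the ghost edges, in the same spirit as the proof of Lemma~\ref{lem:PosICL}. First, since $v \in \partial^{\exterior}\CGballK(u) = \partial^{\exterior}(u + K\equiDecSet)$ and $\bfJ$ is finite-range, any lattice path in $(\Zd,\bfJ)$ from $u$ to $v$ uses at least $c_1 K(1+o_K(1))$ edges, where $c_1>0$ depends only on $\icl$ and the range $R$ (essentially the reciprocal of the maximal $\icl$-length of a single $\bfJ$-edge). Consequently, on the event under consideration the connected component $\mathcal{C}_u$ of $u$ in $\{e \in E_\Delta : (\bfn+\bfm)_e > 0\}$ satisfies $|\mathcal{C}_u| \geq c_1 K(1+o_K(1))$.

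Second, I would decompose over this cluster and handle the ghost-edge constraint by insertion tolerance. Writing
\begin{equation*}
\RCLaw^{\varnothing,\varnothing}_{\Lambda_g}\bigl(u\xleftrightarrow{\Delta}v,\ u\stackrel{\Delta_g}{\nleftrightarrow}g\bigr) = \sum_{C\ni u,v}\RCLaw^{\varnothing,\varnothing}_{\Lambda_g}\bigl(\mathcal{C}_u^{\bfn+\bfm,\Delta}=C,\ \bfn_{wg}=\bfm_{wg}=0\ \forall w\in C\bigr),
\end{equation*}
I would then condition on $(\bfn,\bfm)|_{E_\Delta}$ and apply Lemma~\ref{lem:RCInsTol} to each current separately: given the currents inside $\Delta$, the pairs $(\bfn_{wg},\bfm_{wg})_{w\in C}$ are independent across $w$, and the probability that both coordinates vanish at a given $w$ is bounded by $\cosh(h)^{-2}$. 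Taking the product over $w \in C$ yields a conditional bound of $\cosh(h)^{-2|C|}$.

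Combining the two steps, the resulting estimate $\cosh(h)^{-2c_1 K(1+o_K(1))}$ gives the announced bound with $\nu = 2c_1\log\cosh(h) > 0$. The main point that needs care is the conditional decoupling of the ghost edges in the second step: one must verify that after conditioning on $(\bfn,\bfm)|_{E_\Delta}$ the joint law of $\{(\bfn_{wg},\bfm_{wg})\}_{w\in\Delta}$ factorizes into a product over $w$ of per-vertex two-coordinate laws on which Lemma~\ref{lem:RCInsTol} bites. Given the product form of the weights $\weight(\bfn)\weight(\bfm)$ and the fact that the source-free constraint decouples into one parity constraint per vertex, this is a routine verification, and the whole argument reduces to the same kind of calculation as in Lemma~\ref{lem:PosICL}, applied to each of the two currents.
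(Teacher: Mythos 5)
Your proof follows essentially the same route as the paper's (the paper itself gives only a two-sentence sketch): observe that $v\in\partial^{\exterior}\CGballK(u)$ forces the cluster of $u$ (or a path therein) to contain order $K$ vertices, and pay $\cosh(h)^{-1}$ per vertex per current via insertion tolerance to close the ghost edges. The quantitative bound $\nu = 2c_1\log\cosh(h)$ is the right one.

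One technical point in your write-up deserves correction, though. You assert that conditionally on $(\bfn,\bfm)|_{E_\Delta}$ the pairs $\{(\bfn_{wg},\bfm_{wg})\}_{w\in C}$ are independent across $w$. That is not quite true as stated: the parity constraint at a vertex $w\in\partial^{\interior}\Delta$ also involves edges leaving $\Delta$, and the parity constraint at $g$ \emph{a priori} couples all ghost edges. If instead you condition on \emph{all} non-ghost edges of $\Lambda$, then each $\bfn_{wg}$ has a determined parity $p_w$, the constraint at $g$ (that $\sum_w \bfn_{wg}$ be even) becomes $\sum_w p_w\equiv 0$, which is automatic since $\sum_w p_w = \sum_w\sum_{w'\sim w}\bfn_{ww'}\equiv 0 \pmod 2$, and only then do the ghost edges genuinely factorize into independent Poisson-conditioned-on-parity variables. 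Alternatively — and this is what makes the uniformity clause of Lemma~\ref{lem:RCInsTol} so convenient — you can bypass the factorization discussion entirely and iterate the insertion-tolerance bound one ghost edge at a time: since $\RCLaw^{\varnothing}_{\Lambda_g}(\bfn_e=0\given \bfn_f=n_f\ \forall f\neq e)\leq 1/\cosh(h)$ uniformly in $(n_f)$, the chain rule gives $\RCLaw^{\varnothing}_{\Lambda_g}\bigl(\bfn_{wg}=0\ \forall w\in C\bigr)\leq \cosh(h)^{-|C|}$ without any claim of exact independence, and likewise for $\bfm$. Either patch makes your argument fully rigorous and does not change its structure.
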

\begin{proof}
	\(u\xleftrightarrow{\Delta} v\) imply the existence of a path \(\gamma\) going from \(u\) to \(v\) in \(\Delta\). Using that this path contains at least \(cK\) vertices and the insertion tolerance property of Lemma~\ref{lem:RCInsTol}, one gets the wanted estimate.
\end{proof}

The next Lemma is much more technical: one has to extract locally a truncated correlation function which requires more precise control.

\begin{lemma}
	\label{lem:trunkLocalEnergy}
	Let \(\Lambda=[-N,N]^d\) be a large square box in \(\Z^d\). Then, there exists \(K_0\geq 0\) (not depending on \(N\)) such that for any \(K\geq K_0\), for any \(\Delta\subset \bar{\Delta}\subset\Lambda\) with \(d(\Delta,\partial^{\interior}\bar{\Delta})\geq \log(K)^3 \) and any \(u,v\in\Delta\) with \(v\in \partial^{\exterior}\CGballK(u)\),
	\begin{equation*}
	\RCLaw^{\varnothing,\varnothing}_{\Lambda_g}\big( u\stackrel{\Delta}{\leftrightarrow} v \text{ in } \bfm, u\stackrel{\bar{\Delta}_g }{\nlongleftrightarrow} g \text{ in } \bfn+\bfm \big)\leq e^{-K(1+o_K(1))}
	\end{equation*}
\end{lemma}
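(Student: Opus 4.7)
The plan is to isolate a copy of the truncated two-point function $\langle\sigma_u;\sigma_v\rangle$ out of the sourceless double-current probability using the switching lemma, then apply $\langle\sigma_u;\sigma_v\rangle\leq e^{-\icl(v-u)}$ from Theorem~\ref{thm:iclProperties} together with the fact that $v\in\partial^{\exterior}\CGballK(u)$ forces $\icl(v-u)\geq K(1+o_K(1))$.

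Since $\{u\stackrel{\Delta}{\leftrightarrow}v\text{ in }\bfm\}\subset\{u\leftrightarrow v\text{ in }\bfn+\bfm\}=\evenPart_{\{u,v\}}$, I first bound
\[\RCLaw^{\varnothing,\varnothing}_{\Lambda_g}(E)\ \leq\ \RCLaw^{\varnothing,\varnothing}_{\Lambda_g}\big(\evenPart_{\{u,v\}},\ u\nleftrightarrow g\big).\]
Noting that $\{u\leftrightarrow v\}\cap\{u\leftrightarrow g\}=\{u\leftrightarrow g\}\cap\{v\leftrightarrow g\}$ (all three in the same cluster), I decompose the right-hand side as $\RCLaw^{\varnothing,\varnothing}_{\Lambda_g}(u\leftrightarrow v)-\RCLaw^{\varnothing,\varnothing}_{\Lambda_g}(u\leftrightarrow g,\,v\leftrightarrow g)$. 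Two applications of the switching lemma (Lemma~\ref{lem:RCswitchingLemma}) identify these terms: with $A=B=\{u,v\}$, the first equals $\langle\sigma_u\sigma_v\rangle^2_{\Lambda}$; with $A=B=\{u,g\}$ and $F=\mathds{1}_{v\leftrightarrow g}$, the second equals $\langle\sigma_u\rangle^2_{\Lambda}\,\RCLaw^{\{u,g\},\{u,g\}}_{\Lambda_g}(v\leftrightarrow g)$. The key technical ingredient is then the FKG-type lower bound
\[\RCLaw^{\{u,g\},\{u,g\}}_{\Lambda_g}(v\leftrightarrow g)\ \geq\ \langle\sigma_v\rangle^2_{\Lambda},\]
equivalent via the same switching to positive correlation of the increasing events $\{u\leftrightarrow g\}$ and $\{v\leftrightarrow g\}$ in the sourceless double current — a known feature of the Ising random-current representation, most cleanly derived through the coupling with the FK random-cluster model. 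Granting it,
\[\RCLaw^{\varnothing,\varnothing}_{\Lambda_g}(E)\ \leq\ \langle\sigma_u\sigma_v\rangle^2_{\Lambda}-\langle\sigma_u\rangle^2_{\Lambda}\langle\sigma_v\rangle^2_{\Lambda}\ =\ \big(\langle\sigma_u\sigma_v\rangle_{\Lambda}+\langle\sigma_u\rangle_{\Lambda}\langle\sigma_v\rangle_{\Lambda}\big)\langle\sigma_u;\sigma_v\rangle_{\Lambda}\ \leq\ 2\langle\sigma_u;\sigma_v\rangle_{\Lambda}.\]

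It remains to show $\langle\sigma_u;\sigma_v\rangle_{\Lambda}\leq e^{-K(1+o_K(1))}$. The buffer hypothesis $d(\Delta,\partial^{\interior}\bar\Delta)\geq\log(K)^3$ together with exponential mixing (Theorem~\ref{thm:expMixRC}) allows us to transfer the estimate to the infinite-volume truncated correlation at a multiplicative cost of $\exp(O(e^{-c\log(K)^3}))=1+o_K(1)$, negligible against the target. Theorem~\ref{thm:iclProperties} then gives $\langle\sigma_u;\sigma_v\rangle\leq e^{-\icl(v-u)}$, and $v\in\partial^{\exterior}\CGballK(u)=u+\partial^{\exterior}(K\equiDecSet)$ yields $\icl(v-u)\geq K(1+o_K(1))$ directly from the definition of $\equiDecSet$ and the finite-range property of $\bfJ$. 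The main obstacle of the argument is the FKG-type inequality: while conceptually clean and widely used, its quantitative implementation at the level of random currents — ensuring the constants are uniform in $\Lambda$ and stable under the switching identifications above — is what requires real work.
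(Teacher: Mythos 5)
Your first inequality $\RCLaw^{\varnothing,\varnothing}_{\Lambda_g}(E)\leq\RCLaw^{\varnothing,\varnothing}_{\Lambda_g}(\evenPart_{\{u,v\}},\,u\nleftrightarrow g)$ is false, and it is the load-bearing step. The event in the lemma is the \emph{local} one $\{u\stackrel{\bar\Delta_g}{\nlongleftrightarrow}g\}$: no $\bar\Delta_g$-path from $u$ to $g$. You replace it with the \emph{global} $\{u\nleftrightarrow g\}$: no $\Lambda_g$-path. Since $\bar\Delta_g\subset\Lambda_g$, any path inside $\bar\Delta_g$ is also a path in $\Lambda_g$, so the global disconnection event is \emph{contained in} the local one, not the other way around; a configuration can perfectly well satisfy $u\stackrel{\bar\Delta_g}{\nlongleftrightarrow}g$ while $u$ still reaches $g$ through a detour outside $\bar\Delta$. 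The claimed inclusion therefore goes the wrong way and produces a lower bound on $\RCLaw^{\varnothing,\varnothing}_{\Lambda_g}(E)$, not an upper bound. This locality is not cosmetic: it is required so that exponential mixing (Theorem~\ref{thm:expMixRC}) can factorize the events $F_k$ in \eqref{eq:factorization}, and it is precisely what makes the lemma nontrivial. The paper's proof spends essentially all of its length on this point: after splitting according to whether the annulus $\bar\Delta\setminus\Delta$ is crossed by many or few open paths, the sparse case is handled by a many-to-one surgery that closes an outermost open contour $\Gamma$ separating the cluster of $u,v$ from $\partial^{\interior}\bar\Delta$ and $g$, and hooks the sites of $\Gamma^{\exterior}$ to the ghost; in the image configuration, $\mathring\Gamma$ is genuinely disconnected from the rest of the graph and the switching lemma applied inside $\mathring\Gamma$ then produces $\RCPF_{\mathring\gamma}(u,v)\RCPF_{\mathring\gamma}(\varnothing)$, which reassembles into the truncated two-point function after re-attaching the exterior. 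Your route never leaves $\bar\Delta$, so it never manufactures the global disconnection needed for a truncated correlation to appear.

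A secondary issue, which you flag yourself, is the FKG-type inequality $\RCLaw^{\{u,g\},\{u,g\}}_{\Lambda_g}(v\leftrightarrow g)\geq\langle\sigma_v\rangle^2_\Lambda$, equivalently positive association of $\{u\leftrightarrow g\}$ and $\{v\leftrightarrow g\}$ under the sourceless double current. This is not a consequence of GKS or of the switching lemma and would require a separate nontrivial input (monotonicity of connectivity probabilities in the sources, or a Lupu--Werner-type coupling); the paper's contour surgery sidesteps this entirely, so its use here would be a genuine additional hypothesis rather than ``a known feature''.
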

\begin{proof}
	Define:
	\begin{itemize}
		\item \(\calC_g\) to be the cluster (set of sites) of \(g\) in \(\bar{\Delta}_g\),
		\item \(\calC_{\partial}\) the sites of \(\bar{\Delta}\) connected to \(\partial^{\interior}\bar{\Delta}\) in \(\bar{\Delta
		}\) but not connected to \(g\) in \(\bar{\Delta}_g\),
		\item \(\calC_{\free}\) the sites of \(\bar{\Delta}\) that are neither connected to \(\partial^{\interior}\bar{\Delta}\) nor to \(g\) in \(\bar{\Delta}_g\).
	\end{itemize}
	\begin{figure}[h]
		\centering
		\includegraphics[scale=0.9]{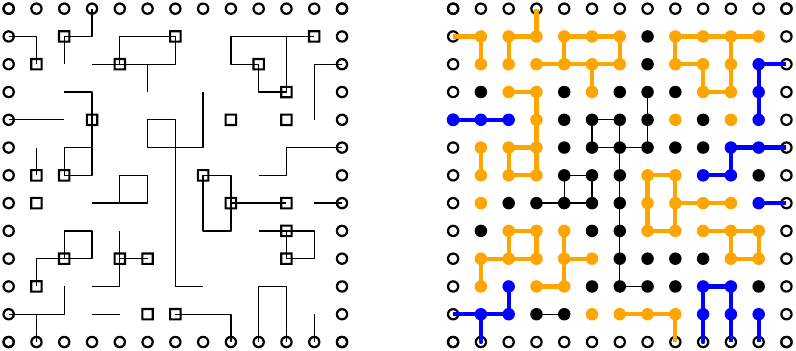}
		\caption{Left: edge configuration, squares represent edges connecting to the ghost, circle are the boundary sites. Right: \(\calC_g\) in orange, \(\calC_{\partial}\) in blue and \(\calC_{\free}\) in black.}
		\label{fig:ClusterType}
	\end{figure}
	One has \(\calC_g\sqcup \calC_{\partial}\sqcup \calC_{\free} = V_{\bar{\Delta}_g}\). At least one of the two following events occurs:
	\begin{enumerate}
		\item \(\calC_{\partial}\) contains at least \(\frac{K}{\log(K)^2}\) disjoint paths from \(\Delta\) to \(\partial^{\interior}\bar{\Delta}\). Call this event \(\calA_1\).
		\item \(\calC_{\partial}\) contains at most \(\frac{K}{\log(K)^2}\) disjoint paths from \(\Delta\) to \(\partial^{\interior}\bar{\Delta}\). Call this event \(\calA_2\).
	\end{enumerate}
	Notice that all events and objects considered here depend only on \(\bar{\bfn},\bar{\bfm}\).
	We then treat those two cases separately.
	\begin{case}
		\begin{multline*}
		\RCLaw_{\Lambda_g}^{\varnothing,\varnothing}\big( u\stackrel{\Delta}{\leftrightarrow} v \text{ in }\bfm,u\stackrel{\bar{\Delta} }{\nlongleftrightarrow} g \text{ in }\bfm+\bfn, \calA_1 \big) \leq\\ \RCLaw_{\Lambda_g}^{\varnothing,\varnothing}\big( \calA_1 \big)
		\leq e^{-c K\log(K)}\leq  e^{-K}
		\end{multline*}
		where the second inequality is finite energy: each of the paths picks up a weight \(e^{-c'}\) per site visited, see the proof of Lemma~\ref{lem:RCInsTol} (\(c\) depends on \(h\) and on the range of the interaction) and the last is \(K\geq K_0(c)\).
	\end{case}
	\begin{case}
		For a current \(\bfn\) with \(u\leftrightarrow v\) and \( u,v\in C_{\free}\), define the outermost closed contour surrounding \(u,v\) to be the contour \(\eta\) defined by:
		\begin{enumerate}
			\item explore \(C_{\partial}\) and \(C_{g}\), this determines \(C_{\free}\) from outside,
			\item let \(C_{u,v}\) be the connected component of \(u,v\) in the graph \((C_{\free},\bfJ)\),
			\item define \(\eta = \partial^{\edge} C_{u,v}\).
		\end{enumerate}
		Notice that \(u\stackrel{\Delta}{\leftrightarrow} v\), \(u\stackrel{\bar{\Delta} }{\nlongleftrightarrow} g \) and \(\calA_2\) implies (using the max-flow min-cut theorem) that there exists a contour \(\gamma\subset E_{\bar{\Delta}_g}\) with:
		\begin{enumerate}[label=(\roman*)]
			\item \(u\xleftrightarrow{\mathring{\gamma}}v,\ g\notin \mathring{\gamma}\),
			\item \(\{e\in\gamma: (\bar{\bfn}+\bar{\bfm})_e>0\}\subset \calC_{\partial}\),
			\item \( |\{e\in\gamma: (\bar{\bfn}+\bar{\bfm})_e>0\}|<\frac{K}{\log(K)^2}\).
		\end{enumerate}
		We will use a many-to-one argument.	Pick an arbitrary order on the set of contours included in \(E_{\bar{\Delta}_g}\) and define the outermost good contour \(\Gamma\) of \(\bar{\bfn}+\bar{\bfm}\) as follows:
		\begin{enumerate}
			\item pick the smallest contour satisfying the three previous conditions,
			\item set all edges of this contour to zero, denote \(\bar{\bfn}^*,\bar{\bfm}^* \) the configurations obtained that way,
			\item let \(\Gamma\) be the outermost closed contour surrounding \(u,v\) in \( \bar{\bfn}^*+\bar{\bfm}^*\).
		\end{enumerate}
		Then define the map \(Y: (\bar{\bfn},\bar{\bfm})\mapsto (\bar{\bfn}',\bar{\bfm}') \) that
		\begin{itemize}
			\item turns the edges of \(\Gamma(\bar{\bfn}+\bar{\bfm})\) to \(0\) in both configurations, call the obtained configurations \(\bar{\bfn}',\bar{\bfm}'\).
			\item for each \(i\in \Gamma^{\exterior}\) do \begin{itemize}
				\item if \(I_i(\bar{\bfn}')\) is odd, change \(\bar{\bfn}'_{ig}=1\) and same for \(\bar{\bfm}'\).
			\end{itemize}
		\end{itemize}
		Notice that, by construction, \(\bar{\bfn}'\) and \(\bar{\bfm}'\) only differ from \(\bar{\bfn}\) and \(\bar{\bfm}\) on \(\Gamma\) and on \(\big\{\{i,g\}\big\}_{i\in\Gamma^{\exterior}}\) and that \( \Gamma\) is the outermost closed contour of \(\bar{\bfn}'+\bar{\bfm}' \). Moreover, all sources of \(\bar{\bfn}'\) and \(\bar{\bfm}'\) are in \(\Gamma^{\interior}\) and the number of edges modified is smaller or equal to \(2\frac{K}{\log(K)^2}\).
		We then use a one-to-many argument: given \((\bar{\bfn},\bar{\bfm})\) admissible,
		\begin{equation*}
		\weight(\bar{\bfn})\weight(\bar{\bfm})\leq e^{cK/\log(K)^2}\weight(\bar{\bfn}')\weight(\bar{\bfm}'),
		\end{equation*}
		as all changes are local and thus have a bounded cost. Then, given \((\bar{\bfn}',\bar{\bfm}')\), one can reconstruct the associated \(\Gamma\) (as it is the outermost closed contour surrounding \(u,v\)). The number of pre-images of \((\bar{\bfn}',\bar{\bfm}')\) is thus bounded by the number of choices for the edges that were open in \((\bar{\bfn},\bar{\bfm})\), which is bounded by
		\begin{align*}
		\binom{cK^d}{\frac{K}{\log(K)^2}}&\leq \big(ecK^{d-1}\log(K)^2\big)^{K/\log(K)^2}\\
		&\leq \big(e^{c\log(K)}\big)^{K/\log(K)^2} = e^{cK/\log(K)}.
		\end{align*}
	\end{case}
	Using those observations,
	\begin{align*}
		&\RCPF_{\Lambda_g}(\varnothing)^2\big\{ \mathds{1}_{u\stackrel{\Delta}{\leftrightarrow} v}(\bar{\bfm})\mathds{1}_{u\stackrel{\bar{\Delta} }{\nlongleftrightarrow} g }(\bar{\bfn}+\bar{\bfm}) \mathds{1}_{\calA_2}(\bar{\bfn}+\bar{\bfm}) \big\} \leq \\
		&\ \leq \sum_{\gamma} \sum_{\substack{\partial\bar{\bfn}=\varnothing\\ \partial\bar{\bfm}=\varnothing}}\sum_{\bar{\bfn}',\bar{\bfm}'} \weight(\bar{\bfn})\weight(\bar{\bfm})\mathds{1}_{u\stackrel{\Delta}{\leftrightarrow} v}(\bar{\bfm}') \mathds{1}_{\Gamma(\bar{\bfn}+\bar{\bfm})=\gamma}\mathds{1}_{Y(\bar{\bfn},\bar{\bfm})=(\bar{\bfn}',\bar{\bfm}')}\\
		&\ \leq \sum_{\gamma} \sum_{\bfn',\bfm'}e^{c K/\log(K)^2}\weight(\bar{\bfn}')\weight(\bar{\bfm}')\mathds{1}_{u\stackrel{\mathring{\gamma}}{\leftrightarrow} v}(\bar{\bfm}') \mathds{1}_{\Gamma(\bar{\bfn}'+\bar{\bfm}')=\gamma}e^{c'K/\log(K)}\\
		&\ \leq e^{cK/\log(K)}\sum_{\gamma}\RCPF_{\Lambda_g\setminus\mathring{\gamma}}(\varnothing)^2\big\{\mathds{1}_{\Gamma=\gamma} \big\} \times \\ &\qquad
		\times \sum_{\substack{A,B\subset\gamma^{\interior}\\|A|,|B|\leq K/\log(K)^2}} \RCPF_{\mathring{\gamma}}(A)\RCPF_{\mathring{\gamma}}(B)\big\{\mathds{1}_{u\leftrightarrow v}(\bfm')\big\},
	\end{align*}
	where in the second line we used that configurations are not modified inside \(\Gamma\) and the last line is the partitioning over the sources of \(\bar{\bfn}'\) and \(\bar{\bfm}'\).
	
	Then, using the Switching Lemma~\ref{lem:RCswitchingLemma},
	\begin{align*}
	\RCPF_{\mathring{\gamma}}(A)\RCPF_{\mathring{\gamma}}(B)\big\{\mathds{1}_{u\leftrightarrow v}(\bfm)\big\} &= \langle\sigma_A\rangle_{\mathring{\gamma}}\RCPF_{\mathring{\gamma}}(\varnothing)\sum_{\partial\bfm=B} \weight(\bfm)\mathds{1}_{u\leftrightarrow v}(\bfm)\\ &\leq\RCPF_{\mathring{\gamma}}(\varnothing)\RCPF_{\mathring{\gamma}}(B)\big\{\mathds{1}_{u\leftrightarrow v}\big\}\\
	&=\RCPF_{\mathring{\gamma}}(u,v)\RCPF_{\mathring{\gamma}}\big(B\Delta\{u,v\}\big)\\
	&\leq \RCPF_{\mathring{\gamma}}(u,v)\RCPF_{\mathring{\gamma}}(\varnothing).
	\end{align*}
	Plugging that in the previous computation yields:
	\begin{align*}
		... \leq &e^{cK/\log(K)}\sum_{\gamma}\RCPF_{\Lambda_g\setminus\mathring{\gamma}}(\varnothing)^2\big\{\mathds{1}_{\Gamma=\gamma} \big\} \RCPF_{\mathring{\gamma}}(u,v)\RCPF_{\mathring{\gamma}}(\varnothing) \Big(\sum_{k=0}^{K/\log(K)^2}\binom{|\partial^{\interior}\gamma|}{k}\Big)^2\\
		&\leq e^{cK/\log(K)}\RCPF_{\Lambda_g}(\varnothing)\RCPF_{\Lambda_g}(u,v)\big\{\mathds{1}_{u\nleftrightarrow g,\partial\bar{\Delta}} \big\}\\
		&\leq e^{cK/\log(K)}\RCPF_{\Lambda_g}(\varnothing)\RCPF_{\Lambda_g}(u,v)\big\{\mathds{1}_{u\nleftrightarrow g} \big\}.
	\end{align*}
	Dividing everything by \(\RCPF_{\Lambda_g}(\varnothing)^2 \) one gets:
	\begin{align*}
		\RCLaw_{\Lambda_g}^{\varnothing,\varnothing}\big( u\stackrel{\Delta}{\leftrightarrow} v \text{ in } \bfm, u\stackrel{\bar{\Delta} }{\nlongleftrightarrow} g \text{ in } \bfn+\bfm, \calA_2 \big)&\leq e^{cK/\log(K)}\langle\sigma_u;\sigma_v\rangle_{\Lambda,h}\\
		&\leq e^{-K(1-c/\log(K))}
	\end{align*}
	once \(\Lambda\) is large enough, since \(v\in \partial^{\exterior}\CGballK(u)\).
\end{proof}

\section{Acknowledgments}
The author thanks Franco Severo for showing him the argument used in the proof of Theorem~\ref{thm:expMixRC}, Yvan Velenik for various comments and corrections on previous drafts of this paper and the anonymous referees for comments that helped improving the overall presentation. The author gratefully acknowledge the support of the Swiss National Science Foundation through the NCCR SwissMAP.

\appendix

\section{A Few Random Current Properties}
We collect here a few properties of the random current together with proofs.

\subsection{Insertion Tolerance}

\begin{lemma}
	\label{lem:RCInsTol}
	For any graph \(G=(V_G,\bfJ)\) and any \(e\in E_G\), uniformly over the values of \(n_f, f\neq e\), one has:
	\begin{equation*}
	\RCLaw^A_{G}(\bfn_e>0 ,\bfn_f=n_f\forall f\neq e) \geq c(\tilde{J}_e)\RCLaw^A_{G}(\bfn_f=n_f\forall f\neq e)
	\end{equation*}
	where \(c(\tilde{J}_e)=\frac{\cosh(\tilde{J}_e)-1}{\cosh(\tilde{J}_e)}\).
\end{lemma}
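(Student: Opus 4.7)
The plan is to reduce the statement to a one-edge computation by conditioning on the values $(n_f)_{f\neq e}$. Fix a configuration $(n_f)_{f\neq e}$ and consider the set of $n\in\N$ such that the configuration $\bfn$ with $\bfn_e=n$ and $\bfn_f=n_f$ for $f\neq e$ satisfies $\partial\bfn=A$. Writing $u,v$ for the endpoints of $e$, the parity constraint at $u$ and $v$ fixes the parity of $\bfn_e$: let $p\in\{0,1\}$ denote this forced parity (with $p$ depending on $A$ and on the fixed values $n_f$), and note that if some vertex $w\notin\{u,v\}$ already fails the source constraint, the conditional probability on the left is simply $0$ and the inequality is trivial. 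So assume $p$ is well defined.

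Then the conditional law of $\bfn_e$ under $\RCLaw^A_G$ is proportional to $\IF{n\equiv p\ \mathrm{mod}\ 2}\frac{\tilde J_e^{\,n}}{n!}$. I would distinguish two cases:

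\textbf{Case $p=1$:} every admissible value of $\bfn_e$ is $\geq 1$, so $\RCLaw^A_G(\bfn_e>0\mid \bfn_f=n_f,\,f\neq e)=1$, which dominates $c(\tilde J_e)\in[0,1)$.

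\textbf{Case $p=0$:} admissible values are $\{0,2,4,\ldots\}$ and
\begin{equation*}
\RCLaw^A_G(\bfn_e>0\mid \bfn_f=n_f,\,f\neq e)=\frac{\sum_{k\geq 1}\tilde J_e^{2k}/(2k)!}{\sum_{k\geq 0}\tilde J_e^{2k}/(2k)!}=\frac{\cosh(\tilde J_e)-1}{\cosh(\tilde J_e)}=c(\tilde J_e).
\end{equation*}

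In both cases the conditional probability is at least $c(\tilde J_e)$, so multiplying by $\RCLaw^A_G(\bfn_f=n_f\,\forall f\neq e)$ gives the claim. There is no real obstacle: the only subtlety is handling the case where the fixed configuration $(n_f)_{f\neq e}$ is itself incompatible with $\partial\bfn=A$ at some vertex other than $u,v$, which makes both sides zero. The bound $c(\tilde J_e)<1$ is sharp and uniform in the conditioning data, which is exactly what is needed for the finite-energy arguments used throughout the paper (e.g.\ Lemma~\ref{lem:brancheLocalEnergy}).
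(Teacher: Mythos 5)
Your proof is correct and follows essentially the same route as the paper: split on the parity of \(\bfn_e\) forced by the conditioning, note that odd parity gives conditional probability \(1\), and compute the even-parity case as a Poisson variable conditioned to be even being positive, which yields exactly \(c(\tilde J_e)=\frac{\cosh(\tilde J_e)-1}{\cosh(\tilde J_e)}\). Your explicit handling of degenerate conditionings (both sides vanishing) is a harmless addition.
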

\begin{proof}
	If the values \(n_f\) implies \(\bfn_e=1\mod{2}\), then \(\RCLaw^A_{G}(\bfn_e>0 \given\bfn_f=n_f\forall f\neq e)=1\) and it is over. Otherwise, \(\RCLaw^A_{G}(\bfn_e>0 \given\bfn_f=n_f\forall f\neq e)\) is the probability for a Poisson random variable of parameter \(\tilde{J}_e\) to be positive conditionally on being even.
\end{proof}

\subsection{Exponential Ratio Mixing when \(h>0\) }

We describe here an adaptation of an argument due to Duminil-Copin~\cite{Duminil-Copin-2018} to obtain exponential mixing under the random current measure with a field (a version of this idea is used in \cite{duminil-copin_exponential_2018}).

We describe the results for a finite weighted graph \(\Lambda\). As we consider random current measures, the support of a local event is a set of edges; to handle distances between supports define, for \(E\subset E_{\Lambda_g}\),
\begin{equation}
	V=V(E) = \bigcup_{e\in E}e\cap V_{\Lambda}
\end{equation}
the set of non-ghost endpoints of edges in \(E\).

\begin{theorem}
	\label{thm:expMixRC}
	There exist \(R\geq 0\) and \(C\geq 0\) such that, for any \(E_1,E_2\) sets of edges, \(V_i=V(E_i) \), any \(A\subset V_{\Lambda}\) and any events \(D\) \(E_1\)-measurable and \(D'\) \(E_2\)-measurable, if \(d_{\Lambda}(A\cup V_1,V_2) >R\), then
	\begin{equation}
	\Big|\log\Big( \frac{\RCLaw_{\Lambda_g}^{A} (D,D')}{\RCLaw_{\Lambda_g}^{A} (D)\RCLaw_{\Lambda_g}^{A} (D')}\Big)\Big|\leq C\cosh(h)^{-d_{\Lambda}(A\cup V_1,V_2)}.
	\end{equation}
\end{theorem}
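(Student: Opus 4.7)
Following the approach communicated by Severo, the plan is to derive the ratio-mixing statement from the one-pair decay $\langle\sigma_u;\sigma_v\rangle\leq\langle\sigma_u\sigma_v\rangle\cosh(h)^{-d_\Lambda(u,v)}$ of Lemma~\ref{lem:PosICL}. The physical picture is that at $h>0$ the ghost vertex is an efficient sink, so the only genuine correlation between events on distant edge sets $E_1,E_2$ must be carried by $\hat\bfn$-clusters connecting $V_1\cup A$ to $V_2$ while avoiding $g$; such ghost-avoiding clusters are exponentially rare in their length via iterated application of the insertion-tolerance estimate of Lemma~\ref{lem:RCInsTol}.

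Set $E_3=E_{\Lambda_g}\setminus(E_1\cup E_2)$, $V_3=V_{\Lambda_g}\setminus(V_1\cup V_2)$, $A_3=A\cap V_3$ and $r=d_\Lambda(A\cup V_1,V_2)$. Decompose $\bfn=(\bfn^{(1)},\bfn^{(2)},\bfn^{(3)})$ and sum over $\bfn^{(3)}$ first: the source constraint $\partial\bfn=A$ forces $\partial\bfn^{(3)}=T_1\cup T_2\cup A_3$ with $T_i=\partial_i\bfn^{(i)}\,\Delta\,(A\cap V_i)\subset V_i$, whence
\begin{equation*}
\RCLaw^A(D\cap D')\,\RCPF_{\Lambda_g}(A)=\!\!\sum_{\bfn^{(1)}\in D,\,\bfn^{(2)}\in D'}\!\!\weight(\bfn^{(1)})\weight(\bfn^{(2)})\,\RCPF^{(3)}(T_1\cup T_2\cup A_3),
\end{equation*}
where $\RCPF^{(3)}$ is the partition function on $(V_{\Lambda_g},E_3)$; analogous identities with one indicator dropped hold for $\RCLaw^A(D)$ and $\RCLaw^A(D')$. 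The log-ratio bound thus reduces to the approximate-factorization estimate
\begin{equation*}
\RCPF^{(3)}(T_1\cup T_2\cup A_3)\,\RCPF^{(3)}(A_3)=\RCPF^{(3)}(T_1\cup A_3)\,\RCPF^{(3)}(T_2\cup A_3)\bigl(1+O(\cosh(h)^{-r})\bigr),
\end{equation*}
uniformly in $T_1\subset V_1$ and $T_2\subset V_2$.

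Reading these partition-function ratios as spin expectations in the Ising measure $\mu^{(3)}$ on $(V_{\Lambda_g},E_3)$ (with $A_3$ absorbed via a gauge transformation of the spin variables), the target becomes the set-to-set truncation bound $\langle\sigma_{T_1};\sigma_{T_2}\rangle^{(3)}\leq C\cosh(h)^{-r}\langle\sigma_{T_1}\rangle^{(3)}\langle\sigma_{T_2}\rangle^{(3)}$. By the switching lemma applied to the double current on $(V_{\Lambda_g},E_3)$ with sources $(T_1\cup T_2,\varnothing)$, this truncation rewrites as $\langle\sigma_{T_1}\sigma_{T_2}\rangle^{(3)}\,\RCLaw^{T_1\cup T_2,\varnothing}_{(3)}(\neg\evenPart_{T_2\cup\{g\}})$; on the latter event, a source-parity argument shows that some $(\hat\bfn+\hat\bfm)$-cluster must avoid $g$ while containing at least one vertex of $T_1$ and one of $T_2$, hence contains a self-avoiding path of length $\geq r$ whose ghost edges are all closed, a joint event whose probability is bounded via iterated application of Lemma~\ref{lem:RCInsTol}.

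The main obstacle is making this probability estimate uniform in $|T_1|,|T_2|$: a union bound over pairs of endpoints introduces a prefactor polynomial in the set sizes. The resolution is to combine a cluster-exploration argument with the standard enumeration bound $\sum_{C\,\ni\,u,\,|C|=\ell}1\leq C_{\mathrm{con}}^\ell$ for connected subgraphs, yielding $\sum_{\ell\geq r}C_{\mathrm{con}}^\ell\cosh(h)^{-\ell}\leq C'\cosh(h)^{-r}$ in the regime $\cosh(h)>C_{\mathrm{con}}$; this regime is secured by choosing $R$ large enough in the theorem statement, depending only on $h,d,\bfJ$. Substituting the resulting estimate back into the decomposition of the second paragraph produces the stated log-ratio bound, with the constant $C$ of the theorem absorbing any remaining polynomial prefactors through the choice of $R$.
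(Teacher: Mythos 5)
Your overall route is essentially the paper's: condition on the currents restricted to $E_1,E_2$, rewrite all four quantities through partition functions on the complementary graph with sources shifted by $\partial\bfn^{(1)},\partial\bfn^{(2)}$, and control the deficit from factorization by a switching-lemma event which, by the source-parity argument, forces a ghost-avoiding $(\hat\bfn+\hat\bfm)$-cluster joining the region of $A\cup V_1$ to $V_2$. (The paper packages the first step as a four-term ratio of probabilities of pairs of configurations rather than your explicit sum over $\bfn^{(3)}$, but the content is the same.) The genuine gap is your final quantitative step. To bound the probability of the ghost-avoiding connection you take a union bound over connected subgraphs, pick up the entropy factor $C_{\mathrm{con}}^{\ell}$, and need $\cosh(h)>C_{\mathrm{con}}$, claiming this regime is ``secured by choosing $R$ large enough''. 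It is not: $C_{\mathrm{con}}$ is a fixed combinatorial constant of $(\Zd,\bfJ)$, while $h>0$ is arbitrary and the theorem is needed for every $h>0$; if $\cosh(h)\leq C_{\mathrm{con}}$ the sum $\sum_{\ell\geq r}C_{\mathrm{con}}^{\ell}\cosh(h)^{-\ell}$ diverges for every $r$, and enlarging $R$ only increases the distance without changing the entropy-versus-cost competition. As written, your argument proves the statement only for large $h$.

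The union bound is unnecessary, and dropping it also removes the uniformity-in-$|T_1|,|T_2|$ issue that motivated it. Argue as in Lemma~\ref{lem:PosICL}: on the bad event fix (say, the minimal) cluster $C$ of $\hat\bfn+\hat\bfm$ meeting both $A\cup V_1$ and $V_2$ with $C\nleftrightarrow g$, and partition according to the realization of $C$; these events are disjoint, so their probabilities sum to at most $1$ and no combinatorial prefactor appears. For each fixed $C$, all ghost edges attached to its at least $d_{\Lambda}(A\cup V_1,V_2)$ vertices carry zero current, and each such closure costs at most $\cosh(h)^{-1}$ (pay it, for instance, in the sourceless current, which is independent of the sourced one). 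This gives the bound $\cosh(h)^{-d_{\Lambda}(A\cup V_1,V_2)}$ uniformly in $T_1,T_2$ and for every $h>0$, which is exactly how the paper concludes. A secondary repair: at $h>0$ there is no gauge transformation absorbing $A_3$ (flipping spins creates negative couplings and breaks the random-current structure); instead keep $A_3$ in the source sets and apply the switching lemma directly to the ratio $\RCPF^{(3)}(T_1\cup T_2\cup A_3)\,\RCPF^{(3)}(A_3)\big/\bigl(\RCPF^{(3)}(T_1\cup A_3)\,\RCPF^{(3)}(T_2\cup A_3)\bigr)$; the same parity argument then yields the ghost-avoiding cluster joining the two regions, and the rest of your reduction goes through unchanged.
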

\begin{proof}
	Fix two disjoint sets of edges \(E_1\) and \(E_2\). Denote \(\tilde{\Lambda}_g\) the graph obtained from \(\Lambda_g\) by removing the edges of \(E_1\cup E_2\). Then the key observation is that for any configurations \( n_1, m_1\in \N^{E_1}\) and \( n_2, m_2\in \N^{E_2}\),
	\begin{equation*}
	\frac{\RCLaw_{\Lambda_g}^{A} ( n_1, n_2)\RCLaw_{\Lambda_g}^{A} ( m_1, m_2)}{\RCLaw_{\Lambda_g}^{A} ( n_1, m_2)\RCLaw_{\Lambda_g}^{A} ( m_1, n_2)} = \frac{\RCPF_{\tilde{\Lambda}_g}(A\Delta\partial n_1\Delta\partial n_2)\RCPF_{\tilde{\Lambda}_g}(A\Delta\partial m_1\Delta\partial m_2)}{\RCPF_{\tilde{\Lambda}_g}(A\Delta\partial n_1\Delta\partial m_2)\RCPF_{\tilde{\Lambda}_g}(A\Delta\partial m_1\Delta\partial n_2)}.
	\end{equation*}
	Then, the RHS can be written
	\begin{multline*}
	\frac{\RCPF_{\tilde{\Lambda}_g}(A\Delta\partial n_1\Delta\partial n_2)\RCPF_{\tilde{\Lambda}_g}(\varnothing)}{\RCPF_{\tilde{\Lambda}_g}(A\Delta\partial n_1)\RCPF_{\tilde{\Lambda}_g}(\partial n_2)} \frac{\RCPF_{\tilde{\Lambda}_g}(A\Delta\partial m_1\Delta\partial m_2)\RCPF_{\tilde{\Lambda}_g}(\varnothing)}{\RCPF_{\tilde{\Lambda}_g}(A\Delta\partial m_1)\RCPF_{\tilde{\Lambda}_g}(\partial m_2)} \times\\
	\times \frac{\RCPF_{\tilde{\Lambda}_g}(A\Delta\partial n_1)\RCPF_{\tilde{\Lambda}_g}(\partial m_2)}{\RCPF_{\tilde{\Lambda}_g}(A\Delta\partial n_1\Delta\partial m_2)\RCPF_{\tilde{\Lambda}_g}(\varnothing)} \frac{\RCPF_{\tilde{\Lambda}_g}(A\Delta\partial m_1)\RCPF_{\tilde{\Lambda}_g}(\partial n_2)}{\RCPF_{\tilde{\Lambda}_g}(A\Delta\partial m_1\Delta\partial n_2)\RCPF_{\tilde{\Lambda}_g}(\varnothing)}.
	\end{multline*}
	Now, using the switching lemma,
	\begin{equation*}
	1-\frac{\RCPF_{\tilde{\Lambda}_g}(A\Delta\partial n_1)\RCPF_{\tilde{\Lambda}_g}(\partial n_2)}{\RCPF_{\tilde{\Lambda}_g}(A\Delta\partial n_1\Delta\partial n_2)\RCPF_{\tilde{\Lambda}_g}(\varnothing)}= \RCLaw_{\tilde{\Lambda}_g}^{A\Delta\partial n_1\Delta\partial n_2,\varnothing}\big( \evenPart_{\partial n_2}^c \big).
	\end{equation*}
	Now, if \((A\Delta\partial n_1)\cap\partial n_2\cap\Lambda\neq\varnothing\), then simply bound the probability by \(1\). Otherwise, \((A\Delta\partial n_1)\cap\partial n_2=\{g\}\) or \(\varnothing\). In both cases, the combination of the sources constraint and \( \evenPart_{\partial n_2}^c\) implies the existence of an edge-self-avoiding path \(\gamma\) going from \((A\Delta\partial n_1)\) to \(\partial n_2\) in the first current and such that \(\gamma\nleftrightarrow g\) in the sum of the two currents. One thus gets,
	\begin{align*}
	\RCLaw_{\tilde{\Lambda}_g}^{A\Delta\partial n_1\Delta\partial n_2,\varnothing}\big( \evenPart_{\partial n_2}^c \big)&\leq \cosh(h)^{-d_{\Lambda}\big((A\Delta\partial n_1)\cap\Lambda,\partial n_2\cap\Lambda\big)}\\
	&\leq \cosh(h)^{-d_{\Lambda}(A\cup V_1,V_2)},
	\end{align*}as \(d_{\Lambda}\big((A\Delta\partial n_1)\cap\Lambda,\partial n_2\cap\Lambda\big)\geq d_{\Lambda}(A\cup V_1,V_2) \), where \(V_i=\bigcup_{e\in E_1}e\cap\Lambda \). Using this, there exist \(C\geq 0\) and \(R\geq 0\) such that
	\begin{equation*}
	\Big|\log\Big(\frac{\RCLaw_{\Lambda_g}^{A} ( n_1, n_2)\RCLaw_{\Lambda_g}^{A} ( m_1, m_2)}{\RCLaw_{\Lambda_g}^{A} ( n_1, m_2)\RCLaw_{\Lambda_g}^{A} ( m_1, n_2)}\Big)\Big|\leq C\cosh(h)^{-d_{\Lambda}(A\cup V_1,V_2)}
	\end{equation*}
	whenever \(d_{\Lambda}(A\cup V_1,V_2)\geq R\). Now, fix \(A\subset\Lambda\), take \(E_1,E_2\) two sets of edges, let \(V_1,V_2\) be defined as before. Suppose \(d_{\Lambda}(A\cup V_1,V_2)= L>R\), then for any two events \(D,D'\) supported on \(E_1,E_2\) respectively,
	\begin{align*}
	\RCLaw_{\Lambda_g}^{A} (D,D') &= \sum_{\substack{ n_1\in \N^{E_1}\\  n_1\in D}}\sum_{\substack{ n_2\in \N^{E_2}\\  n_2\in D'}}\RCLaw_{\Lambda_g}^{A} ( n_1, n_2)\sum_{ m_1\in \N^{E_1}}\sum_{ m_2\in \N^{E_2}}\RCLaw_{\Lambda_g}^{A} ( m_1, m_2)\\
	&\leq \exp{C\cosh(h)^{-L}}\sum_{\substack{ n_1\in D\\ n_2\in D' }}\sum_{ m_1,  m_2 }\RCLaw_{\Lambda_g}^{A} ( n_1, m_2)\RCLaw_{\Lambda_g}^{A} ( m_1, n_2)\\
	&= \exp{C\cosh(h)^{-L}}\RCLaw_{\Lambda_g}^{A} (D)\RCLaw_{\Lambda_g}^{A} (D').
	\end{align*}In the same fashion,
	\begin{equation*}
	\RCLaw_{\Lambda_g}^{A} (D,D') \geq \exp{-C\cosh(h)^{-L}}\RCLaw_{\Lambda_g}^{A} (D)\RCLaw_{\Lambda_g}^{A} (D').
	\end{equation*}
	So,
	\begin{equation*}
	\Big|\log\Big( \frac{\RCLaw_{\Lambda_g}^{A} (D,D')}{\RCLaw_{\Lambda_g}^{A} (D)\RCLaw_{\Lambda_g}^{A} (D')}\Big)\Big|\leq C\cosh(h)^{-L}.
	\end{equation*}
\end{proof}

Using the same technique, one can obtain:
\begin{lemma}
	\label{lem:sourceToSourceless}
	There exist \(R\geq 0\) and \(C\geq 0\) such that, for any \(E\) set of edges, \(V=V(E) \), any \(A\subset V_{\Lambda}\) and any event \(D\) \(E\)-measurable, if \(d_{\Lambda}(A,V) >R\), then
	\begin{equation}
	\Big|\log\Big( \frac{\RCLaw_{\Lambda_g}^{A} (D)}{\RCLaw_{\Lambda_g}^{\varnothing} (D)}\Big)\Big|\leq C\cosh(h)^{-d_{\Lambda}(A,V)}.
	\end{equation}
\end{lemma}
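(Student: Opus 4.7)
The plan is to adapt the proof of Theorem~\ref{thm:expMixRC} by taking one of the two ``observables'' to be trivial. Set $L := d_\Lambda(A, V)$ and denote by $\tilde{\Lambda}_g$ the graph $\Lambda_g$ with the edges of $E$ removed. Define the Radon--Nikodym derivative $g(n) := \RCLaw^A_{\Lambda_g}(\bfn|_E = n) / \RCLaw^\varnothing_{\Lambda_g}(\bfn|_E = n)$ for $n \in \N^E$. Since both $\RCLaw^A$ and $\RCLaw^\varnothing$ are probability measures, the identity $\sum_n g(n) \RCLaw^\varnothing(\bfn|_E = n) = 1$ implies that, if $|\log(g(n)/g(m))| \leq \epsilon$ uniformly in $n,m$, then $|\log g(n)| \leq 2\epsilon$ uniformly; summing over $n \in D$ then transfers the bound to $|\log(\RCLaw^A(D)/\RCLaw^\varnothing(D))| \leq 2\epsilon$. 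It therefore suffices to control the pairwise ratios.

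Plugging the marginal formula $\RCLaw^A(\bfn|_E = n) = \weight(n) \RCPF_{\tilde{\Lambda}_g}(A \Delta \partial n) / \RCPF_{\Lambda_g}(A)$ into $g(n)/g(m)$, the $\RCPF_{\Lambda_g}$ and $\weight(n)$ factors cancel, and two applications of the switching lemma (to the numerator with $X=A\Delta\partial n,\,Y=\partial m$ and to the denominator with $X=A\Delta\partial m,\,Y=\partial n$, both producing the common target $A^* := A \Delta \partial n \Delta \partial m$) give
\[
\frac{g(n)}{g(m)} = \frac{\RCPF_{\tilde{\Lambda}_g}(A \Delta \partial n) \RCPF_{\tilde{\Lambda}_g}(\partial m)}{\RCPF_{\tilde{\Lambda}_g}(A \Delta \partial m) \RCPF_{\tilde{\Lambda}_g}(\partial n)} = \frac{\RCLaw^{A^*, \varnothing}_{\tilde{\Lambda}_g}(\evenPart_{\partial m})}{\RCLaw^{A^*, \varnothing}_{\tilde{\Lambda}_g}(\evenPart_{\partial n})}.
\]
Specializing to the zero configuration $m=0$ (so $\partial m = \varnothing$ and $\RCLaw(\evenPart_\varnothing) = 1$) reduces this to
\[
\frac{g(n)}{g(0)} = \frac{1}{\RCLaw^{A \Delta \partial n, \varnothing}_{\tilde{\Lambda}_g}(\evenPart_{\partial n})}.
\]

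The remaining task is to prove $\RCLaw^{A \Delta \partial n, \varnothing}_{\tilde{\Lambda}_g}(\evenPart_{\partial n}^c) \leq C \cosh(h)^{-L}$, and this is exactly the estimate produced by the argument in the proof of Theorem~\ref{thm:expMixRC}. Indeed, on $\evenPart_{\partial n}^c$ some cluster $C$ of $\bfn+\bfm$ has $|C \cap \partial n|$ odd. The total sources of the sum current are $A \Delta \partial n = A \sqcup \partial n$ (the union is disjoint since $d_\Lambda(A,V) > R$ gives $A \cap V = \varnothing$, and by construction $A \subseteq V_\Lambda$ so $A \cap \{g\} = \varnothing$). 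The even parity of $|C \cap (A \sqcup \partial n)|$ combined with the odd parity of $|C \cap \partial n|$ forces $|C \cap A|$ to be odd, so $C$ contains an edge-self-avoiding path $\gamma$ from $A$ to $V$ in $\bfn + \bfm$ satisfying $\gamma \nleftrightarrow g$ in $\bfn+\bfm$. This path has at least $L$ edges, and each of its vertices $v$ forces $\bfn_{vg} = 0$; the insertion tolerance of Lemma~\ref{lem:RCInsTol} applied iteratively to the ghost edges incident to $\gamma$ in the first current produces the factor $\cosh(h)^{-L}$.

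The main obstacle, inherited from Theorem~\ref{thm:expMixRC}, is rigorously producing the ghost-avoiding path $\gamma$ in every parity scenario, in particular when the ghost vertex itself belongs to $\partial n$ or to $C$. This is resolved by the same case analysis used there, distinguishing $g \in \partial n$ from $g \notin \partial n$ and exploiting the disjointness of $A$ from $V \cup \{g\}$ granted by $d_\Lambda(A,V) > R$; once this is done, the argument reduces to the insertion-tolerance computation described above and completes the proof.
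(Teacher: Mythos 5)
Your proof is correct and follows essentially the same route as the paper's: you pass to the marginal on $E$, reduce the ratio to the same quotient $\RCPF_{\tilde{\Lambda}_g}(A\Delta\partial n)\RCPF_{\tilde{\Lambda}_g}(\partial m)/\RCPF_{\tilde{\Lambda}_g}(A\Delta\partial m)\RCPF_{\tilde{\Lambda}_g}(\partial n)$ that the paper writes down, then apply the switching lemma and the ghost-avoiding path estimate from Theorem~\ref{thm:expMixRC}. The only cosmetic difference is that you specialize the comparison configuration to $m=0$ (so one of the two switching-lemma probabilities is trivially $1$), which streamlines the algebra but does not change the substance of the argument.
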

\begin{proof}
	As before, let \( n\in\N^{E}\), and let \(\tilde{\Lambda}_g\) be the graph obtained by removing edges in \(E\) from \(\Lambda_g\). Then
	\begin{equation*}
		\RCLaw_{\Lambda_g}^{A}( n) = \sum_{ m\in\N^{E} } \frac{\RCLaw_{\Lambda_g}^{A}( n)\RCLaw_{\Lambda_g}^{\varnothing}( m)}{\RCLaw_{\Lambda_g}^{\varnothing}( n)\RCLaw_{\Lambda_g}^{A}( m)}\RCLaw_{\Lambda_g}^{\varnothing}( n)\RCLaw_{\Lambda_g}^{A}( m).
	\end{equation*}
	So, one just need to control the fraction term:
	\begin{equation*}
		\frac{\RCLaw_{\Lambda_g}^{A}( n)\RCLaw_{\Lambda_g}^{\varnothing}( m)}{\RCLaw_{\Lambda_g}^{A}( m)\RCLaw_{\Lambda_g}^{\varnothing}( n)} = \frac{\RCPF_{\tilde{\Lambda}_g}(A\Delta\partial n)\RCPF_{\tilde{\Lambda}_g}(\partial m)}{\RCPF_{\tilde{\Lambda}_g}(A\Delta\partial m)\RCPF_{\tilde{\Lambda}_g}(\partial n)}.
	\end{equation*}
	Proceeding as in the proof of Theorem~\ref{thm:expMixRC}, one get the wanted estimate.
\end{proof}

\section{Toolbox}

\subsection{A Combinatorial Lemma}

\begin{lemma}
	\label{lem:sourceGraph}
	Let \(G=(V_G,E_G)\) be a finite connected graph. For any \(A\subset V_G\) of even cardinality, there exists \(\omega\subset E_G\) with \(\partial\omega = A\).
\end{lemma}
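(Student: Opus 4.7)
The plan is to proceed by induction on $|A|$, using the connectivity of $G$ to extract a path joining two chosen elements of $A$, and relying on the fact that the boundary operator $\partial: \mathbb{F}_2^{E_G}\to\mathbb{F}_2^{V_G}$ (viewing edge sets and vertex sets as $\mathbb{F}_2$-vector spaces via their indicator functions) is linear. Concretely, $\partial(\omega_1\triangle\omega_2) = \partial\omega_1\triangle\partial\omega_2$, since the parity of the degree of a vertex in a symmetric difference of edge sets is the sum (mod $2$) of the parities of its degrees in each set.

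For the base case $|A|=0$, take $\omega=\varnothing$. For the inductive step, assume the statement for all even subsets of size $<|A|$. Pick any two distinct vertices $u,v\in A$. By connectedness of $G$, there exists a self-avoiding path $P\subset E_G$ joining $u$ to $v$; the endpoints $u,v$ have degree $1$ in $P$ while all other vertices of $P$ have degree $2$, so $\partial P = \{u,v\}$. The set $A' = A\triangle\{u,v\} = A\setminus\{u,v\}$ has even cardinality $|A|-2$, so the induction hypothesis yields $\omega'\subset E_G$ with $\partial\omega'=A'$. Set $\omega = \omega'\triangle P$; then by linearity $\partial\omega = \partial\omega'\triangle\partial P = A'\triangle\{u,v\} = A$, as required.

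A non-inductive reformulation is equally clean and perhaps more transparent: pair up the elements of $A$ arbitrarily as $\{u_1,v_1\},\ldots,\{u_k,v_k\}$ (possible since $|A|$ is even), pick a path $P_i$ from $u_i$ to $v_i$ for each $i$, and take $\omega = P_1\triangle\cdots\triangle P_k$. Again by linearity, $\partial\omega = \bigtriangleup_{i=1}^k\{u_i,v_i\} = A$.

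There is essentially no obstacle here: the only small point to keep in mind is that the paths $P_i$ may share edges, so one must take symmetric differences rather than unions, and exploit the $\mathbb{F}_2$-linearity of $\partial$. Finite connectedness of $G$ is used only to guarantee the existence of the paths.
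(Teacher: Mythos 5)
Your proof is correct and takes essentially the same approach as the paper: induction on $|A|$, removing a pair of sources via a connecting path and using the $\mathbb{F}_2$-linearity of $\partial$ under symmetric difference. The only cosmetic difference is that the paper first passes to a spanning tree (so the connecting path is unique), whereas you work directly in $G$ with any self-avoiding path — both are fine, and your non-inductive pairing reformulation is a clean variant of the same idea.
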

\begin{proof}
	As \(G\) is connected, it admits a spanning tree. So it is sufficient to prove the result for trees. Assume \(G\) is a tree. Let \(A=\{a_1,... ,a_n\}\). We proceed by induction over \(n=|A|\). For \(n=2\), set \(\omega\) to be the (unique) path going from \(a_1\) to \(a_2\). For \(n\) even, suppose one has constructed \(\omega'\) with \(\partial\omega' = \{a_1,... ,a_{n-2}\}\). Let \(\gamma\) be the unique path going from \(a_{n-1}\) to \(a_n\) in \(G\). Set \(\omega = \omega'\Delta \gamma\). As the sources of the symmetric difference is the symmetric difference of the sources, we have \(\partial\omega= \partial\omega'\Delta\partial\gamma=\{a_1,... ,a_{n-2}\}\Delta\{a_{n-1},a_n\}= A\).
\end{proof}

\subsection{A Geometrical Lemma}

For \(\icl\) a norm on \(\R^d\), \(t\in\partial\WulffShape\) (see Subsection~\ref{subsec:DiamondDecomp_and_MainThm}) and \(\delta\in(0,1)\), define cones
\begin{equation*}
\fcone_{\delta}=\big\{ x\in\R^d: (t,x)_d>(1-\delta)\icl(x) \big\},\qquad \bcone_{\delta} = -\fcone_{\delta}(t).
\end{equation*}

Remark that \(\fcone_{\delta}\) are increasing sets in \(\delta\). Let \(A\) be a compact subset of \(\R^d\) and \(x\in \R^d\). We say that \(A\) \(\delta\)\emph{-sees} \(x\) if there exists \(y\in A\) with \(x\in y+\fcone_{\delta}\); we say that \(A\) \(\delta\)\emph{-blocks} \(x\) if \(A\not\subset x+\bcone_{\delta}\) (in other words, \(A\) \(\delta\)-blocks \(x\) if \(x\) does not \(\delta\)-backward-see \(A\)).

\begin{lemma}
	\label{lem:createdCPvolume}
	Let \(A\) be a bounded subset of \(\R^d\). Let \(\delta,\delta'>0\) such that \(\delta+\delta'<1\). Then, the diameter of \[V=\{x\in\R^d: A\ \delta\text{-sees}\ x, A\ (\delta+\delta')\text{-blocks}\ x \}\] is upper bounded by a constant depending only on \(\delta,\delta',A,d\).
\end{lemma}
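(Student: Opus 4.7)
\medskip

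\noindent\textbf{Proof proposal (plan).} The plan is to translate the two geometric conditions defining $V$ into one inequality on $(t,x)$ from below and one from above, and then to use the gap $\delta'$ between the two apertures to force $\icl(x)$ to be bounded in terms of $\sup_{z\in A}\icl(z)$.

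First, recall that by the polarity $\equiDecSet\leftrightarrow\WulffShape$, for every $t\in\partial\WulffShape$ and every $v\in\R^d$ one has $(t,v)_d\leq \icl(v)$, so in particular $|(t,v)_d|\leq \icl(v)$ since $\icl$ is a norm. I will use this together with the triangle inequality for $\icl$ repeatedly. Let $x\in V$. By definition of $\delta$-seeing, there exists $y\in A$ with $x-y\in\fcone_\delta$, that is
\begin{equation*}
(t,x-y)_d > (1-\delta)\icl(x-y) \geq (1-\delta)\bigl(\icl(x)-\icl(y)\bigr).
\end{equation*}
Adding $(t,y)_d$ to both sides and using $(t,y)_d\geq -\icl(y)$ gives
\begin{equation*}
(t,x)_d > (1-\delta)\icl(x) - (2-\delta)\icl(y).
\end{equation*}
Symmetrically, by definition of $(\delta+\delta')$-blocking, there exists $y'\in A$ with $x-y'\notin\fcone_{\delta+\delta'}$, i.e.\ $(t,x-y')_d\leq (1-\delta-\delta')\icl(x-y')\leq (1-\delta-\delta')\bigl(\icl(x)+\icl(y')\bigr)$, and then adding $(t,y')_d\leq \icl(y')$ yields
\begin{equation*}
(t,x)_d \leq (1-\delta-\delta')\icl(x) + (2-\delta-\delta')\icl(y').
\end{equation*}

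Combining the two inequalities, the $(t,x)_d$ terms on both sides cancel and one gets
\begin{equation*}
\delta'\,\icl(x) < (2-\delta)\icl(y) + (2-\delta-\delta')\icl(y') \leq \frac{4}{1}\sup_{z\in A}\icl(z),
\end{equation*}
so
\begin{equation*}
\icl(x) < \frac{4\,\sup_{z\in A}\icl(z)}{\delta'}.
\end{equation*}
Since $A$ is bounded and $\icl$ is a norm on $\R^d$, $\sup_{z\in A}\icl(z)$ is finite and controlled by $\mathrm{diam}(A)$ and $d$; moreover $\icl$ is equivalent to the Euclidean norm on $\R^d$, so the above bound forces $V$ to lie in a ball whose radius depends only on $\delta'$, $A$, and $d$ (the dependence on $\delta$ is absorbed in the trivial bound $2-\delta\leq 2$). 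The diameter of $V$ is thus at most twice this radius.

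The argument is essentially algebraic once the two cone conditions are converted into linear inequalities on $(t,x)_d$; the only mild point to check is that one can genuinely extract the uniform bound $|(t,v)_d|\leq \icl(v)$ for $t\in\partial\WulffShape$, which is exactly the polarity of $\equiDecSet$ and $\WulffShape$ recalled in Section~\ref{subsec:DiamondDecomp_and_MainThm}. I do not anticipate any serious obstacle; the only place where one must be slightly careful is that the dependence of the final constant on $\delta$ itself should not blow up as $\delta\to 0$, and indeed the estimate above depends only on $\delta'$ (and the diameter of $A$ in the norm $\icl$).
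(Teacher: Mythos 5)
Your proof is correct and takes a genuinely different, more quantitative route than the paper's. The paper first makes two qualitative observations about how $\delta$-seeing and $\delta$-blocking propagate under translation by elements of $\fcone_\delta$ (using that $\fcone_\delta$ is a convex cone), then picks auxiliary points $a,b$ with $A\subset a+\fcone_\delta$ and $A\subset b+\bcone_{\delta+\delta'}$, deduces $V\subset(a+\fcone_\delta)\cap(b+\fcone_{\delta+\delta'})^c$, and finally invokes -- without explicit computation -- the boundedness of this cone-minus-wider-cone region as a ``final observation''. You skip this geometric reduction entirely: you unwind the two cone conditions into a lower and an upper linear inequality on $(t,x)_d$ in terms of $\icl(x)$ and $\sup_{z\in A}\icl(z)$, and the strict gap $\delta'$ between the two apertures cancels the leading terms and gives the explicit bound $\icl(x)<4\,\sup_{z\in A}\icl(z)/\delta'$. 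Both arguments rely on exactly the same phenomenon (the aperture gap $\delta'>0$), but yours is self-contained where the paper leaves the concluding boundedness step implicit; if one unpacked the paper's final observation, the algebra would look very much like yours. One small remark: your explicit bound depends on the location of $A$ relative to the origin (through $\sup_{z\in A}\icl(z)$) rather than only on $\mathrm{diam}(A)$; since the statement allows dependence on $A$ this is fine, and translating so that a fixed point of $A$ sits at the origin would in any case recover a diameter-only bound. The algebraic steps (triangle inequality for $\icl$, the polarity bound $|(t,v)_d|\leq\icl(v)$ for $t\in\partial\WulffShape$, and the positivity of $1-\delta$ and $1-\delta-\delta'$ used to preserve the inequalities) all check out.
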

\begin{proof}
	As the only parameters of our problem are \(\delta,\delta',A,d\), one only need to show that \(V\) is bounded. The first observation is that if \(x\) \(\delta\)-sees \(A\), then \(x\) \(\delta\)-sees \(y\) for any \(y\) \(\delta\)-seen by \(A\); indeed, if \(y\in x+\fcone_{\delta}\) then \((y+\fcone_{\delta} )\subset (x+\fcone_{\delta})\). The second observation is that if \(x\) is not \(\delta\)-blocked by \(A\), then so are all \(y\in x+\fcone_{\delta}\). Now, as \(A\) is bounded, there exist \(a,b\in\R^d\) such that
	\begin{itemize}
		\item \(a\) \(\delta\)-sees \(A\),
		\item \(b\) is not \((\delta+\delta')\)-blocked by \(A\).
	\end{itemize}
	The two observations made before imply that \(V\) is a subset of \((a+\fcone_{\delta}) \cap (b+\fcone_{\delta+\delta'})^c\). The final observation is that, as \(\delta'>0\), the previous set is bounded. This implies the Lemma.
\end{proof}

\bibliographystyle{plain}
\bibliography{OZIsingPosh}

\end{document}